\title{Parameter Estimation in Epidemic Spread Networks Using Limited Measurements\thanks{
{Research supported in part by the National Science Foundation, grants NSF-CMMI 1635014 and NSF-ECCS 2032258.}}}
\author{%
Lintao Ye\thanks{Department of Electrical Engineering, University of Notre Dame, Notre Dame, IN, USA (lye2@nd.edu)}
\and
Philip E. Par{\'e}\thanks{School of Electrical and Computer Engineering, Purdue University, West Lafayette, IN, USA
  (\{philpare,sundara2\}@purdue.edu)}
\and
Shreyas Sundaram\footnotemark[3]}
\newtheorem{theorem}{Theorem}[section]
\newtheorem{problem}[theorem]{Problem}
\newtheorem{lemma}[theorem]{Lemma}
\newtheorem{remark}[theorem]{Remark}
\newtheorem{definition}[theorem]{Definition}
\newtheorem{proposition}[theorem]{Proposition}
\newtheorem{assumption}[theorem]{Assumption}
\newtheorem{fact}{Fact}
	\let\Cref\crtCref
	\let\cref\crtcref
\algnewcommand{\IfThenElse}[3]{%
  \State \algorithmicif\ #1\ \algorithmicthen\ #2\ \algorithmicelse\ #3}
 \DeclareMathOperator{\Tr}{tr}
\DeclareMathOperator{\rank}{rank}
\begin{document}

\maketitle

\begin{abstract}
We study the problem of estimating the parameters (i.e., infection rate and recovery rate) governing the spread of epidemics in networks.  Such parameters are typically estimated by measuring various characteristics (such as the number of infected and recovered individuals) of the infected populations over time.  However, these measurements also incur certain costs, depending on the population being tested and the times at which the tests are administered.  We thus formulate the epidemic parameter estimation problem as an optimization problem, where the goal is to either minimize the total cost spent on collecting measurements, or to optimize the parameter estimates while remaining within a measurement budget.  We show that these problems are NP-hard to solve in general, and then propose approximation algorithms with performance guarantees.  We validate our algorithms using numerical examples.

\noindent\textbf{Keywords:} Epidemic spread networks, Parameter estimation, Optimization algorithms
\end{abstract}

\section{Introduction}
Models of spreading processes over networks have been widely studied by researchers from different fields (e.g., \cite{kempe2003maximizing,pastor2015epidemic,chakrabarti2008epidemic,ahn2013global,preciado2014optimal,newman2002spread}).  The case of epidemics spreading through networked populations has received a particularly significant amount of attention, especially in light of the ongoing COVID-19 pandemic (e.g., \cite{newman2002spread,nowzari2016analysis}).  A canonical example is the networked SIR model, where each node in the network represents a subpopulation or an individual, and can be in one of three states: susceptible (S), infected (I), or recovered (R)  \cite{mei2017dynamics}.  There are two key parameters that govern such models: the infection rate of a given node, and the recovery rate of that node.  In the case of a novel virus, these parameters may not be known a priori, and must be identified or estimated from gathered data, including for instance the number of infected and recovered individuals in the network at certain points of time. For instance, in the COVID-19 pandemic, when collecting the data on the number of infected individuals or the number of recovered individuals in the network, one possibility is to perform virus or antibody tests on the individuals, with each test incurring a cost. Therefore, in the problem of parameter estimation in epidemic spread networks, it is important and of practical interest to take the costs of collecting the data (i.e., measurements) into account, which leads to the problem formulations considered in this paper. The goal is to exactly identify (when possible)  or estimate the parameters in the networked SIR model using a limited number of measurements. Specifically, we divide our analysis into two scenarios: 1) when the measurements (e.g., the number of infected individuals) can be collected exactly without error; and 2) when only a stochastic measurement can be obtained. 
 
Under the setting when exact measurements of the infected and recovered proportions of the population at certain nodes in the network can be obtained, we formulate the Parameter Identification Measurement Selection (PIMS) problem as minimizing the cost spent on collecting the measurements, while ensuring that the parameters of the SIR model can be uniquely identified (within a certain time interval in the epidemic dynamics). In settings where the measurements are stochastic (thereby precluding exact identification of the parameters), we formulate the Parameter Estimation Measurement Selection (PEMS) problem.  The goal is to optimize certain estimation metrics based on the collected measurements, while satisfying the budget on collecting the measurements.

\subsection*{Related Work}
The authors in \cite{pare2018analysis,vrabac2020overcoming} studied the parameter estimation problem in epidemic spread networks using a "Susceptible-Infected-Susceptible (SIS)" model of epidemics. model. When exact measurements of the infected proportion of the population at each node of the network can be obtained, the authors proposed a sufficient and necessary condition on the set of the collected measurements such that the parameters of the SIS model (i.e., the infection rate and the recovery rate) can be uniquely identified. However, this condition does not pose any constraint on the number of measurements that can be collected. 

In \cite{pezzutto2020smart}, the authors considered a measurement selection problem in the SIR model. Their problem is to perform a limited number of virus tests among the population such that the probability of undetected asymptotic cases is minimized. The transmission of the disease in the SIR model considered in \cite{pezzutto2020smart} is characterized by a Bernoulli random variable which leads to a Hidden Markov Model for the SIR dynamics. 

Finally, our work is also closely related to the sensor placement problem that has been studied for control systems (e.g., \cite{mo2011sensor,ye2018complexity,ye2020resilient}), signal processing (e.g., \cite{chepuri2014sparsity,ye2019sensor}), and machine learning (e.g., \cite{krause2008near}). The goal of these problems is to optimize certain (problem-specific) performance metrics of the estimate based on the measurements of the placed sensors, while satisfying the sensor placement budget constraints. 

\subsection*{Contributions}
First, we show that the PIMS problem is NP-hard, which precludes polynomial-time algorithms for the PIMS problem (if P $\neq$ NP). By exploring structural properties of the PIMS problem, we provide a polynomial-time \textit{approximation} algorithm which returns a solution that is within a certain approximation ratio of the optimal. The approximation ratio depends on the cost structure of the measurements and on the graph structure of the epidemic spread network.  Next, we show that the PEMS problem is also NP-hard. In order to provide a polynomial-time approximation algorithm that solves the PEMS problem with performance guarantees, we first show that the PEMS problem can be transformed into the problem of maximizing a set function subject to a knapsack constraint. We then apply a greedy algorithm to the (transformed) PEMS problem, and provide approximation guarantees for the greedy algorithm. Our analysis for the greedy algorithm also generalizes the results from the literature for maximizing a submodular set function under a knapsack constraint to nonsubmodular settings.  We use numerical examples to validate the obtained performance bounds of the greedy algorithm, and show that the greedy algorithm performs well in practice.

\subsection*{Notation and Terminology}
The sets of integers and real numbers are denoted as $\mathbb{Z}$ and $\mathbb{R}$, respectively. For a set $\mathcal{S}$, let $|\mathcal{S}|$ denote its cardinality. For any $n\in\mathbb{Z}_{\ge1}$, let $[n]\triangleq\{1,2,\dots,n\}$. Let $\mathbf{0}_{m\times n}$ denote a zero matrix of dimension $m\times n$; the subscript is dropped if the dimension can be inferred from the context. For a matrix $P\in\mathbb{R}^{n\times n}$, let $P^{\top}$, $\Tr(P)$ and $\det(P)$ be its transpose, trace and determinant, respectively. The eigenvalues of $P$ are ordered such that $|\lambda_1(P)|\ge\cdots\ge|\lambda_n(P)|$. Let $P_{ij}$ (or $(P)_{ij}$) denote the element in the  $i$th row and $j$th column of $P$, and let $(P)_i$ denote the $i$th row of $P$. A positive semidefinite matrix $P\in\mathbb{R}^{n\times n}$ is denoted by $P\succeq\mathbf{0}$. 

\section{Model of Epidemic Spread Network}
Suppose a disease (or virus) is spreading over a directed graph $\mathcal{G}=\{\mathcal{V},\mathcal{E}\}$, where $\mathcal{V}\triangleq[n]$ is the set of $n$ nodes, and $\mathcal{E}$ is the set of directed edges (and self loops) that captures the interactions among the nodes in $\mathcal{V}$. Here, each node $i\in\mathcal{V}$ is considered to be a group (or population) of individuals (e.g., a city or a country). A directed edge from node $i$ to node $j$, where $i\neq j$, is denoted by $(i,j)$. For all $i\in\mathcal{V}$, denote $\mathcal{N}_i\triangleq\{j:(j,i)\in\mathcal{E}\}$ and $\bar{\mathcal{N}}_i\triangleq\{j:(j,i)\in\mathcal{E}\}\cup\{i\}$. For all $i\in\mathcal{V}$ and for all $k\in\mathbb{Z}_{\ge0}$, we let $s_i[k]$, $x_i[k]$ and $r_i[k]$ represent the proportions of population of node $i\in\mathcal{V}$ that are susceptible, infected and recovered  at time $k$, respectively. To describe the dynamics of the spread of the disease in $\mathcal{G}$, we will use the following discrete-time SIR model (e.g., \cite{hota2020closed}):
\begin{subequations}
\label{eqn:SIR single node}
\begin{align}
&s_i[k+1]=s_i[k]-hs_i[k]\beta\sum_{j\in\bar{\mathcal{N}}_i}a_{ij}x_j[k],\label{eqn:SIR S}\\
&x_i[k+1]=(1-h\delta)x_i[k]+hs_i[k]\beta\sum_{j\in\bar{\mathcal{N}}_i}a_{ij}x_j[k],\label{eqn:SIR I}\\
&r_i[k+1]=r_i[k]+h\delta x_i[k],\label{eqn:SIR R}
\end{align}
\end{subequations}
where $\beta\in\mathbb{R}_{\ge0}$ is the infection rate of the disease, $\delta\in\mathbb{R}_{\ge0}$ is the recovery rate of the disease, $h\in\mathbb{R}_{\ge0}$ is the sampling parameter, and $a_{ij}\in\mathbb{R}_{\ge0}$ is the weight associated with edge $(j,i)$. Let $A\in\mathbb{R}^{n\times n}$ be a weight matrix, where $A_{ij}=a_{ij}$ for all $i,j\in\mathcal{V}$ such that $j\in\bar{\mathcal{N}}_i$, and $A_{ij}=0$ otherwise. Denote $s[k]\triangleq\begin{bmatrix}s_1[k] & \cdots & s_n[k]\end{bmatrix}^T\in\mathbb{R}^n$, $x[k]\triangleq\begin{bmatrix}x_1[k] & \cdots & x_n[k]\end{bmatrix}^T\in\mathbb{R}^n$, and $r[k]\triangleq\begin{bmatrix}r_1[k] & \cdots & r_n[k]\end{bmatrix}^T\in\mathbb{R}^n$, for all $k\in\mathbb{Z}_{\ge0}$. Throughout this paper, we assume that the weight matrix $A\in\mathbb{R}^{n\times n}$ and the sampling parameter $h\in\mathbb{R}_{\ge0}$ are given. 

\section{Preliminaries}
\label{sec:preliminaries}
We make the following assumptions on the initial conditions $s[0]$, $x[0]$ and $r[0]$, and the parameters of the SIR model in Eq.~\eqref{eqn:SIR single node} (e.g., \cite{pare2018analysis,hota2020closed}).

\begin{assumption}
\label{ass:initial condition}
For all $i\in\mathcal{V}$, $s_i[0]\in(0,1]$, $x_i[0]\in[0,1)$, $r_i[0]=0$, and $s_i[0]+x_i[0]=1$. 
\end{assumption}

\begin{assumption}
\label{ass:model parameters}
Assume that $h,\beta,\delta\in\mathbb{R}_{>0}$ with $h\delta<1$. For all $i,j\in\mathcal{V}$ with $(j,i)\in\mathcal{E}$ and $i\neq j$, assume that $a_{ij}\in\mathbb{R}_{>0}$. For all $i\in\mathcal{V}$, $h\beta\sum_{j\in\bar{\mathcal{N}}_i}a_{ij}<1$.
\end{assumption}

\begin{definition}
\label{def:distance}
Consider a directed graph $\mathcal{G}=\{\mathcal{V},\mathcal{E}\}$ with $\mathcal{V}=[n]$. A directed path of length $t$ from node $i_0$ to node $i_t$ in $\mathcal{G}$ is a sequence of $t$ directed edges $(i_0,i_1),\dots,(i_{t-1},i_t)$. For any distinct pair of nodes $i,j\in\mathcal{V}$ such that there exists a directed path from $i$ to $j$, the distance from node $i$ to node $j$, denoted as $d_{ij}$, is defined as the shortest length over all such paths.
\end{definition}

\begin{definition}
\label{def:sets of initial states}
Define $\mathcal{S}_I\triangleq\{i:x_i[0]>0,i\in\mathcal{V}\}$ and $\mathcal{S}_H\triangleq\{i:x_i[0]=0,i\in\mathcal{V}\}$. For all $i\in\mathcal{S}_H$, define $d_i\triangleq\mathop{\min}_{j\in\mathcal{S}_I}d_{ji}$ where $d_i\ge1$, and $d_i\triangleq+\infty$ if there is no path from $j$ to $i$ for any $j\in\mathcal{S}_I$. For all $i\in\mathcal{S}_I$, define $d_i\triangleq0$.
\end{definition}

Using similar arguments to those in \cite{hota2020closed}, one can show that $s_i[k],x_i[k],r_i[k]\in[0,1]$ with $s_i[k]+x_i[k]+r_i[k]=1$ for all $i\in\mathcal{V}$ and for all $k\in\mathbb{Z}_{\ge0}$ under Assumptions~$\ref{ass:initial condition}$-$\ref{ass:model parameters}$. Thus, given $x_i[k]$ and $r_i[k]$, we can obtain $s_i[k]=1-x_i[k]-r_i[k]$ for all $i\in\mathcal{V}$ and for all $k\in\mathbb{Z}_{\ge0}$. We also have the following result that characterizes properties of $x_i[k]$ and $r_i[k]$ in the SIR model over $\mathcal{G}$ given by Eq.~\eqref{eqn:SIR single node}; the proof can be found in Section~$\ref{sec:proof of results about SIR}$ in the Appendix.
\begin{lemma}
\label{lemma:propagate of initial condition}
Consider a directed graph $\mathcal{G}=\{\mathcal{V},\mathcal{E}\}$ with $\mathcal{V}=[n]$ and the SIR dynamics given by Eq.~\eqref{eqn:SIR single node}. Suppose Assumptions~$\ref{ass:initial condition}$-$\ref{ass:model parameters}$ hold. Then, the following results hold for all $i\in\mathcal{V}$, where $k\in\mathbb{Z}_{\ge0}$, and $\mathcal{S}_H$ and $d_i$ are defined in Definition~$\ref{def:sets of initial states}$.\\
$(a)$ $s_i[k]>0$ for all $k\ge0$.\\
$(b)$ If $d_i\neq+\infty$, then $x_i[k]=0$ for all $k<d_i$, and $x_i[k]\in(0,1)$ for all $k\ge d_i$.\footnote{Note that for the case when $d_i=0$, i.e., $i\in\mathcal{S}_I$, part $(b)$ implies $x_i[k]>0$ for all $k\ge0$.}\\
$(c)$ If $d_i\neq+\infty$, then $r_i[k]=0$ for all $k\le d_i$, and $r_i[k]\in(0,1)$ for all $k>d_i$.\\
$(d)$ If $i\in\mathcal{S}_H$ with $d_i=+\infty$, then $x_i[k]=0$ and $r_i[k]=0$ for all $k\ge0$.
\end{lemma}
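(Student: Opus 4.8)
The plan is to prove the four parts essentially by induction on $k$, exploiting the sign structure of the SIR update equations in Eq.~\eqref{eqn:SIR single node} together with the bounds from Assumptions~\ref{ass:initial condition}--\ref{ass:model parameters} and the already-noted invariant $s_i[k]+x_i[k]+r_i[k]=1$ with all three quantities in $[0,1]$.

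For part $(a)$, I would argue that $s_i[k+1]=s_i[k]\bigl(1-h\beta\sum_{j\in\bar{\mathcal{N}}_i}a_{ij}x_j[k]\bigr)$, and since $x_j[k]\le 1$ for all $j$ and $h\beta\sum_{j\in\bar{\mathcal{N}}_i}a_{ij}<1$ by Assumption~\ref{ass:model parameters}, the factor in parentheses is strictly positive; combined with $s_i[0]>0$ from Assumption~\ref{ass:initial condition}, induction gives $s_i[k]>0$ for all $k$. For part $(b)$, I would again induct, tracking the ``infection front'' encoded by $d_i$. If $d_i=0$ (i.e.\ $i\in\mathcal{S}_I$), then $x_i[0]>0$ and the recursion $x_i[k+1]=(1-h\delta)x_i[k]+hs_i[k]\beta\sum_j a_{ij}x_j[k]$ keeps $x_i[k]>0$ because $1-h\delta>0$, $s_i[k]>0$, and the sum is nonnegative. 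If $d_i\ge1$: for $k<d_i$ every in-neighbour $j$ contributing infection either has $d_j\ge k$ (so $x_j[k]=0$ by induction on the front) — more precisely one shows $x_j[k]=0$ whenever $d_j>k$, hence the whole infection term vanishes and $x_i$ stays $0$ up to time $d_i-1$; at $k=d_i$ some in-neighbour $j$ with $d_j=d_i-1$ has $x_j[d_i-1]>0$, so the infection term is strictly positive and $x_i[d_i]>0$; thereafter positivity is preserved as in the $d_i=0$ case. The upper bound $x_i[k]<1$ follows from $s_i[k]>0$ and the invariant $s_i[k]+x_i[k]+r_i[k]=1$.

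Part $(c)$ follows from part $(b)$ via the recursion $r_i[k+1]=r_i[k]+h\delta x_i[k]$: since $r_i[0]=0$ and $x_i[k]=0$ for $k<d_i$, we get $r_i[k]=0$ for all $k\le d_i$; and since $x_i[d_i]>0$ and subsequent $x_i[k]>0$, the sequence $r_i[k]$ is strictly increasing for $k>d_i$, giving $r_i[k]>0$; the upper bound $r_i[k]<1$ again comes from $s_i[k]>0$ and the invariant. Part $(d)$ is the degenerate case: if $d_i=+\infty$ there is no directed path from any initially-infected node to $i$, so by an induction using part $(b)$ applied along in-neighbours (all of whom are also unreachable from $\mathcal{S}_I$, or have $x_j[k]=0$), the infection term into $i$ is always zero, hence $x_i[k]\equiv0$ and then $r_i[k]\equiv0$ from its recursion.

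The main obstacle is making the induction in part $(b)$ rigorous: one needs a clean simultaneous induction statement of the form ``for all $i$ with $d_i>k$, $x_i[k]=0$, and for all $i$ with $d_i\le k$ (and $d_i\neq+\infty$), $x_i[k]>0$,'' and then verify that the update at node $i$ respects it. The delicate point is that $\bar{\mathcal{N}}_i$ includes $i$ itself, so one must check $d_j\ge d_i-1$ for every $j\in\mathcal{N}_i$ (which follows from the definition of $d_i$ as a shortest-path distance and the triangle-type inequality $d_i\le d_j+1$), ensuring no in-neighbour can ``prematurely'' inject infection, while at $k=d_i$ at least one neighbour on a shortest path does inject it. Once this bookkeeping is set up, everything else is a routine sign check.
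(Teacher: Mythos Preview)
Your proposal is correct and follows essentially the same approach as the paper: both argue part~$(a)$ by the identical factorization of $s_i[k+1]$ and induction, and both derive $(b)$--$(d)$ by tracking when the infection term in Eq.~\eqref{eqn:SIR I} first becomes positive and when it stays zero, using $1-h\delta>0$, $s_i[k]>0$, and the recursion for $r_i$. The only difference is organizational: the paper packages the inductive step into three local ``Facts'' (positivity of $x_i$ is preserved; $x_i$ switches from $0$ to positive exactly when some in-neighbour is positive; $r_i$ becomes positive one step after $x_i$ does) and leaves the assembly into $(b)$--$(d)$ implicit, whereas you carry out the global simultaneous induction on $k$ directly, making the shortest-path bookkeeping $d_i\le d_j+1$ explicit.
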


\section{Measurement Selection Problem in Exact Measurement Setting}\label{sec:perfect measurement}
Throughout this section, we assume that $\mathcal{S}_I,\mathcal{S}_H\subseteq\mathcal{V}$ defined in Definition~$\ref{def:sets of initial states}$ are known, i.e., we know the set of nodes in $\mathcal{V}$ that have infected individuals initially. 

\subsection{Problem Formulation}
Given exact measurements of $x_i[k]$ and $r_i[k]$ for a subset of nodes, our goal is to estimate (or uniquely identify, if possible) the unknown parameters $\beta$ and $\delta$. Here, we consider the scenario where collecting the measurement of $x_i[k]$ (resp., $r_i[k]$) at any node $i\in\mathcal{V}$ and at any time step $k\in\mathbb{Z}_{\ge0}$ incurs a cost, denoted as $c_{k,i}\in\mathbb{R}_{\ge0}$ (resp., $b_{k,i}\in\mathbb{R}_{\ge0}$). Moreover, we can only collect the measurements of $x_i[k]$ and $r_i[k]$ for $k\in\{t_1,t_1+1,\dots,t_2\}$, where $t_1,t_2\in\mathbb{Z}_{\ge0}$ are given with $t_2> t_1$. Noting that Lemma~$\ref{lemma:propagate of initial condition}$ provides a (sufficient and necessary) condition under which $x_i[k]=0$ (resp., $r_i[k]=0$) holds, we see that one does not need to collect a measurement of $x_i[k]$ (resp., $r_i[k]$) if $x_i[k]=0$ (resp., $r_i[k]=0$) from Lemma~$\ref{lemma:propagate of initial condition}$. Given time steps $t_1,t_2\in\mathbb{Z}_{\ge0}$ with $t_2> t_1$, we define a set 
\begin{equation}
\label{eqn:all candidate measurements}
\mathcal{I}_{t_1:t_2}\triangleq\{\lambda_i[k]:k\in\{t_1,\dots,t_2\},i\in\mathcal{V},\lambda_i[k]>0,\lambda\in\{x,r\}\},
\end{equation}
which represents the set of all candidate measurements from time step $t_1$ to time step $t_2$. To proceed, we first use  Eq.~\eqref{eqn:SIR I}-\eqref{eqn:SIR R} to obtain
\begin{equation}
\label{eqn:rewrite SIR matrix}
\begin{bmatrix}x[t_1+1]-x[t_1]\\ \vdots\\ x[t_2]-x[t_2-1]\\ r[t_1+1]-r[t_1]\\ \vdots\\ r[t_2]-r[t_2-1]\end{bmatrix}=h\begin{bmatrix}\Phi_{t_1:t_2}^x\\ \Phi_{t_1:t_2}^r\end{bmatrix}\begin{bmatrix}\beta\\ \delta\end{bmatrix},
\end{equation}
where $\Phi_{t_1:t_2-1}^x\triangleq\begin{bmatrix}(\Phi_{t_1}^x)^T & \cdots & (\Phi_{t_2-1}^x)^T\end{bmatrix}^T$ with
\begin{equation}
\label{eqn:def of Phi_i,t1:t2}
\Phi_k^x\triangleq\begin{bmatrix}s_1[k]\sum_{j\in\bar{\mathcal{N}}_1}a_{1j}x_j[k] & -x_1[k]\\ \vdots & \vdots\\ s_n[k]\sum_{j\in\bar{\mathcal{N}}_n}a_{nj}x_j[k] & -x_n[k]\end{bmatrix},\ \forall k\in\{t_1,\dots,t_2-1\},
\end{equation}
and $\Phi_{t_1:t_2-1}^r\triangleq\begin{bmatrix}(\Phi_{t_1}^r)^T & \cdots & (\Phi_{t_2-1}^r)^T\end{bmatrix}^T$ with
\begin{equation}
\label{eqn:def of Psi_i,t1:t2}
\Phi_k^r\triangleq\begin{bmatrix}0 & x_1[k]\\ \vdots & \vdots\\ 0 & x_n[k]\end{bmatrix},\ \forall k\in\{t_1,\dots,t_2-1\}.
\end{equation}
We can then view Eq.~\eqref{eqn:rewrite SIR matrix} as a set of $2(t_2-t_1)n$ equations in $\beta$ and $\delta$. Noting that $s_i[k]$ for all $i\in\mathcal{V}$ can be obtained from $s_i[k]=1-x_i[k]-r_i[k]$ as argued in Section~$\ref{sec:preliminaries}$, we see that the coefficients in the set of equations in $\beta$ and $\delta$ given by Eq.~\eqref{eqn:rewrite SIR matrix}, i.e., the terms in Eq.~\eqref{eqn:rewrite SIR matrix} other than $\beta$ and $\delta$, can be determined given that $x[k]$ and $r[k]$ are known for all $k\in\{t_1,\dots,t_2\}$. Also note that given $x[k]$ and $r[k]$ for all $k\in\{t_1,\dots,t_2\}$, we can uniquely identify $\beta$ and $\delta$ using Eq.~\eqref{eqn:rewrite SIR matrix} if and only if $\rank(\begin{bmatrix}(\Phi_{t_1:t_2-1}^x)^T & (\Phi_{t_1:t_2-1}^r)^T\end{bmatrix})=2$.

Next, let $\mathcal{I}\subseteq\mathcal{I}_{t_1:t_2}$ denote a measurement selection strategy, where $\mathcal{I}_{t_1:t_2}$ is given by Eq.~\eqref{eqn:all candidate measurements}. We will then consider identifying $\beta$ and $\delta$ using measurements contained in $\mathcal{I}\subseteq\mathcal{I}_{t_1:t_2}$. To illustrate our analysis, given any $i\in\mathcal{V}$ and any $k\in\{t_1,\dots,t_2-1\}$, we first consider the following equation from Eq.~\eqref{eqn:rewrite SIR matrix}:
\begin{equation}
\label{eqn:single eq k i x}
x_i[k+1]-x_i[k]=h\begin{bmatrix}s_i[k]\sum_{w\in\bar{\mathcal{N}}_i}a_{iw}x_w[k] & -x_i[k] \end{bmatrix}\begin{bmatrix}\beta\\ \delta\end{bmatrix},
\end{equation}
where $s_i[k]=1-x_i[k]-r_i[k]$, and we index the equation in Eq.~\eqref{eqn:rewrite SIR matrix} corresponding to Eq.~\eqref{eqn:single eq k i x} as $(k,i,x)$. Note that in order to use Eq.~\eqref{eqn:single eq k i x} in identifying $\beta$ and $\delta$, one needs to determine the coefficients (i.e., the terms other than $\beta$ and $\delta$) in the equation. Also note that in order to determine the coefficients in equation $(k,i,x)$, one can use the measurements contained in $\mathcal{I}\subseteq\mathcal{I}_{t_1:t_2}$, and use Lemma~$\ref{lemma:propagate of initial condition}$ to determine if $x_i[k]=0$ (resp., $r_i[k]=0$) holds. Supposing $x_i[k+1]=0$, we see from Lemma~$\ref{lemma:propagate of initial condition}$ and Eq.~\eqref{eqn:SIR I} that $x_i[k]=0$ and $s_i[k]\sum_{w\in\bar{\mathcal{N}}_i}a_{iw}x_w[k]=0$, which makes equation $(k,i,x)$ useless in identifying $\beta$ and $\delta$. Thus, in order to use equation $(k,i,x)$ in identifying $\beta$ and $\delta$, we need $x_i[k+1]\in\mathcal{I}$ with $x_i[k+1]>0$. Similarly, given any $i\in\mathcal{V}$ and any $k\in\{t_1,\dots,t_2-1\}$, we consider the following equation from Eq.~\eqref{eqn:rewrite SIR matrix}:
\begin{equation}
r_i[k+1]-r_i[k]=h\begin{bmatrix}0 & x_i[k] \end{bmatrix}\begin{bmatrix}\beta\\ \delta\end{bmatrix},\label{eqn:eq for r i k}
\end{equation}
where we index the above equation as $(k,i,r)$. Supposing $r_i[k+1]=0$, we see from Lemma~$\ref{lemma:propagate of initial condition}$ and Eq.~\eqref{eqn:SIR R} that $r_i[k]=x_i[k]=0$, which makes equation $(k,i,r)$ useless in identifying $\beta$ and $\delta$. Hence, in order to use equation $(k,i,r)$ in identifying $\beta$ and $\delta$, we need to have $\{x_i[k],r_i[k+1]\}\subseteq\mathcal{I}$ with $x_i[k]>0$ and $r_i[k+1]>0$. More precisely, we observe that equation $(k,i,r)$ can be used in identifying $\beta$ and $\delta$ if and only if $\{x_i[k],r_i[k+1]\}\subseteq\mathcal{I}$, and $r_i[k]\in\mathcal{I}$ or $r_i[k]=0$ (from Lemma~$\ref{lemma:propagate of initial condition}$).

In general, let us denote the following two coefficient matrices corresponding to equations $(k,i,x)$ and $(k,i,r)$ in Eq.~\eqref{eqn:rewrite SIR matrix}, respectively:
\begin{subequations}
\label{eqn:def of rows}
\begin{align}
&\Phi_{k,i}^x\triangleq\begin{bmatrix}s_i[k]\sum_{j\in\bar{\mathcal{N}}_i}a_{ij}x_j[k] & -x_i[k] \end{bmatrix},\label{eqn:def of Phi x}\\
&\Phi_{k,i}^r\triangleq\begin{bmatrix}0 & x_i[k] \end{bmatrix},\label{eqn:def of Phi r}
\end{align}
\end{subequations}
for all $k\in\{t_1,\dots,t_2-1\}$ and for all $i\in\mathcal{V}$. Moreover, given any measurement selection strategy $\mathcal{I}\subseteq\mathcal{I}_{t_1:t_2}$, we let
\begin{multline}
\label{eqn:def of I bar}
\bar{\mathcal{I}}\triangleq\{(k,i,x):x_i[k+1]\in\mathcal{I},x_i[k]=0\}\cup\{(k,i,x):\{x_i[k+1],x_i[k]\}\subseteq\mathcal{I}\}\\\cup\{(k,i,r):\{r_i[k+1],x_i[k]\}\subseteq\mathcal{I},r_i[k]=0\}\cup\{(k,i,r):\{r_i[k+1],r_i[k],x_i[k]\}\subseteq\mathcal{I}\}
\end{multline}
be the set that contains indices of the equations from Eq.~\eqref{eqn:rewrite SIR matrix} that can be {\it potentially} used in identifying $\beta$ and $\delta$, based on the measurements contained in $\mathcal{I}$. In other words, the coefficients on the left-hand side of equation $(k,i,x)$ (resp., ($k,i,r$)) can be determined using the measurements from $\mathcal{I}$ and using Lemma~$\ref{lemma:propagate of initial condition}$, for all $(k,i,x)\in\bar{\mathcal{I}}$ (resp., $(k,i,r)\in\bar{\mathcal{I}}$). Let us now consider the coefficient matrix $\Phi_{k,i}^x$ (resp., $\Phi_{k,i}^r$) corresponding to $(k,i,x)\in\bar{\mathcal{I}}$ (resp., $(k,i,r)\in\bar{\mathcal{I}}$). One can then show that it is possible that there exist equations in $\bar{\mathcal{I}}$ whose coefficients cannot be (directly) determined using the measurements contained in $\mathcal{I}$ or using Lemma~$\ref{lemma:propagate of initial condition}$, where the undetermined coefficients come from the first element in $\Phi_{k,i}^x$ given by Eq.~\eqref{eqn:def of Phi x}. Nevertheless, it is also possible that one can perform algebraic operations among the equations in $\bar{\mathcal{I}}$ such that the undetermined coefficients get cancelled. Formally, we define the following.

\begin{definition}
\label{def:Phi I}
Consider a measurement selection strategy $\mathcal{I}\subseteq\mathcal{I}_{t_1:t_2}$, where $\mathcal{I}_{t_1:t_2}$ is given by Eq.~\eqref{eqn:all candidate measurements}. Stack coefficient matrices $\Phi_{k,i}^x\in\mathbb{R}^{1\times 2}$ for all $(k,i,x)\in\bar{\mathcal{I}}$ and $\Phi_{k,i}^r\in\mathbb{R}^{1\times2}$ for all $(k,i,r)\in\bar{\mathcal{I}}$ into a single matrix, where $\Phi_{k,i}^x$ and $\Phi_{k,i}^r$ are given by \eqref{eqn:def of rows} and $\bar{\mathcal{I}}$ is given by Eq.~\eqref{eqn:def of I bar}. The resulting matrix is denoted as $\Phi(\mathcal{I})\in\mathbb{R}^{|\bar{\mathcal{I}}|\times2}$. Moreover, define $\tilde{\Phi}(\mathcal{I})$ to be the set that contains all the matrices $\Phi\in\mathbb{R}^{2\times2}$ such that $(\Phi)_1$ and $(\Phi)_2$ can be obtained via algebraic operations among the rows in $\Phi(\mathcal{I})$, and the elements in $(\Phi)_1$ and $(\Phi)_2$ can be fully determined using the measurements from $\mathcal{I}\subseteq\mathcal{I}_{t_1:t_2}$ and using Lemma~$\ref{lemma:propagate of initial condition}$.
\end{definition}

In other words, $\Phi\in\tilde{\Phi}(\mathcal{I})$ corresponds to two equations (in $\beta$ and $\delta$) obtained from Eq.~\eqref{eqn:rewrite SIR matrix} such that the coefficients on the right-hand side of the two equations can be determined using the measurements contained in $\mathcal{I}$ and using Lemma~$\ref{lemma:propagate of initial condition}$ (if the coefficients contain $x_i[k]=0$ or $r_i[k]=0$). Moreover, one can show that the coefficients on the left-hand side of the two equations obtained from Eq.~\eqref{eqn:rewrite SIR matrix} corresponding to $\Phi$ can also be determined using measurements from $\mathcal{I}$ and using Lemma~$\ref{lemma:propagate of initial condition}$. Putting the above arguments together, we see that given a measurement selection strategy $\mathcal{I}\subseteq\mathcal{I}_{t_1:t_2}$,  $\beta$ and $\delta$ can be uniquely identified if and only if there exists $\Phi\in\tilde{\Phi}(\mathcal{I})$ such that $\rank(\Phi)=2$. Equivalently, denoting
\begin{equation}
\label{eqn:max rank of Phi}
r_{\mathop{\max}}(\mathcal{I})\triangleq\mathop{\max}_{\Phi\in\tilde{\Phi}(\mathcal{I})}\rank(\Phi),
\end{equation}
where $r_{\mathop{\max}}(\mathcal{I})\triangleq0$ if $\tilde{\Phi}(\mathcal{I})=\emptyset$, we see that $\beta$ and $\delta$ can be uniquely identified using the measurements from $\mathcal{I}\subseteq\mathcal{I}_{t_1:t_2}$ if and only if $r_{\mathop{\max}}(\mathcal{I})=2$.

\begin{remark}
\label{remark:Phi I at least two rows}
Note that if a measurement selection strategy $\mathcal{I}\subseteq\mathcal{I}_{t_1:t_2}$ satisfies $r_{\mathop{\max}}(\mathcal{I})=2$, it follows from the above arguments that $|\bar{\mathcal{I}}|\ge2$, i.e., $\Phi(\mathcal{I})\in\mathbb{R}^{|\bar{\mathcal{I}}|\times2}$ has at least two rows, where $\bar{\mathcal{I}}$ is defined in Eq.~\eqref{eqn:def of I bar}.
\end{remark}

Recall that collecting the measurement of $x_i[k]$ (resp., $r_i[k]$) at any node $i\in\mathcal{V}$ and at any time step $k\in\mathbb{Z}_{\ge0}$ incurs cost $c_{k,i}\in\mathbb{R}_{\ge0}$ (resp., $b_{k,i}\in\mathbb{R}_{\ge0}$). Given any measurement selection strategy $\mathcal{I}\subseteq\mathcal{I}_{t_1:t_2}$, we denote the cost associated with $\mathcal{I}$ as
\begin{equation}
\label{eqn:cost of I}
c(\mathcal{I})\triangleq\sum_{x_i[k]\in\mathcal{I}}c_{k,i}+\sum_{r_i[k]\in\mathcal{I}}b_{k,i}.
\end{equation}
We then define the Parameter Identification Measurement Selection (PIMS) problem in the perfect measurement setting as follows.
\begin{problem}
\label{pro:PIMS}
Consider a discrete-time SIR model given by Eq.~\eqref{eqn:SIR single node} with a directed graph $\mathcal{G}=\{\mathcal{V},\mathcal{E}\}$, a weight matrix $A\in\mathbb{R}^{n\times n}$, a sampling parameter $h\in\mathbb{R}_{\ge0}$, and sets $\mathcal{S}_I,\mathcal{S}_H\subseteq\mathcal{V}$ defined in Definition~$\ref{def:sets of initial states}$. Moreover, consider time steps $t_1,t_2\in\mathbb{Z}_{\ge0}$ with $t_1< t_2$, and a cost $c_{k,i}\in\mathbb{R}_{\ge0}$ of measuring $x_i[k]$ and a cost $b_{k,i}\in\mathbb{R}_{\ge0}$ of measuring $r_i[k]$ for all $i\in\mathcal{V}$ and for all $k\in\{t_1,\dots,t_2\}$. The PIMS problem is to find $\mathcal{I}\subseteq\mathcal{I}_{t_1:t_2}$ that solves
\begin{equation}
\label{eqn:PIMS obj}
\begin{split}
&\mathop{\min}_{\mathcal{I}\subseteq\mathcal{I}_{t_1:t_2}}c(\mathcal{I})\\
s.t.\ &r_{\mathop{\max}}(\mathcal{I})=2,
\end{split}
\end{equation}
where $\mathcal{I}_{t_1:t_2}$ is defined in Eq.~\eqref{eqn:all candidate measurements}, $c(\mathcal{I})$ is defined in Eq.~\eqref{eqn:cost of I}, and $r_{\mathop{\max}}(\mathcal{I})$ is defined in Eq.~\eqref{eqn:max rank of Phi}.
\end{problem}

We  have the following result; the proof is included in Section~$\ref{sec:proof of PIMS NP-hard}$ in the Appendix.
\begin{theorem}
\label{thm:PIMS NP-hard}
The PIMS problem is NP-hard.
\end{theorem}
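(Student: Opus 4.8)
The plan is to exhibit a polynomial-time reduction from a known NP-hard problem to the PIMS problem. Since the PIMS objective is to minimize a weighted sum of selected measurements subject to the rank-$2$ identifiability constraint $r_{\max}(\mathcal{I})=2$, the natural target is a covering-type problem: some notion of \textbf{set cover} or \textbf{vertex cover}. The idea is that making a single equation $(k,i,x)$ or $(k,i,r)$ ``usable'' requires including a specific small bundle of measurements (e.g., $\{x_i[k+1],x_i[k]\}$, or $\{r_i[k+1],r_i[k],x_i[k]\}$, per the definition of $\bar{\mathcal I}$ in Eq.~\eqref{eqn:def of I bar}), and achieving $r_{\max}(\mathcal{I})=2$ requires assembling two equations with linearly independent coefficient rows, possibly after cancellation. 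So the combinatorial heart of PIMS is choosing which measurements to ``turn on'' so that enough usable equations are produced, at minimum cost — which smells like a weighted covering problem.

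Concretely, I would build an instance of PIMS from an arbitrary instance of (weighted) set cover (or exact cover / vertex cover, whichever makes the gadget cleanest). First I would pick a fixed graph $\mathcal{G}$, fixed weight matrix $A$, fixed sampling parameter $h$, and a fixed time window $[t_1,t_2]$ large enough that the candidate set $\mathcal{I}_{t_1:t_2}$ contains one ``element gadget'' measurement per ground-set element of the set-cover instance and one ``equation gadget'' per set; here I must be careful to choose $\mathcal{S}_I,\mathcal{S}_H$ and the graph so that Lemma~\ref{lemma:propagate of initial condition} pins down exactly which $x_i[k]$ and $r_i[k]$ are zero, giving me precise control over which index triples can land in $\bar{\mathcal I}$. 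Then I would set the costs $c_{k,i},b_{k,i}$ so that ``structural'' measurements needed to wire up the gadgets are free (cost $0$), the measurements encoding ``select this set'' carry the set's weight, and all other candidate measurements carry prohibitively large cost. The claim to prove is: a set cover of weight $W$ exists iff there is a feasible $\mathcal{I}$ (i.e., with $r_{\max}(\mathcal{I})=2$) of cost $W$. The forward direction is routine: given a cover, select the corresponding measurements plus the free structural ones and check, using Definition~\ref{def:Phi I} and the explicit forms \eqref{eqn:def of rows}, that two rows of $\Phi(\mathcal{I})$ (or two rows obtainable by algebraic combination) are linearly independent and fully determined, so $r_{\max}(\mathcal{I})=2$. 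For the reverse direction I would argue that any feasible $\mathcal{I}$ of cost $\le W$ cannot include any ``expensive'' measurement, hence must achieve identifiability purely through the gadget measurements, and then show that the subsets whose ``select'' measurement is chosen must cover the ground set.

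The main obstacle I anticipate is \textbf{controlling $\tilde\Phi(\mathcal I)$ — the set of $2\times 2$ matrices obtainable by arbitrary algebraic operations among the rows of $\Phi(\mathcal I)$ whose entries are simultaneously determined from $\mathcal I$ and Lemma~\ref{lemma:propagate of initial condition}}. Because the definition allows cancellation of the undetermined coefficient (the first entry $s_i[k]\sum_j a_{ij}x_j[k]$ of $\Phi^x_{k,i}$), I cannot just reason about which individual rows are ``good''; I have to make sure the gadget is rigid enough that no unintended algebraic combination of cheap rows accidentally produces a rank-$2$ pair, which would let a non-cover be feasible and break the reduction. The cleanest way to neutralize this is to design the graph so that, for the relevant nodes, $\sum_{j\in\bar{\mathcal N}_i}a_{ij}x_j[k]$ is itself expressible via already-selected measurements (e.g., nodes with only self-loops, so the coefficient is $a_{ii}s_i[k]x_i[k]$ with $s_i[k]=1-x_i[k]-r_i[k]$), thereby collapsing $\tilde\Phi(\mathcal I)$ down to something I can enumerate: essentially the row span restricted to the determined rows. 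A secondary technical point is verifying the SIR trajectory actually satisfies Assumptions~\ref{ass:initial condition}--\ref{ass:model parameters} (so Lemma~\ref{lemma:propagate of initial condition} applies) for the chosen $\mathcal{G},A,h$ and for the \emph{true} $\beta,\delta$ used to generate the data; this is a finite check but needs the constructed instance to be a legitimate SIR instance, not just a formal one. Once the gadget is rigid, counting the cost and matching it to the set-cover weight is straightforward, and NP-hardness follows since set cover is NP-hard.
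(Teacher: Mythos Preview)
Your high-level plan---reduce from a covering problem, make structural measurements free and ``select-this-set'' measurements carry the weight, and argue feasibility $\Leftrightarrow$ cover---matches the paper's approach (the paper reduces from X3C). You also correctly identify the central technical obstacle: controlling the algebraic cancellations allowed in $\tilde\Phi(\mathcal I)$.

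The genuine gap is in how you propose to handle that obstacle. You suggest \emph{neutralizing} cancellation by using nodes with only self-loops, so that the troublesome coefficient $s_i[k]\sum_{j}a_{ij}x_j[k]=a_{ii}s_i[k]x_i[k]$ is determined once $x_i[k],r_i[k]$ are measured. But if you do that, the PIMS instance becomes trivial: for any such node $i$ with $x_i[k]>0$, the pair $\Phi^x_{k,i}=[\,a_{ii}s_i[k]x_i[k],\,-x_i[k]\,]$ and $\Phi^r_{k,i}=[\,0,\,x_i[k]\,]$ already has rank $2$, so measuring $\{x_i[k],x_i[k+1],r_i[k],r_i[k+1]\}$ at a single cheap node achieves $r_{\max}(\mathcal I)=2$. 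No covering structure survives, and the reduction proves nothing.

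The paper's reduction does the opposite: it \emph{exploits} cancellation as the encoding mechanism. It builds a hub node $i_0$ whose in-neighbors are exactly the element nodes $j_1,\dots,j_{3m}$ (all of whose $x_{j_q}[2]$ are prohibitively expensive to measure), and subset nodes $i_l$ whose in-neighbors are the three element nodes in $z_l$. The only way to determine the first entry of a usable $x$-row is to subtract subset-node equations from the hub equation so that every $x_{j_q}[2]$ cancels; this forces the chosen subsets to cover $\mathcal X$, and the cost budget $m$ forces an exact cover. In short, the cancellation you tried to suppress is precisely where the hardness lives; your gadget needs to \emph{require} cancellation across a family of equations whose neighborhoods must collectively cover an unmeasured ground set.
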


Theorem~$\ref{thm:PIMS NP-hard}$ indicates that there is no polynomial-time algorithm that solves all instances of the PIMS problem optimally (if P $\neq$ NP). Moreover, we note from the formulation of the PIMS problem given by Problem~$\ref{pro:PIMS}$ that for a measurement selection strategy $\mathcal{I}\subseteq\mathcal{I}_{t_1:t_2}$, one needs to check if $\mathop{\max}_{\Phi\in\tilde{\Phi}(\mathcal{I})}\rank(\Phi)=2$ holds, {\it before} the corresponding measurements are collected. However, in general, it is not possible to calculate $\rank(\Phi)$ when no measurements are collected. In order to bypass these issues, we will explore additional properties of the PIMS problem in the following.

\subsection{Solving the PIMS Problem}\label{sec:solve PIMS}
We start with the following result.
\begin{lemma}
\label{lemma:rank cond noiseless measurement}
Consider a discrete time SIR model given by Eq.~\eqref{eqn:SIR single node}. Suppose Assumptions~$\ref{ass:initial condition}$-$\ref{ass:model parameters}$ hold. Then, the following results hold, where $\Phi_{k_1,i_1}^x\in\mathbb{R}^{1\times2}$ and $\Phi_{k_2,i_2}^r\in\mathbb{R}^{1\times2}$ are defined in \eqref{eqn:def of rows}, $\mathcal{S}_I^{\prime}\triangleq\{i\in\mathcal{S}_I: a_{ii}>0\}$, $\mathcal{S}^{\prime}\triangleq\{i\in\mathcal{V}\setminus\mathcal{S}_I^{\prime}:\mathcal{N}_i\neq\emptyset,\mathop{\min}\{d_j:j\in\mathcal{N}_i\}\neq\infty\}$, and $\mathcal{S}_I$ and $d_i$ are defined in Definition~$\ref{def:sets of initial states}$ for all $i\in\mathcal{V}$.\\
$(a)$ For any $i_1\in\mathcal{S}_I^{\prime}$ and for any $i_2\in\mathcal{V}$ with $d_{i_2}\neq\infty$, $\text{rank}\big(\begin{bmatrix}(\Phi_{k_1,i_1}^x)^T & (\Phi_{k_2,i_2}^r)^T\end{bmatrix}\big)=2$ for all $k_1\ge0$ and for all $k_2\ge d_{i_2}$, where $k_1,k_2\in\mathbb{Z}_{\ge0}$.\\
$(b)$ For any $i_1\in\mathcal{S}^{\prime}$ and for any $i_2\in\mathcal{V}$ with $d_{i_2}\neq\infty$, $\text{rank}\big(\begin{bmatrix}(\Phi_{k_1,i_1}^x)^T & (\Phi_{k_2,i_2}^r)^T\end{bmatrix}\big)=2$ for all $k_1\ge\mathop{\min}\{d_j:j\in\mathcal{N}_{i_1}\}$, and for all $k_2\ge d_{i_2}$, where $k_1,k_2\in\mathbb{Z}_{\ge0}$. \end{lemma}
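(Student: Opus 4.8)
The plan is to prove both parts by direct computation of the $2\times 2$ determinant
\[
\det\begin{bmatrix} s_{i_1}[k_1]\sum_{j\in\bar{\mathcal{N}}_{i_1}}a_{i_1 j}x_j[k_1] & -x_{i_1}[k_1]\\[2pt] 0 & x_{i_2}[k_2]\end{bmatrix}
= x_{i_2}[k_2]\cdot s_{i_1}[k_1]\sum_{j\in\bar{\mathcal{N}}_{i_1}}a_{i_1 j}x_j[k_1],
\]
and showing that under the stated hypotheses every factor on the right-hand side is strictly positive, so the determinant is nonzero and the rank is $2$. First I would invoke Lemma~\ref{lemma:propagate of initial condition}(a) to get $s_{i_1}[k_1]>0$ for all $k_1\ge 0$, which handles one factor immediately and for free. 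Next, for $i_2$ with $d_{i_2}\neq\infty$, Lemma~\ref{lemma:propagate of initial condition}(b) gives $x_{i_2}[k_2]\in(0,1)$ for all $k_2\ge d_{i_2}$, so the second factor is positive under the stated range of $k_2$; this part of the argument is identical in (a) and (b).

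The remaining work is to show the sum $\sum_{j\in\bar{\mathcal{N}}_{i_1}}a_{i_1 j}x_j[k_1]$ is strictly positive in each case, and this is where the two parts diverge. For part (a), $i_1\in\mathcal{S}_I'$ means $i_1\in\mathcal{S}_I$ (so $d_{i_1}=0$, hence by Lemma~\ref{lemma:propagate of initial condition}(b) $x_{i_1}[k_1]>0$ for all $k_1\ge 0$) and $a_{i_1 i_1}>0$; since $i_1\in\bar{\mathcal{N}}_{i_1}$, the term $a_{i_1 i_1}x_{i_1}[k_1]$ alone is strictly positive and all other terms are nonnegative (weights are nonnegative by definition of $A$, and $x_j[k]\ge 0$ always). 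For part (b), $i_1\in\mathcal{S}'$ has a neighbor $j\in\mathcal{N}_{i_1}$ with $d_j\neq\infty$; I would pick $j^\ast\in\arg\min\{d_j:j\in\mathcal{N}_{i_1}\}$, note $a_{i_1 j^\ast}>0$ by Assumption~\ref{ass:model parameters} (off-diagonal weights on edges are strictly positive, and $(j^\ast,i_1)\in\mathcal{E}$ with $j^\ast\neq i_1$ since $i_1\notin\mathcal{S}_I'$ — actually one should be slightly careful whether $j^\ast$ could equal $i_1$, but $\mathcal{N}_{i_1}$ excludes self-loops by its definition, so $j^\ast\neq i_1$), and then apply Lemma~\ref{lemma:propagate of initial condition}(b) to $j^\ast$ to get $x_{j^\ast}[k_1]>0$ for all $k_1\ge d_{j^\ast}=\min\{d_j:j\in\mathcal{N}_{i_1}\}$. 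Again all other terms in the sum are nonnegative, so the sum is strictly positive on the claimed range of $k_1$.

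I expect the main subtlety — rather than a genuine obstacle — to be bookkeeping about which indices lie in $\bar{\mathcal{N}}_{i_1}$ versus $\mathcal{N}_{i_1}$ (self-loops), and confirming that the definitions of $\mathcal{S}_I'$ and $\mathcal{S}'$ are exactly engineered so that in each case one summand is forced positive while the rest cannot be negative. A secondary point worth a sentence is that $k_2\ge d_{i_2}$ (not $k_2>d_{i_2}$) is the right threshold because Lemma~\ref{lemma:propagate of initial condition}(b) already gives positivity of $x_{i_2}$ at $k_2=d_{i_2}$ itself. Once positivity of all three factors is established on the stated ranges, the determinant is strictly positive, hence nonzero, so the stacked $2\times 2$ matrix has full rank $2$, completing the proof.
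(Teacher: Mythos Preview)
Your proposal is correct and follows essentially the same approach as the paper's proof: write out the $2\times 2$ matrix, observe it is lower-triangular in the second row, and use Lemma~\ref{lemma:propagate of initial condition} together with the positivity of the relevant edge weight to show both diagonal entries are strictly positive. The paper argues this slightly more tersely (without explicitly writing the determinant as a product), but the logical content is identical.
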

\begin{proof}
Noting from \eqref{eqn:def of rows}, we have
\begin{equation}
\label{eqn:sufficient rank cond}
\begin{bmatrix}\Phi_{k_1,i_1}^x\\ \Phi_{k_2,i_2}^r\end{bmatrix}=\begin{bmatrix}s_{i_1}[k_1]\sum_{j\in\bar{\mathcal{N}}_{i_1}}a_{i_1j}x_j[k_1] & -x_{i_1}[k_1]\\ 0 & x_{i_2}[k_2] \end{bmatrix}.
\end{equation}
To prove part~$(a)$, consider any $i_1\in\mathcal{S}_I^{\prime}$ and any $i_2\in\mathcal{V}$ with $d_{i_2}\neq\infty$, where we note $x_{i_1}[0]>0$ and $a_{i_1i_1}>0$ from the definition of $\mathcal{S}_I^{\prime}$. We then see from Lemma~$\ref{lemma:propagate of initial condition}(a)$-$(b)$ that $s_{i_1}[k_1]>0$ and $x_{i_1}[k_1]>0$ for all $k_1\ge0$. It follows that $s_{i_1}[k_1]\sum_{j\in\bar{\mathcal{N}}_{i_1}}a_{i_1j}x_j[k_1]>0$ for all $k_1\ge0$. Also, we obtain from Lemma~$\ref{lemma:propagate of initial condition}(b)$ $x_{i_2}[k_2]>0$ for all $k_2\ge d_{i_2}$, which proves part~$(a)$. 

We then prove part~$(b)$. Considering any $i_1\in\mathcal{S}^{\prime}$ and any $i_2\in\mathcal{V}$ with $d_2\neq\infty$, we see from the definition of $\mathcal{S}^{\prime}$ that $\mathcal{N}_{i_1}\neq\emptyset$ and there exists $j\in\mathcal{N}_{i_1}$ such that $d_j\neq\infty$. Letting $j_1$ be a node in $\mathcal{N}_{i_1}$ such that $d_{j_1}=\mathop{\min}\{d_j:j\in\mathcal{N}_{i_1}\}\neq\infty$, we note from Lemma~$\ref{lemma:propagate of initial condition}(a)$ that $x_{j_1}[k_1]>0$ for all $k_1\ge\mathop{\min}\{d_j:j\in\mathcal{N}_{i_1}\}$. Also note that $a_{i_1j_1}>0$ from Assumption~$\ref{ass:model parameters}$. The rest of the proof of part~$(b)$ is then identical to that of part~$(a)$.
\end{proof}


Recalling the way we index the equations in Eq.~\eqref{eqn:rewrite SIR matrix} (see Eqs.~\eqref{eqn:single eq k i x}-\eqref{eqn:eq for r i k} for examples), we define the set that contains all the indices of the equations in Eq.~\eqref{eqn:rewrite SIR matrix}:
\begin{equation}
\label{eqn:def of Q}
\mathcal{Q}\triangleq\{(k,i,\lambda):k\in\{t_1,\dots,t_2-1\},i\in\mathcal{V},\lambda\in\{x,r\}\}.
\end{equation}
Following the arguments in Lemma~$\ref{lemma:rank cond noiseless measurement}$, we denote
\begin{align}
&\mathcal{Q}_1\triangleq\{(k,i,x)\in\mathcal{Q}:i\in\mathcal{S}_I^{\prime}\}\cup\{(k,i,x)\in\mathcal{Q}:k\ge\mathop{\min}\{d_j:j\in\mathcal{N}_i\},i\in\mathcal{S}^{\prime}\},\label{eqn:def of Q_1}\\
&\mathcal{Q}_2\triangleq\{(k,i,r)\in\mathcal{Q}:k\ge d_i,i\in\mathcal{V},d_i\neq\infty\},\label{eqn:def of Q_2}
\end{align}
where $\mathcal{S}_I^{\prime}$ and $\mathcal{S}^{\prime}$ are defined in Lemma~$\ref{lemma:rank cond noiseless measurement}$, and $d_i$ is defined in Definition~$\ref{def:sets of initial states}$. Next, for all $(k,i,x)\in\mathcal{Q}$, we define the set of measurements that are needed to determine the coefficients in equation $(k,i,x)$ (when no other equations are used) to be
\begin{equation*}
\label{eqn:set of measurements for x}
\mathcal{I}(k,i,x)\triangleq\big(\{x_i[k+1],r_i[k]\}\cup\{x_j[k]:j\in\bar{\mathcal{N}}_i\}\big)\cap\mathcal{I}_{t_1:t_2},
\end{equation*}
where $\mathcal{I}_{t_1:t_2}$ is defined in Eq.~\eqref{eqn:all candidate measurements}. Similarly, for all $(k,i,r)\in\mathcal{Q}$, we define
\begin{equation*}
\label{eqn:set of measurements for r}
\mathcal{I}(k,i,r)\triangleq\big(\{r_i[k+1],r_i[k],x_i[k]\}\big)\cap\mathcal{I}_{t_1:t_2}.
\end{equation*}
Moreover, let us denote
\begin{equation}
\label{eqn:measurements of two eqs}
\mathcal{I}((k_1,i_1,\lambda_1),(k_2,i_2,\lambda_2))\triangleq\mathcal{I}(k_1,i_1,\lambda_1)\cup\mathcal{I}(k_2,i_2,\lambda_2)
\end{equation}
for all $(k_1,i_1,\lambda_1),(k_2,i_2,\lambda_2)\in\mathcal{Q}$. Similarly to Eq.~\eqref{eqn:cost of I}, denote the sum of the costs of the measurements from $\mathcal{I}((k_1,i_1,\lambda_1),(k_2,i_2,\lambda_2))$ as $c(\mathcal{I}((k_1,i_1,\lambda_1),(k_2,i_2,\lambda_2)))$.

\begin{algorithm}
\caption{Algorithm for PIMS}
\label{alg:perfect measurement}
\begin{algorithmic}[1]
\State{\textbf{Input}: An instance of PIMS}
\State{Find $(k_1,i_1,x)\in\mathcal{Q}_1$, $(k_2,i_2,r)\in\mathcal{Q}_2$ s.t. $c(\mathcal{I}((k_1,i_1,x),(k_2,i_2,r)))$ is minimized}
\Return{$\mathcal{I}((k_1,i_1,x),(k_2,i_2,r))$}
\end{algorithmic}
\end{algorithm}

Based on the above arguments, we propose an algorithm defined in Algorithm~$\ref{alg:perfect measurement}$ for the PIMS problem. Note that Algorithm~$\ref{alg:perfect measurement}$ finds an equation from $\mathcal{Q}_1$ and an equation from $\mathcal{Q}_2$ such that the sum of the costs of the two equations is minimized, where $\mathcal{Q}_1$ and $\mathcal{Q}_2$ are defined in Eq.~\eqref{eqn:def of Q_1} and Eq.~\eqref{eqn:def of Q_2}, respectively.
\begin{proposition}
\label{thm:performance of algorithm 1}
Consider an instance of the PIMS problem under Assumptions $\ref{ass:initial condition}$-$\ref{ass:model parameters}$. Algorithm~$\ref{alg:perfect measurement}$ returns a solution $\mathcal{I}((k_1,i_1,x),(k_2,i_2,r))$ to the PIMS problem that satisfies the constraint in \eqref{eqn:PIMS obj}, and the following:
\begin{equation}
\label{eqn:bound 2 for alg 1}
\frac{c(\mathcal{I}((k_1,i_1,x),(k_2,i_2,r)))}{c(\mathcal{I}^{\star})}\le\frac{\mathop{\min}_{(k,i,x)\in\mathcal{Q}_1}(b_{k+1,i}+b_{k,i}+c_{k+1,i}+\sum_{j\in\bar{\mathcal{N}}_i}c_{k,j})}{3c_{\mathop{\min}}},
\end{equation}
where $\mathcal{I}^{\star}$ is an optimal solution to the PIMS problem, $\mathcal{Q}_1$ is defined in Eq.~\eqref{eqn:def of Q_1}, and $c_{\mathop{\min}}\triangleq\mathop{\min}\{\mathop{\min}_{x_i[k]\in\mathcal{I}_{t_1:t_2}}c_{k,i},\mathop{\min}_{r_i[k]\in\mathcal{I}_{t_1:t_2}}b_{k,i}\}>0$ with $\mathcal{I}_{t_1:t_2}$ given by Eq.~\eqref{eqn:all candidate measurements}.
\end{proposition}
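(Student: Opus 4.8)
The plan is to establish two things separately: that the set $\mathcal{I}\triangleq\mathcal{I}((k_1,i_1,x),(k_2,i_2,r))$ returned by Algorithm~\ref{alg:perfect measurement} is feasible for \eqref{eqn:PIMS obj}, and then the ratio bound \eqref{eqn:bound 2 for alg 1}. Throughout I take for granted that the instance admits a feasible solution, so that $\mathcal{Q}_1\neq\emptyset$, $\mathcal{Q}_2\neq\emptyset$, and the algorithm indeed returns a set. For feasibility, I would first check that $(k_1,i_1,x)$ and $(k_2,i_2,r)$ both lie in $\bar{\mathcal{I}}$ (see \eqref{eqn:def of I bar}): since $(k_1,i_1,x)\in\mathcal{Q}_1$, Lemma~\ref{lemma:propagate of initial condition}(b) gives $x_{i_1}[k_1+1]>0$ with $x_{i_1}[k_1]$ either positive or $0$, so the relevant subset of $\{x_{i_1}[k_1+1],x_{i_1}[k_1]\}$ lies in $\mathcal{I}(k_1,i_1,x)\subseteq\mathcal{I}$ and the appropriate $(k,i,x)$-clause of \eqref{eqn:def of I bar} holds; likewise $(k_2,i_2,r)\in\mathcal{Q}_2$ gives $x_{i_2}[k_2]>0$ and $r_{i_2}[k_2+1]>0$ (Lemma~\ref{lemma:propagate of initial condition}(b)--(c)), with $r_{i_2}[k_2]$ positive or $0$, so one of the $(k,i,r)$-clauses holds. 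Next I would verify that every entry of $\Phi_{k_1,i_1}^x$ and $\Phi_{k_2,i_2}^r$ can be determined from the measurements in $\mathcal{I}$ together with Lemma~\ref{lemma:propagate of initial condition} — the only nonobvious one being $s_{i_1}[k_1]\sum_{j\in\bar{\mathcal{N}}_{i_1}}a_{i_1j}x_j[k_1]$, which is recovered from $s_{i_1}[k_1]=1-x_{i_1}[k_1]-r_{i_1}[k_1]$ and the $x_j[k_1]$ for $j\in\bar{\mathcal{N}}_{i_1}$, each of which lies in $\mathcal{I}(k_1,i_1,x)$ or is $0$ by Lemma~\ref{lemma:propagate of initial condition} — so that $\big[(\Phi_{k_1,i_1}^x)^{\top}\ (\Phi_{k_2,i_2}^r)^{\top}\big]^{\top}\in\tilde{\Phi}(\mathcal{I})$ in the sense of Definition~\ref{def:Phi I}. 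Finally, since $(k_1,i_1,x)\in\mathcal{Q}_1$ and $(k_2,i_2,r)\in\mathcal{Q}_2$ match the hypotheses of Lemma~\ref{lemma:rank cond noiseless measurement}(a) or (b), this matrix has rank $2$, hence $r_{\max}(\mathcal{I})=2$.

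For the ratio I would bound the numerator and denominator of the left-hand side of \eqref{eqn:bound 2 for alg 1} separately. Since Algorithm~\ref{alg:perfect measurement} returns a pair of minimum cost over $\mathcal{Q}_1\times\mathcal{Q}_2$, evaluating the cost at any admissible pair gives an upper bound on $c(\mathcal{I}((k_1,i_1,x),(k_2,i_2,r)))$. Taking the index $(k,i,x)\in\mathcal{Q}_1$ attaining the minimum on the right-hand side of \eqref{eqn:bound 2 for alg 1} and pairing it with $(k,i,r)$ — which one checks lies in $\mathcal{Q}_2$, with a short separate argument for the boundary case ($i\in\mathcal{S}^{\prime}$, $k=\mathop{\min}\{d_j:j\in\mathcal{N}_i\}$, where $x_i[k]=0$ so its natural partner is not in $\mathcal{Q}_2$) — and then noting that $\mathcal{I}(k,i,r)$ shares $x_i[k]$ and $r_i[k]$ with $\mathcal{I}(k,i,x)$ while $c_{k,i}$ already appears in $\sum_{j\in\bar{\mathcal{N}}_i}c_{k,j}$ because $i\in\bar{\mathcal{N}}_i$, the cost of the induced measurement set is at most $b_{k+1,i}+b_{k,i}+c_{k+1,i}+\sum_{j\in\bar{\mathcal{N}}_i}c_{k,j}$, which is the claimed numerator.

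For the denominator, feasibility of $\mathcal{I}^{\star}$ gives some $\Phi\in\tilde{\Phi}(\mathcal{I}^{\star})$ with $\rank(\Phi)=2$. As each row of $\Phi$ is an algebraic combination of rows of $\Phi(\mathcal{I}^{\star})$, the matrix $\Phi(\mathcal{I}^{\star})$ has rank $2$, hence contains two linearly independent rows, each of $\Phi^x$- or $\Phi^r$-type (in particular it has at least two rows, consistent with Remark~\ref{remark:Phi I at least two rows}), and from $\mathcal{I}^{\star}$ and Lemma~\ref{lemma:propagate of initial condition} one can moreover determine a rank-$2$ combination of these rows. I would then run a case analysis on the types of the two rows: using $s_i[k]>0$ always (Lemma~\ref{lemma:propagate of initial condition}(a)), so that a nonzero first entry of a $\Phi^x$-row forces some $x_j[k]>0$ with $j\in\bar{\mathcal{N}}_i$ and hence $x_j[k]\in\mathcal{I}^{\star}$, together with the membership clauses of \eqref{eqn:def of I bar} (each of which places a further measurement at time $k+1$ into $\mathcal{I}^{\star}$), one shows that supporting two such independent rows and a determinable rank-$2$ combination of them requires at least three distinct measurements in $\mathcal{I}^{\star}$. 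Since every candidate measurement costs at least $c_{\min}>0$, this gives $c(\mathcal{I}^{\star})\ge 3c_{\min}$. Dividing the numerator bound by $c(\mathcal{I}^{\star})$ then yields \eqref{eqn:bound 2 for alg 1}.

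The step I expect to be the main obstacle is the lower bound $c(\mathcal{I}^{\star})\ge 3c_{\min}$. Because $\tilde{\Phi}(\mathcal{I})$ permits algebraic manipulations among several equations — not merely the use of a single equation — the case analysis cannot simply count the measurements needed to determine two independent rows of $\Phi(\mathcal{I}^{\star})$ in isolation; it must track which entries of the \emph{combined} rows remain determinable (undetermined first entries of two $\Phi^x$-rows may cancel), and rule out rank $2$ being reached with only one or two measurements in every configuration. A secondary technical point is the boundary case in the numerator argument, where the cost-minimizing $(k,i,x)\in\mathcal{Q}_1$ has $x_i[k]=0$ and therefore its natural partner $(k,i,r)$ lies outside $\mathcal{Q}_2$, so a slightly different admissible pair must be used to match the stated numerator.
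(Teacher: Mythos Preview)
Your proposal is correct and follows essentially the same route as the paper's proof: feasibility via Lemma~\ref{lemma:rank cond noiseless measurement}, the numerator bound by evaluating the algorithm's objective at the diagonal pair $((k,i,x),(k,i,r))$ and expanding $\mathcal{I}((k,i,x),(k,i,r))$, and the denominator bound by arguing $|\mathcal{I}^{\star}|\ge 3$ through a case analysis on the rows of $\Phi(\mathcal{I}^{\star})$.

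Two remarks. First, your treatment is in places more careful than the paper's. The paper simply writes ``Consider any equations $(k,i,x)\in\mathcal{Q}_1$ and $(k,i,r)\in\mathcal{Q}_2$'' and proceeds, implicitly assuming the same $(k,i)$ works in both sets; you correctly flag the boundary configuration $i\in\mathcal{S}'$, $k=\min\{d_j:j\in\mathcal{N}_i\}=d_i-1$, where $x_i[k]=0$ and hence $(k,i,r)\notin\mathcal{Q}_2$, as needing a separate pairing. That observation is a genuine refinement. Second, for the lower bound $c(\mathcal{I}^{\star})\ge 3c_{\min}$, the paper proceeds exactly as you outline---it treats one representative two-measurement configuration (namely $\mathcal{I}^{\star}=\{x_i[k+1],x_j[k+1]\}$) explicitly, shows it forces a third measurement, and then defers the remaining configurations to ``similar arguments''. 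So the case analysis you describe as the main obstacle is precisely what the paper does (in sketch form); your description of the difficulty, including the need to track determinability of combined rows rather than individual rows, is accurate.
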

\begin{proof}
The feasibility of $\mathcal{I}((k_1,i_1,x),(k_2,i_2,r))$ follows directly from the definition of Algorithm~$\ref{alg:perfect measurement}$ and Lemma~$\ref{lemma:rank cond noiseless measurement}$. We now prove \eqref{eqn:bound 2 for alg 1}.  Consider any equations $(k,i,x)\in\mathcal{Q}_1$ and $(k,i,r)\in\mathcal{Q}_2$. We have from Eq.~\eqref{eqn:measurements of two eqs} the following:
\begin{equation*}
\mathcal{I}((k,i,x),(k,i,r))=\big(\{x_i[k+1],r_i[k]\}\cup\{x_j[k]:j\in\bar{\mathcal{N}}_i\}\cup\{r_i[k+1],r_i[k],x_i[k]\}\big)\cap\mathcal{I}_{t_1:t_2},
\end{equation*}
which implies 
\begin{equation*}
c(\mathcal{I}((k_1,i_1,x),(k_2,i_2,r)))\le\mathop{\min}_{(k,i,x)\in\mathcal{Q}_1}(b_{k+1,i}+b_{k,i}+c_{k+1,i}+\sum_{j\in\bar{\mathcal{N}}_i}c_{k,j}).
\end{equation*}
Next, since $\mathcal{I}^{\star}$ satisfies $r_{\mathop{\max}}(\mathcal{I}^{\star})=2$, we recall from Remark~$\ref{remark:Phi I at least two rows}$ $|\bar{\mathcal{I}}^{\star}|\ge2$, where
\begin{multline*}
\label{eqn:I bar star}
\bar{\mathcal{I}}^{\star}=\{(k,i,x):x_i[k+1]\in\mathcal{I}^{\star},x_i[k]=0\}\cup\{(k,i,x):\{x_i[k+1],x_i[k]\}\subseteq\mathcal{I}^{\star}\}\\\cup\{(k,i,r):\{r_i[k+1],x_i[k]\}\subseteq\mathcal{I}^{\star},r_i[k]=0\}\cup\{(k,i,r):\{r_i[k+1],r_i[k],x_i[k]\}\subseteq\mathcal{I}^{\star}\},
\end{multline*}
which implies $|\mathcal{I}^{\star}|\ge2$. In fact, suppose $\mathcal{I}^{\star}=\{x_i[k+1],x_j[k+1]\}$, where $i\in\mathcal{V}$ and $k\in\{t_1-1,\dots,t_2-1\}$. Since the elements in $\Phi_{k,i}^x$ and $\Phi_{k,j}^x$ (defined in \eqref{eqn:def of rows}) do not contain $x_w[0]$, $r_w[0]$ or $s_w[0]$ for any $w\in\mathcal{V}$, and cannot all be zero, we see that there exists $x_{w^{\prime}}[k]\in\mathcal{I}^{\star}$ (with $x_{w^{\prime}}[k]>0$), where $w^{\prime}\in\mathcal{V}$. This further implies $|\mathcal{I}^{\star}|\ge3$. Using similar arguments, one can show that $|\mathcal{I}^{\star}|\ge3$ holds in general, which implies $c(\mathcal{I}^{\star})\ge3c_{\mathop{\min}}$. Combining the above arguments leads to \eqref{eqn:bound 2 for alg 1}.
\end{proof}

Finally, note that $\mathcal{Q}_2$ and $\mathcal{I}_{t_1:t_2}$ can be obtained by calling the Breadth-First-Search (BFS) algorithm (e.g., \cite{cormen2009introduction}) $|\mathcal{S}_I|$ times with $O(|\mathcal{S}_I|(n+|\mathcal{E}|))$ total time complexity. Also note that the time complexity of line $2$ in Algorithm~$\ref{alg:perfect measurement}$ is $O(n^2(t_2-t_1+1)^2)$. Thus, the overall time complexity of Algorithm~$\ref{alg:perfect measurement}$ is $O(n^2(t_2-t_1+1)^2+|\mathcal{S}_I||\mathcal{E}|)=O(|\mathcal{Q}|^2+|\mathcal{S}_I||\mathcal{E}|)$.

\section{Measurement Selection Problem in Random Measurement Setting}\label{sec:noisy measurement}
In this section, we assume that the initial condition $l\triangleq[(s[0])^T\ (x[0])^T\ (r[0])^T]^T$ is known. Nevertheless, our analysis can potentially be extended to cases where the initial condition $l$ is given by a probability distribution.

\subsection{Problem Formulation}
\label{sec:PEMS problem formulation}
Here, we consider the scenario where the measurement of $x_i[k]$ (resp., $r_i[k]$), denoted as $\hat{x}_i[k]$ (resp., $\hat{r}_i[k]$), is given by a pmf $p(\hat{x}_i[k]|x_i[k])$ (resp., $p(\hat{r}_i[k]|r_i[k])$). Note one can express $x_i[k]$ in terms of $l$ and $\theta\triangleq[\beta\ \delta]^T$ using \eqref{eqn:SIR I}. Hence, given $l$ and $\theta$, we can alternatively write $p(\hat{x}_i[k]|x_i[k])$ as $p(\hat{x}_i[k]|l,\theta)$ for all $i\in\mathcal{V}$ and for all $k\in\mathbb{Z}_{\ge1}$. Since the initial conditions are assumed to be known, we drop the dependency of $p(\hat{x}_i[k]|l,\theta)$ on $l$, and denote the pmf of $\hat{x}_i[k]$ as $p(\hat{x}_i[k]|\theta)$ for all $i\in\mathcal{V}$ and for all $k\in\mathbb{Z}_{\ge1}$. Similarly, given $l$ and $\theta$, we denote the pmf of $\hat{r}_i[k]$ as  $p(\hat{r}_i[k]|\theta)$ for all $i\in\mathcal{V}$ and for all $k\in\mathbb{Z}_{\ge1}$. Note that when collecting measurement $\hat{x}_i[k]$ (resp., $\hat{r}_i[k]$) under a limited budget,  one possibility is to give virus (resp., antibody) tests to a group of randomly and uniformly sampled individuals of the population at node $i\in\mathcal{V}$ and at time $k\in\mathbb{Z}_{\ge1}$ (e.g., \cite{protect}), where a positive testing result indicates that the tested individual is infected (resp., recovered) at time $k$  (e.g., \cite{cdcviraltest}). Thus, the obtained random measurements $\hat{x}_i[k]$ and $\hat{r}_i[k]$ and the corresponding pmfs $p(\hat{x}_i[k]|\theta)$ and $p(\hat{r}_i[k]|\theta)$ depend on the total number of conducted virus tests and antibody tests at node $i$ and at time $k$, respectively.  Consider any node $i\in\mathcal{V}$ and any time step $k\in\mathbb{Z}_{\ge1}$, where the total population of $i$ is denoted by $N_i\in\mathbb{Z}_{\ge1}$ and is assumed to be fixed over time. Suppose we are allowed to choose the number of virus (resp., antibody) tests that will be performed on the (randomly sampled) individuals at node $i$ and at time $k$. Assume that the cost of performing the virus (resp., antibody) tests is proportional to the number of the tests. For any $i\in\mathcal{V}$ and for any $k\in\{t_1,\dots,t_2\}$, let 
\begin{equation}
\label{eqn:set C_k,i}
\mathcal{C}_{k,i}\triangleq\{\zeta c_{k,i}:\zeta\in(\{0\}\cup[\zeta_i])\}
\end{equation}
be the set of all possible costs that we can spend on collecting the measurement $\hat{x}_i[k]$, where $c_{k,i}\in\mathbb{R}_{\ge0}$ and $\zeta_i\in\mathbb{Z}_{\ge1}$. Similarly, for any $i\in\mathcal{V}$ and any $k\in\{t_1,\dots,t_2\}$, let
\begin{equation}
\label{eqn:set B_k,i}
\mathcal{B}_{k,i}\triangleq\{\eta b_{k,i}:\eta\in(\{0\}\cup[\eta_i])\}
\end{equation}
denote the set of all possible costs that we can spend on collecting the measurement $\hat{r}_i[k]$, where $b_{k,i}\in\mathbb{R}_{\ge0}$ and $\eta_i\in\mathbb{Z}_{\ge1}$. For instance, $\zeta c_{k,i}$ can be viewed as the cost of performing virus tests on $\zeta N_i^x$ (randomly sampled) individuals in the population at node $i$, where $N_i^x\in\mathbb{Z}_{\ge1}$ and $\zeta_i N_i^x\le N_i$. To reflect the dependency of the pmf $p(\hat{x}_i[k]|\theta)$ (resp., $p(\hat{r}_i[k]|\theta)$) of measurement $\hat{x}_i[k]$ (resp., $\hat{r}_i[k]$) on the cost spent on collecting the measurement of $x_i[k]$ (resp., $r_i[k]$),  we further denote the pmf of $\hat{x}_i[k]$ (resp., $\hat{r}_i[k]$) as $p(\hat{x}_i[k]|\theta,\varphi_{k,i})$ (resp., $p(\hat{r}_i[k]|\theta,\omega_{k,i})$), where $\varphi_{k,i}\in \mathcal{C}_{k,i}$ (resp., $\omega_{k,i}\in \mathcal{B}_{k,i}$) with $\mathcal{C}_{k.i}$ (resp., $\mathcal{B}_{k,i}$) given by Eq.~\eqref{eqn:set C_k,i} (resp., Eq.~\eqref{eqn:set B_k,i}). Note that $\varphi_{k,i}$ (resp., $\omega_{k,i}$) is the cost that we spend on collecting measurement $\hat{x}_i[k]$ (resp., $\hat{r}_i[k]$), and $\varphi_{k,i}=0$ (resp., $\omega_{k,i}=0$) indicates that measurement $\hat{x}_i[k]$ (resp., $\hat{r}_i[k]$) is not collected.

In contrast with the exact measurement case studied in Section $\ref{sec:perfect measurement}$, it is not possible to uniquely identify $\beta$ and $\delta$ using measurements $\hat{x}_i[k]$ and $\hat{r}_i[k]$ which are now random variables. Thus, we will consider estimators of $\beta$ and $\delta$ based on the measurements indicated by a measurement selection strategy. Similarly to Section~$\ref{sec:perfect measurement}$, given time steps $t_1,t_2\in\mathbb{Z}_{\ge1}$ with $t_2\ge t_1$, define the set of all candidate measurements as 
\begin{equation}
\label{eqn:def of U t_1 t_2}
\mathcal{U}_{t_1:t_2}\triangleq\{\hat{x}_i[k]:i\in\mathcal{V},k\in\{t_1,\dots,t_2\}\}\cup\{\hat{r}_i[k]:i\in\mathcal{V},k\in\{t_1,\dots,t_2\}\}.
\end{equation}
Recalling $\mathcal{C}_{k,i}$ and $\mathcal{B}_{k,i}$ defined in Eq.~\eqref{eqn:set C_k,i} and Eq.~\eqref{eqn:set B_k,i}, respectively, we let
$\mu\in\mathbb{Z}_{\ge0}^{\mathcal{U}_{t_1:t_2}}$ be a measurement selection that specifies the costs spent on collecting measurements $\hat{x}_i[k]$ and $\hat{r}_i[k]$ for all $i\in\mathcal{V}$ and for all $k\in\{t_1,\dots,t_2\}$. Moreover, we define the set of all candidate measurement selections as
\begin{equation}
\label{eqn:def of M}
\mathcal{M}\triangleq\{\mu\in\mathbb{Z}_{\ge0}^{\mathcal{U}_{t_1:t_2}}:\mu(\hat{x}_i[k])\in(\{0\}\cup[\zeta_i]),\mu(\hat{r}_i[k])\in(\{0\}\cup[\eta_i])\},
\end{equation}
where $\zeta_i,\eta_i\in\mathbb{Z}_{\ge1}$ for all $i\in\mathcal{V}$.
In other words, a measurement selection $\mu$ is defined over the integer lattice $\mathbb{Z}_{\ge0}^{\mathcal{U}_{t_1:t_2}}$ so that $\mu$ is a vector of dimension $|\mathcal{U}_{t_1:t_2}|$, where each element of $\mu$ corresponds to an element in $\mathcal{U}_{t_1:t_2}$, and is denoted as $\mu(\hat{x}_i[k])$ (or $\mu(\hat{r}_i[k])$). The set $\mathcal{M}$ contains all $\mu\in\mathbb{Z}_{\ge0}^{\mathcal{U}_{t_1:t_2}}$ such that $\mu(\hat{x}_i[k])\in(\{0\}\cup[\zeta_i])$ and $\mu(\hat{r}_i[k])\in(\{0\}\cup[\eta_i])$ for all $i\in\mathcal{V}$ and for all $k\in\{t_1,\dots,t_2\}$. Thus, for any $\varphi_{k,i}\in \mathcal{C}_{k,i}$ and for any $\omega_{k,i}\in \mathcal{B}_{k,i}$, there exists $\mu\in\mathcal{M}_{\ge0}^{\mathcal{U}_{t_1:t_2}}$ such that $\mu(\hat{x}_i[k])c_{k,i}=\varphi_{k,i}$ and $\mu(\hat{r}_i[k])b_{k,i}=\omega_{k,i}$. In other words, $\mu(\hat{x}_i[k])c_{k,i}$ (resp., $\mu(\hat{r}_i[k])b_{k,i}$) indicates the cost spent on collecting the measurement of $x_i[k]$ (resp., $r_i[k]$). Given a measurement selection $\mu\in\mathbb{Z}_{\ge0}^{t_1:t_2}$, we can also denote the pmfs of $\hat{x}_i[k]$ and $\hat{r}_i[k]$ as $p(\hat{x}_i[k]|\theta,\mu(\hat{x}_i[k]))$ and $p(\hat{r}_i[k]|\theta,\mu(\hat{r}_i[k]))$, respectively, where we drop the dependencies of the pmfs on $c_{k,i}$ and $b_{k,i}$ for notational simplicity.

To proceed, we consider the scenario where measurements can only be collected under a budget constraint given by $B\in\mathbb{R}_{\ge0}$. Using the above notations, the budget constraint can be expressed as
\begin{equation}
\label{eqn:budget constraint PEMS}
\sum_{\hat{x}_i[k]\in\mathcal{U}_{t_1:t_2}}c_{k,i}\mu(\hat{x}_i[k])+\sum_{\hat{r}_i[k]\in\mathcal{U}_{t_1:t_2}}b_{k,i}\mu(\hat{r}_i[k])\le B.
\end{equation}
We then consider estimators of $\theta=[\beta\ \delta]^T$ based on any given measurement selection $\mu\in\mathcal{M}$. Considering any $\mu\in\mathcal{M}$, we denote 
\begin{equation}
\label{eqn:def of Uxi Uri}
\mathcal{U}^{\lambda}_i\triangleq\{k:\mu(\hat{\lambda}_i[k])>0,k\in\{t_1,\dots,t_2\}\},
\end{equation}
for all $i\in\mathcal{V}$ and for all $\lambda\in\{x,r\}$. For all $i\in\mathcal{V}$ and for all $\lambda\in\{x,r\}$ with $\mathcal{U}_i^{\lambda}\neq\emptyset$, denote $y(\mathcal{U}_i^{\lambda})\triangleq\begin{bmatrix}\hat{\lambda}_i[k_1] & \cdots \hat{\lambda}_i[k_{|\mathcal{U}_i^{\lambda}|}]\end{bmatrix}^T$, where $\mathcal{U}_i^{\lambda}=\{k_1,\dots,k_{|\mathcal{U}_i^{\lambda}|}\}$. Letting 
\begin{equation*}
\label{eqn:def of Ux Ur}
\mathcal{U}_{\lambda}\triangleq\{i:\mathcal{U}_i^{\lambda}\neq\emptyset,i\in\mathcal{V}\}\ \forall\lambda\in\{x,r\},
\end{equation*}
we denote the measurement vector indicated by $\mu\in\mathcal{M}$ as
\begin{equation}
\label{eqn:def of y mu}
y(\mu)\triangleq\begin{bmatrix}(y(\mathcal{U}_{i_1}^x))^T & \cdots & (y(\mathcal{U}_{i_{|\mathcal{U}_x|}}^x))^T & (y(\mathcal{U}_{j_1}^r))^T & \cdots & (y(\mathcal{U}_{j_{|\mathcal{U}_r|}}^r))^T \end{bmatrix}^T,
\end{equation}
where $\mathcal{U}_x=\{i_1,\dots,i_{|\mathcal{U}_x|}\}$ and $\mathcal{U}_r=\{j_1,\dots,j_{|\mathcal{U}_r|}\}$. Note that $\hat{x}_i[k]$ and $\hat{r}_i[k]$ are (discrete) random variables with pmfs $p(\hat{x}_i[k]|\theta,\mu(\hat{x}_i[k]))$ and $p(\hat{r}_i[k]|\theta,\mu(\hat{r}_i[k]))$, respectively. We then see from Eq.~\eqref{eqn:def of y mu} that $y(\mu)$ is a random vector whose pmf is denoted as $p(y(\mu)|\theta,\mu)$. Similarly, the pmf of $y(\mathcal{U}_i^x)$ (resp., $y(\mathcal{U}_i^r)$) is denoted as $p(y(\mathcal{U}_i^x)|\theta,\mu)$ (resp., $p(y(\mathcal{U}_i^r)|\theta,\mu)$). Given $t_1,t_2\in\mathbb{Z}_{\ge1}$ with $t_2\ge t_1$, we make the following assumption on measurements $\hat{x}_i[k]$ and $\hat{r}_i[k]$.

\begin{assumption}
\label{ass:white and independent noise}
For any $i\in\mathcal{V}$ and for any $k_1,k_2\in\{t_1,\dots,t_2\}$ ($k_1\neq k_2$), $\hat{x}_i[k_1]$, $\hat{x}_i[k_2]$, $\hat{r}_i[k_1]$ and $\hat{r}_i[k_2]$ are independent of each other. Moreover, for any $i,j\in\mathcal{V}$ ($i\neq j$) and for any $k_1,k_2\in\{t_1,\dots,t_2\}$, $\hat{x}_i[k_1]$ and $\hat{x}_j[k_2]$ are independent, and $\hat{x}_i[k_1]$ and $\hat{r}_j[k_2]$ are independent.
\end{assumption}

The above assumption ensures that measurements from different nodes or from different time steps are independent, and the measurements of $x_i[k]$ and $r_i[k]$ are also independent. It then follows from Eq.~\eqref{eqn:def of y mu} that the pmf of $y(\mu)$ can be written as 
\begin{equation}
\label{eqn:pmf of y mu}
p(y(\mu)|\theta,\mu)=\prod_{i\in\mathcal{U}_x}p(y(\mathcal{U}_i^x)|\theta,\mu)\cdot\prod_{j\in\mathcal{U}_r}p(y(\mathcal{U}_j^r)|\theta,\mu),
\end{equation}
where we can further write $p(y(\mathcal{U}_i^x)|\theta,\mu)=\prod_{k\in\mathcal{U}_i^x}p(\hat{x}_i[k]|\theta,\mu(\hat{x}_i[k]))$ for all $i\in\mathcal{U}_x$, and $p(y(\mathcal{U}_j^r)|\theta,\mu)=\prod_{k\in\mathcal{U}_j^r}p(\hat{r}_j[k]|\theta,\mu(\hat{r}_j[k]))$ for all $j\in\mathcal{U}_r$.

In order to quantify the performance (e.g., precision) of estimators of $\theta$ based on $\mu$, we use the Bayesian Cramer-Rao Lower Bound (BCRLB) (e.g., \cite{van2004optimum}) associated with $\mu$.  In the following, we introduce the BCRLB, and explain why we choose it as a performance metric. First, given any measurement $\mu\in\mathcal{M}$, let $F_{\theta}(\mu)$ be the corresponding Fisher information matrix defined as
\begin{equation}
\label{eqn:FIM}
F_{\theta}(\mu)\triangleq-\mathbb{E}\begin{bmatrix}\frac{\partial^2 \ln p(y(\mu)|\theta,\mu)}{{\partial \beta^2}} & \frac{\partial^2 \ln p(y(\mu)|\theta,\mu)}{{\partial\beta}{\partial\delta}}\\ \frac{\partial^2 \ln p(y(\mu)|\theta,\mu)}{{\partial\delta}{\partial\beta}} & \frac{\partial^2 \ln p(y(\mu)|\theta,\mu)}{{\partial\delta^2}}\end{bmatrix}
\end{equation}
with the expectation $\mathbb{E}[\cdot]$ taken with respect to $p(y(\mu)|\theta,\mu)$. Under Assumption~$\ref{ass:white and independent noise}$ and some regularity conditions on the pmfs of $\hat{x}_i[k]$ and $\hat{r}_i[k]$, Eq.~\eqref{eqn:FIM} can be written as the following (e.g., \cite{kay1993fundamentals}):
\begin{equation}
\label{eqn:FIM 2nd exp}
F_{\theta}(\mu)=\sum_{\lambda\in\{x,r\}}\sum_{i\in\mathcal{U}_{\lambda}}\sum_{k\in\mathcal{U}_i^{\lambda}}\mathbb{E}\Big[\frac{\partial \ln p(\hat{\lambda}_i[k]|\theta,\mu(\hat{\lambda}_i[k]))}{{\partial \theta}}\big(\frac{\partial \ln p(\hat{\lambda}_i[k]|\theta,\mu(\hat{\lambda}_i[k]))}{{\partial \theta}}\big)^T\Big].
\end{equation}
Consider any estimator $\hat{\theta}(\mu)$ of $\theta$ based on a measurement selection $\mu\in\mathcal{M}$, and assume that we have a prior pdf of $\theta=[\beta\ \delta]^T$, denoted as $p(\theta)$. Under some regularity conditions on the pmfs of $\hat{x}_i[k]$ and $\hat{r}_i[k]$, and $p(\theta)$, we have (e.g., \cite{van2004detection,van2004optimum}):
\begin{equation}
\label{eqn:ineq of BCRLB}
R_{\hat{\theta}(\mu)}=\mathbb{E}[(\hat{\theta}(\mu)-\theta)(\hat{\theta}(\mu)-\theta)^T]\succeq \bar{C}(\mu),
\end{equation}
where $R_{\hat{\theta}(\mu)}\in\mathbb{R}^{2\times 2}$ is the error covariance of the estimator $\hat{\theta}(\mu)$, the expectation $\mathbb{E}[\cdot]$ is taken with respect to $p(y(\mu)|\theta,\mu)p(\theta)$, and $\bar{C}(\mu)\in\mathbb{R}^{2\times 2}$ is the BCRLB associated with the measurement selection $\mu$. The BCRLB is defined as (e.g., \cite{van2004detection,van2004optimum})
\begin{equation}
\label{eqn:def of C mu}
\bar{C}(\mu)\triangleq (\mathbb{E}_{\theta}[F_{\theta}(\mu)]+F_p)^{-1},
\end{equation}
where $\mathbb{E}_{\theta}[\cdot]$ denotes the expectation taken with respect to $p(\theta)$, $F_{\theta}(\mu)$ is given by Eq.~\eqref{eqn:FIM}, and $F_p\in\mathbb{R}^{2\times 2}$ encodes the prior knowledge of $\theta$ as
\begin{equation}
\label{eqn:def of F_p}
F_p=-\mathbb{E}_{\theta}\begin{bmatrix}\frac{\partial^2 \ln p(\theta)}{{\partial \beta^2}} & \frac{\partial^2 \ln p(\theta)}{{\partial\beta}{\partial\delta}}\\ \frac{\partial^2 \ln p(\theta)}{{\partial\delta}{\partial\beta}} & \frac{\partial^2 \ln p(\theta)}{{\partial\delta^2}}\end{bmatrix}=\mathbb{E}_{\theta}\Big[\frac{\partial \ln p(\theta)}{\partial \theta}\big(\frac{\partial \ln p(\theta)}{\partial \theta}\big)^T\Big]\succeq\mathbf{0},
\end{equation}
where the second equality holds under some regularity conditions on $p(\theta)$ \cite{van2004detection}. 

Thus, the above arguments motivate us to consider (functions of) $\bar{C}(\cdot)$ as optimization metrics in the measurement selection problem studied in this section, in order to characterize the estimation performance corresponding to a measurement selection $\mu\in\mathcal{M}$. In particular, we will consider $\Tr(\bar{C}(\cdot))$ and $\ln\det(\bar{C}(\cdot))$, which are widely used criteria in parameter estimation (e.g., \cite{joshi2008sensor}), and are also known as the Bayesian A-optimality and D-optimality criteria respectively in the context of experimental design (e.g., \cite{pukelsheim2006optimal}). First, considering the optimization metric $\Tr(\bar{C}(\cdot))$, we see from the above arguments that \eqref{eqn:ineq of BCRLB} directly implies $\Tr(R_{\hat{\theta}(\mu)})\ge\Tr(\bar{C}(\mu))$ for all estimators $\hat{\theta}(\mu)$ of $\theta$ and for all $\mu\in\mathcal{M}$. Therefore, a measurement selection $\mu^{\star}$ that minimizes $\Tr(\bar{C}(\mu))$ potentially yields a lower value of $\Tr(R_{\hat{\theta}(\mu)})$ for an estimator $\hat{\theta}(\mu)$ of $\theta$. Furthermore, there may exist an estimator $\hat{\theta}(\mu)$ that achieves the BCRLB (e.g., \cite{van2004detection}), i.e., $\Tr(\bar{C}(\mu))$ provides the minimum value of $\Tr(R_{\hat{\theta}(\mu)})$ that can be possibly achieved by any estimator $\hat{\theta}(\mu)$ of $\theta$, given a measurement selection $\mu$. Similar arguments hold for $\ln\det(\bar{C}(\cdot))$. To proceed, denoting 
\begin{equation}
\label{eqn:def of fa and fd}
f_a(\mu)\triangleq\Tr(\bar{C}(\mu))\ \text{and}\ f_d(\mu)\triangleq\ln\det(\bar{C}(\mu))\ \forall\mu\in\mathcal{M},
\end{equation}
we define the Parameter Estimation Measurement Selection (PEMS) problem.
\begin{problem}
\label{pro:PEMS}
Consider a discrete-time SIR model given by Eq.~\eqref{eqn:SIR single node} with a directed graph $\mathcal{G}=\{\mathcal{V},\mathcal{E}\}$, a weight matrix $A\in\mathbb{R}^{n\times n}$, a sampling parameter $h\in\mathbb{R}_{\ge0}$, and an initial condition $l=[((s[0])^T\ (x[0])^T\ (r[0])^T]^T$. Moreover, consider time steps $t_1,t_2\in\mathbb{Z}_{\ge1}$ with $t_2\ge t_1$; a set $\mathcal{C}_{k,i}=\{\zeta c_{k,i}:\zeta\in(\{0\}\cup[\zeta_i])\}$ with $c_{k,i}\in\mathbb{R}_{\ge0}$ and $\zeta_i\in\mathbb{Z}_{\ge1}$, for all $i\in\mathcal{V}$ and for all $k\in\{t_1,\dots,t_2\}$; a set $\mathcal{B}_{k,i}=\{\eta b_{k,i}:\eta\in(\{0\}\cup[\eta_i])\}$ with $b_{k,i}\in\mathbb{R}_{\ge0}$ and $\eta_i\in\mathbb{Z}_{\ge1}$, for all $i\in\mathcal{V}$ and for all $k\in\{t_1,\dots,t_2\}$; a budget $B\in\mathbb{R}_{\ge0}$; and a prior pdf $p(\theta)$. Suppose $\hat{x}_i[k]$ (resp., $\hat{r}_i[k]$)  is given by a pmf $p(\hat{x}_i[k]|\theta,\varphi_{k,i})$ (resp., $p(\hat{r}_i[k]|\theta,\omega_{k,i})$), where $\varphi_{k,i}\in\mathcal{C}_{k,i}$ (resp., $\omega_{k,i}\in\mathcal{B}_{k,i}$). The PEMS problem is to find a measurement selection $\mu$ that solves
\begin{equation}
\label{eqn:PEMS obj}
\begin{split}
&\mathop{\min}_{\mu\in\mathcal{M}}f(\mu)\\
s.t.&\ \sum_{\hat{x}_i[k]\in\mathcal{U}_{t_1:t_2}}c_{k,i}\mu(\hat{x}_i[k])+\sum_{\hat{r}_i[k]\in\mathcal{U}_{t_1:t_2}}b_{k,i}\mu(\hat{r}_i[k])\le B,
\end{split}
\end{equation}
where $\mathcal{M}$ is defined in  Eq.~\eqref{eqn:def of M}, $f(\cdot)$ can be either of $f_a(\cdot)$ or $f_d(\cdot)$ with $f_a(\cdot)$ and $f_d(\cdot)$ defined in Eq.~\eqref{eqn:def of fa and fd}, $\mathcal{U}_{t_1:t_2}$ is defined in Eq.~\eqref{eqn:def of U t_1 t_2}, and $\bar{C}(\mu)$ is given by Eq.~\eqref{eqn:def of C mu}.
\end{problem}

Note that $F_p\succeq\mathbf{0}$ from \eqref{eqn:def of F_p}, and $f_a(\mathbf{0})=\Tr(\bar{C}(\mathbf{0}))=\Tr((F_p)^{-1})$ and $f_d(\mathbf{0})=\ln\det(\bar{C}(\mathbf{0}))=\ln\det((F_p)^{-1})$ from Eq.~\eqref{eqn:def of C mu}. We further assume that $F_p\succ\mathbf{0}$ in the sequel, which implies $f(\mu)>0$ for all $\mu\in\mathcal{M}$.

\subsection{Solving the PEMS Problem}\label{sec:PEMS}
In this section, we consider a measurement model with specific pmfs of $\hat{x}_i[k]$ and $\hat{r}_i[k]$ (e.g., \cite{bendavid2020covid} and \cite{hota2020closed}). Nonetheless, our analysis can potentially be extended to other measurement models.
\subsubsection{Pmfs of Measurements $\hat{x}_i[k]$ and $\hat{r}_i[k]$}\label{sec:pmfs of measurements}
Consider any $i\in\mathcal{V}$ and any $k\in\{t_1,\dots,t_2\}$. Assume that the total population of node $i$ is fixed over time and is denoted as $N_i\in\mathbb{Z}_{\ge1}$. Given any measurement selection $\mu\in\mathcal{M}$ with $\mathcal{M}$ defined in Eq.~\eqref{eqn:def of M}, we recall from Section~$\ref{sec:PEMS problem formulation}$ that $\mu(\hat{x}_i[k])c_{k,i}$ can be viewed as the cost of performing virus tests on $\mu(\hat{x}_i[k])N_i^x$ randomly and uniformly sampled individuals in the population of node $i\in\mathcal{V}$, where $\mu(\hat{x}_i[k])\in(\{0\}\cup[\zeta_i])$ (with $\zeta_i\in\mathbb{Z}_{\ge1}$), $c_{k,i}\in\mathbb{R}_{\ge0}$ and $N_i^x\in\mathbb{Z}_{\ge1}$ with $\zeta_i N_i^x\le N_i$. Note that $x_i[k]$ is the proportion of population at node $i$ and at time $k$ that is infected, and $x_i[k]\in[0,1)$ under Assumptions~$\ref{ass:initial condition}$-$\ref{ass:model parameters}$ as shown by Lemma~$\ref{lemma:propagate of initial condition}$. Thus, a  randomly and uniformly sampled individual in the population at node $i$ and at time $k$ will be an infected individual (at time $k$) with probability $x_i[k]$, and will be a non-infected (i.e., susceptible or recovered) individual with probability $1-x_i[k]$. Supposing the tests are accurate,\footnote{Here, ``accurate" means that an infected individual (at time $k$) will be tested positive with probability one, and an individual that is not infected will be tested negative with probability one.} we see from the above arguments that the obtained number of individuals that are tested positive, i.e., $N_i\hat{x}_i[k]$, is a binomial random variable with parameters $N_i^x\mu(\hat{x}_i[k])\in\mathbb{Z}_{\ge1}$ and $x_i[k]\in[0,1)$. Thus, for any $i\in\mathcal{V}$ and for any $k\in\{t_1,\dots,t_2\}$, the pmf of $\hat{x}_i[k]$ is
\begin{equation}
\label{eqn:pmf of xi}
p(\hat{x}_i[k]=x|\theta,\mu(\hat{x}_i[k]))={N_i^x\mu(\hat{x}_i[k])\choose N_ix}(x_i[k])^{N_i x}(1-x_i[k])^{N_i^x\mu(\hat{x}_i[k])-N_ix},
\end{equation}
where $x\in\{0,\frac{1}{N_i},\frac{2}{N_i},\dots,\frac{N_i^x\mu(\hat{x}_i[k])}{N_i}\}$ with $x\in[0,1]$ since $N_i^x\zeta_i\le N_i$. Note that we do not define the pmf of measurement $\hat{x}_i[k]$ when $N_i^x\mu(\hat{x}_i[k])=0$, i.e., when $\mu(\hat{x}_i[k])=0$, since $\mu(\hat{x}_i[k])=0$ indicates no measurement is collected for state $x_i[k]$. Also note that when $x_i[k]=0$, the pmf of $\hat{x}_i[k]$ given in Eq.~\eqref{eqn:pmf of xi} reduces to $p(\hat{x}_i[k]=0|\theta,\mu(\hat{x}_i[k]))=1$. Moreover, since the weight matrix $A\in\mathbb{R}^{n\times n}$ and the sampling parameter $h\in\mathbb{R}_{\ge0}$ are assumed to be given, we see that given $\theta=[\beta\ \delta]^T$ and initial condition $l=[(s[0])^T\ (x[0])^T\ (r[0])^T]^T$, $x_i[k]$ can be obtained using Eq.~\eqref{eqn:SIR I} for all $i\in\mathcal{V}$ and for all $k\in\{t_1,\dots,t_2\}$, where we can view $x_i[k]$ as a function in the unknown parameter $\theta$. In other words, given $l$, $\theta$, $\mu(\hat{x}_i[k])$, $N_i^x$ and $N_i$, one can obtain the right-hand side of Eq.~\eqref{eqn:pmf of xi}. Again, we only explicitly express the dependency of the pmf of $\hat{x}_i[k]$ on $\theta$ and $\mu(\hat{x}_i[k])$ in Eq.~\eqref{eqn:pmf of xi}.  Following similar arguments to those above, we assume that for any $i\in\mathcal{V}$ and for any $k\in\{t_1,\dots,t_2\}$, measurement $\hat{r}_i[k]$ has the following pmf:
\begin{equation}
\label{eqn:pmf of ri}
p(\hat{r}_i[k]=r|\theta,\mu(\hat{r}_i[k]))={N_i^r\mu(\hat{r}_i[k])\choose N_i r}(r_i[k])^{N_i r}(1-r_i[k])^{N_i^r\mu(\hat{r}_i[k])-N_i r},
\end{equation}
where $r\in\{0,\frac{1}{N_i},\frac{2}{N_i},\dots,\frac{N_i^r\mu(\hat{r}_i[k])}{N_i}\}$ with $r\in[0,1]$, $\mu(\hat{r}_i[k])\in\{0,\dots,\eta_i\}$, $N_i^r\in\mathbb{Z}_{\ge1}$ and $N_i^r\mu(\hat{r}_i[k])\le N_i$. Similarly, we note that the pmf of $\hat{r}_i[k]$ given in Eq.~\eqref{eqn:pmf of ri} reduces to $p(\hat{r}_i[k]=0|\theta,\mu(\hat{r}_i[k]))=1$ when $r_i[k]=0$. Considering any measurement selection $\mu\in\mathcal{M}$ and any measurement $\hat{\lambda}_i[k]\in\mathcal{U}_{t_1:t_2}$, where $\lambda\in\{x,r\}$ and $\mathcal{U}_{t_1:t_2}$ is defined in Eq.~\eqref{eqn:def of U t_1 t_2}, we have the following:
\begin{align}\nonumber
&\mathbb{E}\Big[\frac{\partial \ln p(\hat{\lambda}_i[k]|\theta,\mu(\hat{\lambda}_i[k]))}{{\partial \theta}}\big(\frac{\partial \ln p(\hat{\lambda}_i[k]|\theta,\mu(\lambda_i[k]))}{{\partial \theta}}\big)^T\Big]\\
=&\mathbb{E}\Big[\big(\frac{\partial \ln p(\hat{\lambda}_i[k]|\theta,\mu(\hat{\lambda}_i[k]))}{{\partial {\lambda}_i[k]}}\big)^2\cdot \frac{\partial {\lambda}_i[k]}{\partial \theta}\big(\frac{\partial {\lambda}_i[k]}{\partial \theta}\big)^T\Big]\label{eqn:ln pmf xi 1}\\
=&\frac{N_i^{\lambda}\mu(\hat{\lambda}_i[k])}{{\lambda}_i[k](1-{\lambda}_i[k])}\cdot\frac{\partial \lambda_i[k]}{\partial \theta}\big(\frac{\partial \lambda_i[k]}{\partial \theta}\big)^T,\label{eqn:ln pmf xi 2}
\end{align}
where the expectation $\mathbb{E}[\cdot]$ is taken with respect to $p(\hat{\lambda}_i[k]|\theta,\mu(\hat{\lambda}_i[k]))$, and ${\lambda}_i[k]\in[0,1)$. To obtain \eqref{eqn:ln pmf xi 1}, we note the form of $\ln p(\hat{\lambda}_i[k]|\theta,\mu(\hat{\lambda}_i[k]))$ in Eq.~\eqref{eqn:pmf of xi} (or Eq.~\eqref{eqn:pmf of ri}), and use the chain rule. Moreover, one can obtain \eqref{eqn:ln pmf xi 2} from the fact that $\hat{\lambda}_i[k]$ is a binomial random variable. Noting that the pmf of $\hat{\lambda}_i[k]$ reduces to $p(\hat{\lambda}_i[k]=0|\theta,\mu(\hat{\lambda}_i[k]))=1$ if ${\lambda}_i[k]=0$ as argued above, we let the right-hand side of \eqref{eqn:ln pmf xi 2} be zero if ${\lambda}_i[k]=0$.

\subsubsection{Complexity of the PEMS Problem}\label{sec:complexity of pems}
Under the measurement model described above, we show that the PEMS problem is also NP-hard, i.e., there exist instances of the PEMS problem that cannot be solved optimally by any polynomial-time algorithm (if P $\neq$ NP). The proof of the following result is included in Section~$\ref{sec:proof of PEMS NP-hard}$ in the Appendix.
\begin{theorem}
\label{thm:PEMS NP-hard}
The PEMS problem is NP-hard.
\end{theorem}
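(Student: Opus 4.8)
The plan is to establish NP-hardness by a polynomial-time reduction from a known NP-hard problem, mimicking the structure one would expect the PIMS reduction to use. A natural candidate source problem is the \textsc{Knapsack} problem (or its decision version), since the PEMS problem already comes packaged with a knapsack-type budget constraint in \eqref{eqn:PEMS obj}; alternatively one could reduce from a covering-type problem such as \textsc{Vertex Cover} or \textsc{Set Cover} if the combinatorial difficulty is meant to live in the graph structure rather than in the budget arithmetic. Given the contributions paragraph explicitly frames PEMS as ``maximizing a set function subject to a knapsack constraint,'' I would bet the intended reduction is from (decision) \textsc{Knapsack}, so I will sketch that route.

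First I would recall the decision version of \textsc{Knapsack}: given items with positive integer values $v_1,\dots,v_m$ and weights $w_1,\dots,w_m$, a capacity $W$, and a target $V$, decide whether some subset $S$ has $\sum_{j\in S} w_j \le W$ and $\sum_{j\in S} v_j \ge V$. Second, I would construct a PEMS instance in which the graph, the initial condition $l$, the prior $F_p$, the time window $[t_1,t_2]$, and the cost data $c_{k,i}, b_{k,i}$ are all chosen so that selecting a ``test batch'' at a designated node/time corresponds to including an item in $S$: the cost $c_{k,i}$ plays the role of the weight $w_j$, the budget $B$ plays the role of the capacity $W$, and the Fisher-information contribution of that measurement — which by \eqref{eqn:FIM 2nd exp} and \eqref{eqn:ln pmf xi 2} is the rank-one term $\frac{N_i^{\lambda}\mu(\hat{\lambda}_i[k])}{\lambda_i[k](1-\lambda_i[k])}\,\frac{\partial\lambda_i[k]}{\partial\theta}\big(\frac{\partial\lambda_i[k]}{\partial\theta}\big)^{\!\top}$ — plays the role of the value $v_j$. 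The key design trick is to make these Fisher-information directions $\frac{\partial\lambda_i[k]}{\partial\theta}$ all (nearly) collinear, or to work in a regime where $\mathbb{E}_\theta[F_\theta(\mu)]+F_p$ is essentially scalar, so that $f_a(\mu)=\Tr(\bar C(\mu))$ or $f_d(\mu)=\ln\det(\bar C(\mu))$ becomes a monotone function of a single scalar $\sum_{\text{selected}} (\text{value})$; then thresholding $f(\mu)$ at an appropriate level is exactly the \textsc{Knapsack} target inequality. One must verify $F_p\succ\mathbf 0$ can be arranged and that all constructed quantities (populations $N_i$, sample sizes $N_i^{\lambda}$, the $x_i[k],r_i[k]$ trajectories) are consistent with Assumptions~\ref{ass:initial condition}--\ref{ass:white and independent noise} and are polynomially bounded in the input size, and that $x_i[k]\in(0,1)$ so the Fisher terms are well-defined and nonzero.

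The main obstacle, and the part that needs genuine care rather than bookkeeping, is controlling the SIR dynamics \eqref{eqn:SIR I}: the infected proportions $x_i[k]$ are determined by $\beta,\delta,l,A$ and are generally transcendental functions of these, so one cannot freely prescribe the Fisher-information weights $\frac{N_i^{\lambda}\mu}{\lambda_i[k](1-\lambda_i[k])}$ and the directions $\frac{\partial\lambda_i[k]}{\partial\theta}$ independently. The fix is to gain the needed degrees of freedom from the \emph{cost} parameters and the sample-size parameters $N_i^x, N_i^r, N_i$ (and possibly from using a network of isolated nodes, $\mathcal E$ consisting only of self-loops, so each $x_i[k]$ evolves by a simple one-dimensional recursion and is computable in closed enough form), while choosing $t_1=t_2$ or a one-step window to keep the dependence on $\theta$ transparent. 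I would also need to argue the reduction is genuinely polynomial time: evaluating $f(\mu)$ requires only iterating \eqref{eqn:SIR I} for $O(t_2-t_1)$ steps and inverting a $2\times 2$ matrix, so the real concern is ensuring the numbers produced (in particular $V$, $W$, and the induced threshold on $f$) have polynomial bit-length — handled by keeping $t_2-t_1$, the $N_i$'s, and the cost magnitudes polynomial in $m$ and $\max_j(v_j,w_j)$. Finally I would close the reduction by showing the equivalence in both directions: a \textsc{Knapsack} solution of value $\ge V$ within capacity $W$ yields a feasible $\mu$ with $f(\mu)$ below the threshold, and conversely any feasible $\mu$ respecting the budget decodes to a subset $S$ of items whose total weight is $\le W$ and, because $f$ is strictly monotone in the accumulated scalar information, whose total value is $\ge V$.
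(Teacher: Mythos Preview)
Your proposal is correct and follows essentially the same route as the paper: a polynomial reduction from \textsc{Knapsack}, with isolated nodes, a single measurement time step, measurement costs encoding item weights, the budget $B$ encoding the capacity, and the sample-size parameters $N_i^r$ encoding item values so that the objective collapses to a monotone scalar. The paper executes exactly the tricks you anticipated: it takes $A=\mathbf{0}$ (no edges at all), $t_1=t_2=1$, prices the $\hat{x}_i$ measurements above the budget so only $\hat{r}_i[1]$ matter, and observes that then $r_i[1]=h\delta x_i[0]$ depends on $\delta$ alone, making every Fisher direction $\partial r_i[1]/\partial\theta$ \emph{exactly} collinear (equal to $[0,\tfrac12]^{\top}$); with an independent Beta prior on $\beta,\delta$ the matrix $\mathbb{E}_\theta[F_\theta(\mu)]+F_p$ is diagonal with only the $(2,2)$ entry depending on $\mu$, so both $f_a$ and $f_d$ reduce to maximizing $\sum_{i\in\mathrm{supp}(\mu)} N_i^r$.
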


\subsubsection{An Equivalent Formulation for the PEMS Problem}
Theorem~$\ref{thm:PEMS NP-hard}$ motivates us to consider approximation algorithms for solving the PEMS problem. To begin with, we note that the objective function in the PEMS problem can be viewed as a function defined over an integer lattice. We then have $f_a:\mathcal{M}\to\mathbb{R}_{\ge0}$ and $f_d:\mathcal{M}\to\mathbb{R}_{\ge0}$, where $\mathcal{M}$ is defined in Eq.~\eqref{eqn:def of M}. First, considering $f_a:\mathcal{M}\to\mathbb{R}_{\ge0}$, we will define a set function $f_{Pa}:2^{\bar{M}}\to\mathbb{R}_{\ge0}$, where $\bar{\mathcal{M}}$ is a set constructed as 
\begin{equation}
\label{eqn:def of M bar}
\bar{\mathcal{M}}\triangleq\{(\hat{x}_i[k],l_1):i\in\mathcal{V},k\in\{t_1,\dots,t_2\},l_1\in[\zeta_i]\}\cup\{(\hat{r}_i[k],l_2):i\in\mathcal{V},k\in\{t_1,\dots,t_2\},l_2\in[\eta_i]\}.
\end{equation}
In other words, for any $i\in\mathcal{V}$ and for any $k\in\{t_1,\dots,t_2\}$, we associate elements $(\hat{x}_i[k],1),\dots,(\hat{x}_i[k],\zeta_i)$ (resp., $(\hat{r}_i[k],1),\dots,(\hat{r}_i[k],\eta_i)$) in set $\bar{\mathcal{M}}$ to measurement $\hat{x}_i[k]$ (resp., $\hat{r}_i[k]$). The set function $f_{Pa}(\cdot)$ is then defined as
\begin{equation}
f_{Pa}(\mathcal{Y})\triangleq f_a(\mathbf{0})-f_a(\mu_{\mathcal{Y}})=\Tr(\bar{C}(\mathbf{0}))-\Tr(\bar{C}(\mu_{\mathcal{Y}}))\ \forall \mathcal{Y}\subseteq\bar{\mathcal{M}},\label{eqn:def of f_Pa}
\end{equation}
where for any $\mathcal{Y}\subseteq\bar{\mathcal{M}}$, we define $\mu_{\mathcal{Y}}\in\mathcal{M}$ such that $\mu_{\mathcal{Y}}(\hat{x}_i[k])=|\{(\hat{x}_i[k],l_1):(\hat{x}_i[k],l_1)\in\mathcal{Y}\}|$ and $\mu_{\mathcal{Y}}(\hat{r}_i[k])=|\{(\hat{r}_i[k],l_2):(\hat{r}_i[k],l_2)\in\mathcal{Y}\}|$ for all $i\in\mathcal{V}$ and for all $k\in\{t_1,\dots,t_2\}$. In other words, $\mu_{\mathcal{Y}}(\hat{x}_i[k])$ (resp., $\mu_{\mathcal{Y}}(\hat{r}_i[k])$) is set to be the number of elements in $\mathcal{Y}$ that correspond to the measurement $\hat{x}_i[k]$ (resp., $\hat{r}_i[k]$). Also note that $f_{Pa}(\emptyset)=0$. Following the arguments leading to \eqref{eqn:ln pmf xi 2}, we define
\begin{equation}
\label{eqn:def of H_y}
H_y\triangleq
\begin{cases}
&\mathbb{E}_{\theta}\big[\frac{N_i^x}{x_i[k](1-x_i[k])}\frac{\partial x_i[k]}{\partial \theta}\big(\frac{\partial x_i[k]}{\partial \theta}\big)^T\big]\ \text{if}\ y=(\hat{x}_i[k],l_1)\\
&\mathbb{E}_{\theta}\big[\frac{N_i^r}{r_i[k](1-r_i[k])}\frac{\partial r_i[k]}{\partial \theta}\big(\frac{\partial r_i[k]}{\partial \theta}\big)^T\big]\ \text{if}\ y=(\hat{r}_i[k],l_2)
\end{cases}\ \forall y\in\bar{\mathcal{M}},
\end{equation}
where $x_i[k],r_i[k]\in[0,1)$, $i\in\mathcal{V}$, $k\in\{t_1,\dots.t_2\}$, $l_1\in[\zeta_i]$, $l_2\in[\eta_i]$, and the expectation $\mathbb{E}_{\theta}[\cdot]$ is taken with respect to the prior pdf $p(\theta)$. Given any $\theta=[\beta\ \delta]^T$, we see from the arguments for \eqref{eqn:ln pmf xi 2} that $\frac{N_i^x}{x_i[k](1-x_i[k])}\frac{\partial x_i[k]}{\partial \theta}\big(\frac{\partial x_i[k]}{\partial \theta}\big)^T\succeq\mathbf{0}$. Moreover, one can show that  $\mathbb{E}_{\theta}\big[\frac{N_i^x}{x_i[k](1-x_i[k])}\frac{\partial x_i[k]}{\partial \theta}\big(\frac{\partial x_i[k]}{\partial \theta}\big)^T\big]\succeq\mathbf{0}$. Similarly, one can obtain $\mathbb{E}_{\theta}\big[\frac{N_i^r}{r_i[k](1-r_i[k])}\frac{\partial r_i[k]}{\partial \theta}\big(\frac{\partial r_i[k]}{\partial \theta}\big)^T\big]\succeq\mathbf{0}$, which implies $H_y\succeq\mathbf{0}$ for all $y\in\bar{\mathcal{M}}$. Now, suppose the pmfs of $\hat{x}_i[k]$ and $\hat{r}_i[k]$ are given by Eq.~\eqref{eqn:pmf of xi} and Eq.~\eqref{eqn:pmf of ri}, respectively. Recall from Eq.~\eqref{eqn:def of C mu} that $\Tr(\bar{C}(\mu))=\Tr((\mathbb{E}_{\theta}[F_{\theta}(\mu)]+F_p)^{-1})$ for all $\mu\in\mathcal{M}$, where $F_p$ and $F_{\theta}(\mu)$ are given by \eqref{eqn:def of F_p} and \eqref{eqn:FIM 2nd exp}, respectively. Supposing Assumption~$\ref{ass:white and independent noise}$ holds, for all $\mathcal{Y}\subseteq\bar{\mathcal{M}}$, one can first express $F_\theta(\mu_{\mathcal{Y}})$ using \eqref{eqn:ln pmf xi 2},  and then use Eq.~\eqref{eqn:def of H_y} to obtain $\mathbb{E}_{\theta}[F_{\theta}(\mu_{\mathcal{Y}})]=\sum_{y\in\mathcal{Y}}H_y\triangleq H(\mathcal{Y})$, where $\mu_{\mathcal{Y}}$ is defined above given $\mathcal{Y}\subseteq\bar{\mathcal{M}}$. Putting the above arguments together, we have from Eq.~\eqref{eqn:def of f_Pa} the following:
\begin{equation}
\label{eqn:def of f_Pa 2nd}
f_{Pa}(\mathcal{Y})=\Tr\big((F_p)^{-1}\big)-\Tr\big((F_p+H(\mathcal{Y}))^{-1}\big)\quad \forall\mathcal{Y}\subseteq\bar{\mathcal{M}}.
\end{equation}


Next, let the cost of $(\hat{x}_i[k],l_1)$ be $c_{k,i}$, denoted as $c(\hat{x}_i[k],l_1)$, for all $(\hat{x}_i[k],l_1)\in\bar{\mathcal{M}}$, and let the cost of $(\hat{r}_i[k],l_2)$ be $b_{k,i}$, denoted as $c(\hat{r}_i[k],l_2)$, for all $(\hat{r}_i[k],l_2)\in\bar{\mathcal{M}}$, where $c_{k,i}\in\mathbb{R}_{>0}$ and $b_{k,i}\in\mathbb{R}_{>0}$ are given in the instance of the PEMS problem. Setting the cost structure of the elements in $\bar{\mathcal{M}}$ in this way, we establish an equivalence between the cost of a subset $\mathcal{Y}\subseteq\bar{\mathcal{M}}$  and the cost of $\mu_{\mathcal{Y}}\in\mathcal{M}$, where $\mu_{\mathcal{Y}}$ is defined above. Similarly, considering the objective function $f_d:\mathcal{M}\to\mathbb{R}_{\ge0}$ in the PEMS problem, we define a set function $f_{Pd}:2^{\bar{\mathcal{M}}}\to\mathbb{R}_{\ge0}$ as
\begin{equation}
f_{Pd}(\mathcal{Y})\triangleq f_d(\mathbf{0})-f_d(\mu_{\mathcal{Y}})=\ln\det(F_p+H(\mathcal{Y}))-\ln\det(F_p)\quad \forall\mathcal{Y}\subseteq\bar{\mathcal{M}},\label{eqn:def of f_Pd}
\end{equation}
where we define $\mu_{\mathcal{Y}}\in\mathcal{M}$ such that $\mu_{\mathcal{Y}}(\hat{x}_i[k])=|\{(\hat{x}_i[k],l_1):(\hat{x}_i[k],l_1)\in\mathcal{Y}\}|$ and $\mu_{\mathcal{Y}}(\hat{r}_i[k])=|\{(\hat{r}_i[k],l_2):(\hat{r}_i[k],l_2)\in\mathcal{Y}\}|$ for all $i\in\mathcal{V}$ and for all $k\in\{t_1,\dots,t_2\}$. Note that given an instance of the PEMS problem in Problem~$\ref{pro:PEMS}$, we can construct the set $\bar{\mathcal{M}}$ with the associated costs of the elements in $\bar{\mathcal{M}}$ in $O(n(t_2-t_1+1)(\zeta+\eta))$ time, where $n$ is the number of nodes in graph $\mathcal{G}=\{\mathcal{V},\mathcal{E}\}$, and $\zeta,\eta\in\mathbb{Z}_{\ge1}$ with $\zeta_i\le\zeta$ and $\eta_i\le\eta$ for all $i\in\mathcal{V}$. Assuming that $\zeta$ and $\eta$ are (fixed) constants, the construction of the set $\bar{\mathcal{M}}$ with the associated costs takes $O(n(t_2-t_1+1))$ time, which is polynomial in the parameters of the PEMS problem (Problem~\ref{pro:PEMS}). Based on the above arguments, we further consider the following problem:
\begin{equation}\tag{P}
\label{eqn:equivalent obj PEMS}
\begin{split}
&\mathop{\max}_{\mathcal{Y}\subseteq\bar{\mathcal{M}}} f_P(\mathcal{Y})\\
s.t.&\ c(\mathcal{Y})\le B,
\end{split}
\end{equation}
where $f_P(\cdot)$ can be either of $f_{Pa}(\cdot)$ or $f_{Pd}(\cdot)$ with $f_{Pa}(\cdot)$ and $f_{Pd}(\cdot)$ given by in \eqref{eqn:def of f_Pa 2nd} and \eqref{eqn:def of f_Pd}, respectively, and $c(\mathcal{Y})\triangleq\sum_{y\in\mathcal{Y}}c(y)$ for all $\mathcal{Y}\subseteq\bar{\mathcal{M}}$. By the way we construct $f_P(\cdot)$ and the costs of elements in $\bar{\mathcal{M}}$, one can verify that $\mathcal{Y}_a^{\star}\subseteq\bar{\mathcal{M}}$ (resp., $\mathcal{Y}_d^{\star}\subseteq\bar{\mathcal{M}}$) is an optimal solution to Problem~\eqref{eqn:equivalent obj PEMS} with $f_P(\cdot)=f_{Pa}(\cdot)$ (resp., $f_P(\cdot)=f_{Pd}(\cdot)$) if and only if $\mu_{\mathcal{Y}_a^{\star}}$ (resp., $\mu_{\mathcal{Y}_d^{\star}}$) defined above is an optimal solution to \eqref{eqn:PEMS obj} in Problem~$\ref{pro:PEMS}$ with $f(\cdot)=f_a(\cdot)$ (resp., $f(\cdot)=f_d(\cdot)$). Thus, given a PEMS instance we can first construct $\bar{\mathcal{M}}$ with the associated cost for each element in $\bar{\mathcal{M}}$, and then solve Problem~\eqref{eqn:equivalent obj PEMS}. 

\subsection{Greedy Algorithm for the PEMS Problem}\label{sec:greedy algorithm for PEMS}
Note that Problem~\eqref{eqn:equivalent obj PEMS} can be viewed as a problem of maximizing a set function subject to a knapsack constraint, and greedy algorithms have been proposed to solve this problem with performance guarantees when the objective function is monotone nondecreasing and submodular\footnote{A set function $g:2^{\mathcal{V}}\to\mathbb{R}$, where $\mathcal{V}=[n]$ is the ground set, is said to be monotone nondecreasing if $g(\mathcal{A})\le g(\mathcal{B})$ for all $\mathcal{A}\subseteq\mathcal{B}\subseteq\mathcal{V}$. $g(\cdot)$ is called submodular if $g(\{y\}\cup\mathcal{A})-g(\mathcal{A})\ge g(\{y\}\cup\mathcal{B})-g(\mathcal{B})$ for all $\mathcal{A}\subseteq\mathcal{B}\subseteq\mathcal{V}$ and for all $y\in\mathcal{V}\setminus\mathcal{B}$.}  (e.g., \cite{krause2005note} and \cite{streeter2009online}). Before we formally introduce the greedy algorithm for the PEMS problem, we first note from \eqref{eqn:def of H_y}-\eqref{eqn:def of f_Pd} that given a prior pdf of $\theta$ and any $\mathcal{Y}\subseteq\bar{\mathcal{M}}$, one has to take the expectation $\mathbb{E}_{\theta}[\cdot]$ in order to obtain the value of $f_P(\mathcal{Y})$. However, it is in general intractable to explicitly calculate the integration corresponding to $\mathbb{E}_{\theta}[\cdot]$. Hence, one may alternatively evaluate the value of $f_P(\mathcal{Y})$ using numerical integration with respect to $\theta=[\beta\ \delta]^T$ (e.g., \cite{stoer2013introduction}). Specifically, a typical numerical integration method (e.g., the trapezoid rule) approximates the integral of a function (over an interval) based on a summation of (weighted) function values evaluated at certain points within the integration interval, which incurs an approximation error (see e.g., \cite{stoer2013introduction}, for more details). We then see from \eqref{eqn:def of H_y}-\eqref{eqn:def of f_Pd} that in order to apply the numerical integration method described above to $f_P(\mathcal{Y})$, one has to obtain the values of $x_i[k]$, $r_i[k]$, $\frac{\partial x_i[k]}{\partial \theta}$, and $\frac{\partial r_i[k]}{\partial \theta}$ for a given $\theta$ (within the integration interval), where $i\in\mathcal{V}$ and $t_1\le k\le t_2$ with $t_1,t_2$ given in an instance of the PEMS problem. Recall that the initial conditions $s[0]$, $x[0]$ and $r[0]$ are assumed to be known. We first observe that for any given $\theta$, the values of $x_i[k]$ and $r_i[k]$ for all $i\in\mathcal{V}$ and for all $k\in\{t_1,\dots,t_2\}$ can be obtained using the recursions in \eqref{eqn:SIR single node} in $O((t_2-t_1+1)n^2)$ time. Next, noting that $\frac{\partial x_i[k]}{\partial \theta}=[\frac{\partial x_i[k]}{\partial \beta}\ \frac{\partial x_i[k]}{\partial \delta}]^T$ and $\frac{\partial r_i[k]}{\partial \theta}=[\frac{\partial r_i[k]}{\partial \beta}\ \frac{\partial r_i[k]}{\partial \delta}]^T$, we take the derivative with respect to $\beta$ on both sides of the equations in \eqref{eqn:SIR single node} and obtain
\begin{equation}
\label{eqn:SIR single node derivative}
\begin{split}
&\frac{\partial s_i[k+1]}{\partial \beta}=\frac{\partial s_i[k]}{\partial \beta}-h(\frac{\partial s_i[k]}{\partial \beta}\beta+s_i[k])(\sum_{j\in\bar{\mathcal{N}}_i}a_{ij}x_j[k])-hs_i[k]\beta(\sum_{j\in\bar{\mathcal{N}}_i}a_{ij}\frac{\partial x_j[k]}{\partial \beta}),\\
&\frac{\partial x_i[k+1]}{\partial \beta}=(1-h\delta)\frac{\partial x_i[k]}{\partial \beta}+h(\frac{\partial s_i[k]}{\partial \beta}\beta+s_i[k])(\sum_{j\in\bar{\mathcal{N}}_i}a_{ij}x_j[k])+hs_i[k]\beta(\sum_{j\in\bar{\mathcal{N}}_i}a_{ij}\frac{\partial x_j[k]}{\partial \beta}),\\
&\frac{\partial r_i[k+1]}{\partial \beta}=\frac{\partial r_i[k]}{\partial \beta}+h\delta \frac{\partial x_i[k]}{\partial \beta}.
\end{split}
\end{equation}
Considering any given $\beta$, we can then use the recursion in \eqref{eqn:SIR single node} together with the recursion in \eqref{eqn:SIR single node derivative} to obtain the values of $\frac{\partial x_i[k]}{\partial \beta}$ and $\frac{\partial r_i[k]}{\partial \beta}$ for all $i\in\mathcal{V}$ and for all  $k\in\{t_1,\dots,t_2\}$ in $O((t_2-t_1+1)n^2)$ time. Similarly, considering any given $\delta$, one can obtain the values of $\frac{\partial x_i[k]}{\partial \delta}$ and $\frac{\partial r_i[k]}{\partial \delta}$ for all $i\in\mathcal{V}$ and for all $k\in\{t_1,\dots,t_2\}$ in $O((t_2-t_1+1)n^2)$ time.

Putting the above arguments together and considering the prior pdf of $\theta$, i.e., $p(\theta)$, we see from \eqref{eqn:def of H_y}-\eqref{eqn:def of f_Pd} that for all $\mathcal{Y}\subseteq\bar{\mathcal{M}}$, an approximation of $f_P(\mathcal{Y})$, denoted as $\hat{f}_P(\mathcal{Y})$, can be obtained in $O(n_I(t_2-t_1+1)n^2)$ time, where $n_I\in\mathbb{Z}_{\ge1}$ is the number of points used for the numerical integration method with respective to $\theta$, as we described above.\footnote{We assume that $n_I$ is polynomial in the parameters of the PEMS instance.} Furthermore, in the sequel, we may assume that $\hat{f}_P(\cdot)$ satisfies $|\hat{f}_P(\mathcal{Y})-f_P(\mathcal{Y})|\le\varepsilon/2$ for all $\mathcal{Y}\subseteq\bar{\mathcal{M}}$ (with $\hat{f}_P(\emptyset)=0$), where $\varepsilon\in\mathbb{R}_{\ge0}$.\footnote{Note that $\varepsilon$ is related to the approximation error of the numerical integration method, and $\varepsilon$ will decrease if $n_I$ increases; see, e.g., \cite{stoer2013introduction}, for more details about the numerical integration method.} We are now ready to introduce the greedy algorithm given in Algorithm~$\ref{alg:greedy}$  to solve the PEMS problem, where $\hat{f}_P(\cdot)\in\{\hat{f}_{Pa}(\cdot),\hat{f}_{Pd}(\cdot)\}$ and $\hat{f}_P(\cdot)$ is the approximation of $f_P(\cdot)$ as we described above. From the definition of Algorithm~$\ref{alg:greedy}$, we see that the number of function calls of $\hat{f}_P(\cdot)$ required in the algorithm is $O(|\bar{\mathcal{M}}|^2)$, and thus the overall time complexity of Algorithm~\ref{alg:greedy} is given by $O(n_I(t_2-t_1+1)n^2|\bar{\mathcal{M}}|^2)$.

We proceed to analyze the performance of Algorithm~$\ref{alg:greedy}$ when applied to the PEMS problem. First, one can observe that $f_{Pd}(\mathcal{Y})=\ln\det(F_p+H(\mathcal{Y}))-\ln\det(F_p)$ in Problem \eqref{eqn:equivalent obj PEMS} shares a similar form to the one in \cite{summers2015submodularity}. Thus, using similar arguments to those in \cite{summers2015submodularity}, one can show that $f_{Pd}(\cdot)$ is monotone nondecreasing and submodular with $f_{Pd}(\emptyset)=0$. Noting the assumption that $|\hat{f}_{Pd}(\mathcal{Y})-f_{Pd}(\mathcal{Y})|\le\varepsilon/2$ for all $\mathcal{Y}\subseteq\bar{\mathcal{M}}$, one can show that $y^{\star}$ given by line~6 of Algorithm~\ref{alg:greedy} satisfies that $\frac{\hat{f}_{Pd}(\{y^{\star}\}\cup\mathcal{Y}_2)-\hat{f}_{Pd}(\mathcal{Y}_2)+\varepsilon}{c(y^{\star})}\ge\frac{f_{Pd}(\{y\}\cup\mathcal{Y}_2)-f_{Pd}(\mathcal{Y}_2)-\varepsilon}{c(y)}$ for all $y\in\mathcal{C}$. Similarly, one can show that $\mathop{\max}_{y\in\bar{\mathcal{M}}}f_{Pd}(y)\le f_{Pd}(\mathcal{Y}_1)+\varepsilon$, where $\mathcal{Y}_1$ is given by line~3 in Algorithm~\ref{alg:greedy}. One can then use similar arguments to those for Theorem~1 in \cite{krause2005note} and obtain the following result; the detailed proof is omitted for conciseness.
\begin{theorem}
\label{thm:greedy guarantee f_Pd}
Consider Problem \eqref{eqn:equivalent obj PEMS} with the objective function $f_{Pd}:2^{\bar{\mathcal{M}}}\to\mathbb{R}_{\ge0}$ given by \eqref{eqn:def of f_Pd}. Then Algorithm~$\ref{alg:greedy}$ yields a solution, denoted as $\mathcal{Y}_d^g$, to Problem \eqref{eqn:equivalent obj PEMS} that satisfies
\begin{equation}
\label{eqn:guarantee for greedy f_Pd}
f_{Pd}(\mathcal{Y}_d^g)\ge \frac{1}{2}(1-e^{-1})f_{Pd}(\mathcal{Y}_d^{\star})-(\frac{B}{c_{\mathop{\min}}}+\frac{3}{2})\varepsilon,
\end{equation}
where $\mathcal{Y}_d^{\star}\subseteq\bar{\mathcal{M}}$ is an optimal solution to Problem ~\eqref{eqn:equivalent obj PEMS}, $c_{\mathop{\min}}=\mathop{\min}_{y\in\bar{\mathcal{M}}}c(y)$,\footnote{Note that we can assume without loss of generality that $c(y)\le B$ for all $y\in\bar{\mathcal{M}}$.} and $\varepsilon\in\mathbb{R}_{\ge0}$ satisfies $|\hat{f}_{Pd}(\mathcal{Y})-f_{Pd}(\mathcal{Y})|\le\varepsilon/2$ for all $\mathcal{Y}\subseteq\bar{\mathcal{M}}$.
\end{theorem}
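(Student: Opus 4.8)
The plan is to reduce the analysis of Algorithm~\ref{alg:greedy} applied to $f_{Pd}(\cdot)$ to the known analysis of the greedy algorithm for maximizing a monotone nondecreasing submodular function under a knapsack constraint (Theorem~1 in \cite{krause2005note}), while carefully tracking how the approximation error $\varepsilon$ between $\hat{f}_{Pd}(\cdot)$ and $f_{Pd}(\cdot)$ propagates through the argument. The first step is to establish the structural facts about the true objective: since $f_{Pd}(\mathcal{Y})=\ln\det(F_p+H(\mathcal{Y}))-\ln\det(F_p)$ with each $H_y\succeq\mathbf{0}$ and $F_p\succ\mathbf{0}$, the function is monotone nondecreasing, submodular, and normalized ($f_{Pd}(\emptyset)=0$); this follows from the concavity of $\ln\det(\cdot)$ on the positive-definite cone together with the matrix monotonicity argument used in \cite{summers2015submodularity}, so I would simply cite that.

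Next I would quantify the effect of the surrogate $\hat{f}_{Pd}$. Because $|\hat{f}_{Pd}(\mathcal{Y})-f_{Pd}(\mathcal{Y})|\le\varepsilon/2$ for every $\mathcal{Y}$, any marginal gain computed with $\hat{f}_{Pd}$ differs from the true marginal gain by at most $\varepsilon$. This yields two consequences that mirror the two ``types'' of greedy selections in \cite{krause2005note}: (i) the element $y^\star$ chosen in line~6 of Algorithm~\ref{alg:greedy} (the cost-benefit greedy step) satisfies $\frac{\hat{f}_{Pd}(\{y^\star\}\cup\mathcal{Y}_2)-\hat{f}_{Pd}(\mathcal{Y}_2)+\varepsilon}{c(y^\star)}\ge\frac{f_{Pd}(\{y\}\cup\mathcal{Y}_2)-f_{Pd}(\mathcal{Y}_2)-\varepsilon}{c(y)}$ for all feasible $y$, i.e.\ the true per-cost marginal gain of the greedy pick is within an additive $2\varepsilon/c(y)$ of optimal among available elements; and (ii) the single-element candidate $\mathcal{Y}_1$ from line~3 satisfies $\max_{y\in\bar{\mathcal{M}}}f_{Pd}(\{y\})\le f_{Pd}(\mathcal{Y}_1)+\varepsilon$. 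Both are routine consequences of the uniform error bound plus monotonicity.

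The third step is to run the standard knapsack-greedy telescoping argument with these perturbed inequalities. One tracks the cost-benefit greedy chain and, letting $\mathcal{Y}_2^{(0)},\mathcal{Y}_2^{(1)},\dots$ be the successive partial solutions until the first element that would violate the budget, uses submodularity of $f_{Pd}$ to lower-bound each true marginal gain in terms of $f_{Pd}(\mathcal{Y}_d^\star)-f_{Pd}(\mathcal{Y}_2^{(j)})$, incurring an additive $O(\varepsilon/c_{\min})$-type slack at each of at most $B/c_{\min}$ steps because each selected element costs at least $c_{\min}$. Summing the recursion in the usual way produces the $(1-e^{-1})$ factor on the ``filled'' greedy solution (up to the accumulated error), and then taking the better of this solution and the best singleton $\mathcal{Y}_1$ — exactly what Algorithm~\ref{alg:greedy} returns — yields the $\tfrac12(1-e^{-1})$ factor, with total additive error of the stated form $(\tfrac{B}{c_{\min}}+\tfrac32)\varepsilon$. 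I would present this as ``using similar arguments to those for Theorem~1 in \cite{krause2005note}'' and exhibit the bookkeeping of the $\varepsilon$ terms rather than reproving the submodular greedy bound from scratch.

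The main obstacle is the error accounting: in the noiseless submodular analysis the greedy chain inequalities are exact, but here every marginal-gain comparison and every ``stopping'' inequality carries an $\pm\varepsilon$, and these accumulate along the recursion. The delicate point is to show that the accumulation is only linear in the number of greedy steps (hence the $B/c_{\min}$ factor, since each step consumes budget at least $c_{\min}$) and does not blow up multiplicatively through the $(1-1/k)$-style product that generates the $e^{-1}$ term; this requires bounding the error contribution of step $j$ by the residual coefficient $(1-c(y_j)/B)$ that is already $\le 1$, so that the geometric-type sum of errors stays bounded by a constant times $\varepsilon$ per step. Getting the final constants to match $(\tfrac{B}{c_{\min}}+\tfrac32)$ exactly is the one place where the routine calculation has to be done with care, but it is conceptually straightforward once the two perturbed inequalities above are in hand.
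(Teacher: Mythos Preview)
Your proposal is correct and follows essentially the same approach as the paper: the paper also establishes monotonicity and submodularity of $f_{Pd}(\cdot)$ by citing \cite{summers2015submodularity}, derives the same two perturbed inequalities (your (i) and (ii)) from the uniform $\varepsilon/2$ bound on $\hat f_{Pd}$, and then appeals to the argument of Theorem~1 in \cite{krause2005note} with the $\varepsilon$-bookkeeping to obtain \eqref{eqn:guarantee for greedy f_Pd}, omitting the detailed recursion. Your outline of how the additive errors accumulate linearly over at most $B/c_{\min}$ steps is exactly the computation the paper leaves implicit.
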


\begin{algorithm}
\caption{Greedy algorithm for PEMS}
\label{alg:greedy}
\begin{algorithmic}[1]
\State{\textbf{Input}: An instance of PEMS transformed into the form in \eqref{eqn:equivalent obj PEMS}}
\State{\textbf{Output}: $\mathcal{Y}_g$}
\State{Find $\mathcal{Y}_1\in\mathop{\arg}{\max}_{y\in\bar{\mathcal{M}}}\hat{f}_P(y)$}
\State{Initialize $\mathcal{Y}_2=\emptyset$ and $\mathcal{C}=\bar{\mathcal{M}}$}
\While{$\mathcal{C}\neq\emptyset$}
\State{Find $y^{\star}\in\mathop{\arg}{\max}_{y\in\mathcal{C}}\frac{\hat{f}_P(\{y\}\cup\mathcal{Y}_2)-\hat{f}_P(\mathcal{Y}_2)}{c(y)}$}
\If{$c(y^{\star})+c(\mathcal{Y}_2)\le B$}
\State{$\mathcal{Y}_2=\{y^{\star}\}\cup\mathcal{Y}_2$}
\EndIf
\State{$\mathcal{C}=\mathcal{C}\setminus\{y^{\star}\}$}
\EndWhile
\State{$\mathcal{Y}_g=\mathop{\arg}{\max}_{\mathcal{Y}\in\{\mathcal{Y}_1,\mathcal{Y}_2\}}\{\hat{f}_P(\mathcal{Y}_1),\hat{f}_P(\mathcal{Y}_2)\}$}
\end{algorithmic}
\end{algorithm}

In contrast to $f_{Pd}(\cdot)$, the objective function $f_{Pa}(\cdot)$ is not submodular in general (e.g., \cite{krause2008near}).  In fact, one can construct examples where the objective function $f_{Pa}(\mathcal{Y})=\Tr((F_p)^{-1}\big)-\Tr\big((F_p+H(\mathcal{Y}))^{-1})$ in the PEMS problem is not submodular. Hence, in order to provide performance guarantees of the greedy algorithm when applied to Problem \eqref{eqn:equivalent obj PEMS} with $f(\cdot)=f_{Pa}(\cdot)$, we will extend the analysis in \cite{krause2005note} to nonsubmodular settings. To proceed, note that for all $\mathcal{A}\subseteq\mathcal{B}\subseteq\bar{\mathcal{M}}$, we have $F_p+H(\mathcal{A})\preceq F_p+H(\mathcal{B})$, which implies $(F_p+H(\mathcal{A}))^{-1}\succeq (F_p+H(\mathcal{B}))^{-1}$ and $\Tr(F_p+H(\mathcal{A}))^{-1})\ge \Tr(F_p+H(\mathcal{B}))^{-1})$ \cite{horn2012matrix}. Therefore, the objective function $f_{Pa}(\cdot)$ is monotone nondecreasing with $f_{Pa}(\emptyset)=0$. We then characterize how close $f_{Pa}(\cdot)$ is to being submodular by introducing the following definition. 

\begin{definition}
\label{def:submodularity ratios}
Consider Problem \eqref{eqn:equivalent obj PEMS} with $f_P(\cdot)=f_{Pa}(\cdot)$, where $f_{Pa}:2^{\bar{\mathcal{M}}}\to\mathbb{R}_{\ge0}$ is defined in \eqref{eqn:def of f_Pa}. Suppose Algorithm~$\ref{alg:greedy}$ is applied to solve Problem \eqref{eqn:equivalent obj PEMS}. For all $j\in\{1,\dots,|\mathcal{Y}_2|\}$, let $\mathcal{Y}_2^j=\{y_1,\dots,y_j\}$ denote the set that contains the first $j$ elements added to set $\mathcal{Y}_2$ in Algorithm~$\ref{alg:greedy}$, and let $\mathcal{Y}_2^0=\emptyset$. The type-1 greedy submodularity ratio of $f_{Pa}(\cdot)$ is defined to be the largest $\gamma_1\in\mathbb{R}$ that satisfies
\begin{equation}
\label{eqn:sbmr 1}
\sum_{y\in \mathcal{A}\setminus\mathcal{Y}_2^j}\big(f_{Pa}(\{y\}\cup\mathcal{Y}_2^j)-f_{Pa}(\mathcal{Y}_2^j)\big)\ge\gamma_1\big(f_{Pa}(\mathcal{A}\cup\mathcal{Y}_2^j)-f_{Pa}(\mathcal{Y}_2^j)\big),
\end{equation}
for all $\mathcal{A}\subseteq\bar{\mathcal{M}}$ and for all $j\in\{0,\dots,|\mathcal{Y}_2|\}$. The type-2 greedy submodularity ratio of $f_{Pa}(\cdot)$ is defined to be the largest $\gamma_2\in\mathbb{R}$ that satisfies
\begin{equation}
\label{eqn:sbmr 2}
f_{Pa}(\mathcal{Y}_1)-f_{Pa}(\emptyset)\ge\gamma_2\big(f_{Pa}(\{y\}\cup\mathcal{Y}_2^j)-f_{Pa}(\mathcal{Y}_2^j)\big),
\end{equation}
for all $j\in\{0,\dots,|\mathcal{Y}_2|\}$ and for all $y\in\bar{\mathcal{M}}\setminus\mathcal{Y}_2^j$ such that $c(y)+c(\mathcal{Y}_2^j)>B$, where $\mathcal{Y}_1\in\mathop{\arg}{\max}_{y\in\bar{\mathcal{M}}}\hat{f}_{Pa}(y)$.
\end{definition}

\begin{remark}
\label{remark:submodularity ratios}
Note that $f_{Pa}(\cdot)$ is monotone nondecreasing as argued above. Noting the definition of $\gamma_1$ in \eqref{eqn:sbmr 1}, one can use similar arguments to those in \cite{bian2017guarantees} and show that $\gamma_1\in[0,1]$; if $f_{Pa}(\cdot)$ is submodular, then $\gamma_1=1$. Similarly, one can show that $\gamma_2\ge 0$. Supposing that $\mathcal{Y}_1\in\mathop{\arg}{\max}_{y\in\bar{\mathcal{M}}}f_{Pa}(y)$, one can further show that if $f_{Pa}(\cdot)$ is submodular, then $\gamma_2\ge1$.
\end{remark}

Note that since we approximate $f_{Pa}(\cdot)$ using $\hat{f}_{Pa}(\cdot)$ as we argued above, we may not be able to obtain the exact values of $\gamma_1$ and $\gamma_2$ from Definition~\ref{def:submodularity ratios}. Moreover, finding $\gamma_1$ may require an exponential number of function calls of $f_{Pa}(\cdot)$ (or $\hat{f}_{Pa}(\cdot)$). Nonetheless, it will be clear from our analysis below that obtaining lower bounds on $\gamma_1$ and $\gamma_2$ suffices. Here, we describe how we obtain a lower bound on $\gamma_2$ using $\hat{f}_{Pa}(\cdot)$, and defer our analysis for lower bounding $\gamma_1$ to the end of this section, which requires more careful analysis. Similarly to \eqref{eqn:sbmr 2}, let $\hat{\gamma}_2$ denote the largest real number that satisfies
\begin{equation}
\label{eqn:gamma_2 lower bound}
\hat{f}_{Pa}(\mathcal{Y}_1)-\frac{\varepsilon}{2}\ge\hat{\gamma}_2\big(\hat{f}_{Pa}(\{y\}\cup\mathcal{Y}_2^j)-\hat{f}_{Pa}(\mathcal{Y}_2^j)+\varepsilon), 
\end{equation}
for all $j\in\{0,\dots,|\mathcal{Y}_2|\}$ and for all $y\in\bar{\mathcal{M}}\setminus\mathcal{Y}_2^j$ such that $c(y)+c(\mathcal{Y}_2^j)>B$. Noting our assumption that $|\hat{f}_{Pa}(\mathcal{Y})-f_{Pa}(\mathcal{Y})|\le\varepsilon/2$ for all $\mathcal{Y}\subseteq\bar{\mathcal{M}}$ (with $\hat{f}_{Pa}(\emptyset)=0$), one can see that $\hat{\gamma}_2$ given by \eqref{eqn:gamma_2 lower bound} also satisfies \eqref{eqn:sbmr 2}, which implies $\hat{\gamma}_2\le\gamma_2$. Given $\mathcal{Y}_2^j$ for all $j\in\{0,\dots,|\mathcal{Y}_2|\}$ from Algorithm~\ref{alg:greedy}, $\hat{\gamma}_2$ can now be obtained via $O(|\bar{\mathcal{M}}|^2)$ function calls of $\hat{f}_{Pa}(\cdot)$. 

Based on Definition~$\ref{def:submodularity ratios}$, the following result extends the analysis in \cite{khuller1999budgeted,krause2005note}, and characterizes the performance guarantees of Algorithm~\ref{alg:greedy} for Problem~\eqref{eqn:equivalent obj PEMS} with $f_P(\cdot)=f_{Pa}(\cdot)$. 
\begin{theorem}
\label{thm:greedy guarantee f_Pa}
Consider Problem \eqref{eqn:equivalent obj PEMS} with the objective function $f_{Pa}:2^{\bar{\mathcal{M}}}\to\mathbb{R}_{\ge0}$ given by \eqref{eqn:def of f_Pa}. Then Algorithm~$\ref{alg:greedy}$ yields a solution, denoted as $\mathcal{Y}_a^g$, to Problem \eqref{eqn:equivalent obj PEMS} that satisfies
\begin{equation}
\label{eqn:guarantee for greedy f_Pa}
f_{Pa}(\mathcal{Y}_a^g)\ge\frac{\mathop{\min}\{\gamma_2,1\}}{2}(1-e^{-\gamma_1}) f_{Pa}(\mathcal{Y}_a^{\star})-(\frac{B+c_{\mathop{\max}}}{c_{\mathop{\min}}}+1)\varepsilon,
\end{equation}
where $\mathcal{Y}_a^{\star}\subseteq\bar{\mathcal{M}}$ is an optimal solution to Problem ~\eqref{eqn:equivalent obj PEMS}, $\gamma_1\in\mathbb{R}_{\ge0}$ and $\gamma_2\in\mathbb{R}_{\ge0}$ are defined in Definition~$\ref{def:submodularity ratios}$, $c_{\mathop{\min}}=\mathop{\min}_{y\in\bar{\mathcal{M}}}c(y)$, $c_{\mathop{\max}}=\mathop{\max}_{y\in\bar{\mathcal{M}}}c(y)$, and $\varepsilon\in\mathbb{R}_{\ge0}$ satisfies $|\hat{f}_{Pa}(\mathcal{Y})-f_{Pa}(\mathcal{Y})|\le\varepsilon/2$ for all $\mathcal{Y}\subseteq\bar{\mathcal{M}}$.
\end{theorem}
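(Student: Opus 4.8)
The plan is to adapt the standard greedy analysis for maximizing a monotone function under a knapsack constraint (\cite{khuller1999budgeted,krause2005note}), replacing each appeal to submodularity by the appropriate greedy submodularity ratio from Definition~\ref{def:submodularity ratios} and propagating the numerical-integration error $\varepsilon$ through every inequality. Write $f:=f_{Pa}$ and $O:=f_{Pa}(\mathcal{Y}_a^{\star})$, and recall that $f$ is monotone nondecreasing with $f(\emptyset)=0$ (established before the statement), that $\gamma_1\in[0,1]$ and $\gamma_2\ge0$ (Remark~\ref{remark:submodularity ratios}), and that we may assume $c(y)\le B$ for all $y\in\bar{\mathcal{M}}$. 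If $O=0$ or $\gamma_2=0$ the right-hand side of \eqref{eqn:guarantee for greedy f_Pa} is nonpositive while $f_{Pa}(\mathcal{Y}_a^g)\ge0$, so assume $O>0$ and $\gamma_2>0$. Since the while loop of Algorithm~\ref{alg:greedy} selects each element of $\bar{\mathcal{M}}$ as $y^{\star}$ exactly once, if no element of $\mathcal{Y}_a^{\star}$ is ever rejected then $\mathcal{Y}_a^{\star}\subseteq\mathcal{Y}_2$, so $f(\mathcal{Y}_2)\ge O$ and \eqref{eqn:guarantee for greedy f_Pa} holds trivially; hence assume some element of $\mathcal{Y}_a^{\star}$ is rejected and let $t^{\dagger}$ be the first iteration at which a rejected element $z$ lies in $\mathcal{Y}_a^{\star}$.

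Let $\emptyset=\mathcal{Y}_2^0\subsetneq\mathcal{Y}_2^1\subsetneq\cdots\subsetneq\mathcal{Y}_2^m$ be the greedy sets formed before iteration $t^{\dagger}$, so that $\mathcal{Y}_2^m$ is the current set when $z$ is rejected; write $c_l:=c(y_l)$ for the $l$-th added element $y_l$, and set $y_{m+1}:=z$, $c_{m+1}:=c(z)$, $\mathcal{Y}_2^{m+1}:=\mathcal{Y}_2^m\cup\{z\}$. The crucial point is that for every $l\in\{1,\dots,m+1\}$, at the iteration selecting $y_l$ the candidate set still contains all of $\mathcal{Y}_a^{\star}\setminus\mathcal{Y}_2^{l-1}$ (none of those elements has been added, and none rejected before $t^{\dagger}$), so $y_l$ maximizes the $\hat{f}_{Pa}$-ratio over that set. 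Setting $\Delta_l:=f(\mathcal{Y}_2^l)-f(\mathcal{Y}_2^{l-1})$, the $\varepsilon$-approximate greedy rule gives $f(\{y\}\cup\mathcal{Y}_2^{l-1})-f(\mathcal{Y}_2^{l-1})\le\frac{c(y)}{c_l}(\Delta_l+\varepsilon)+\varepsilon$ for each such $y$; summing over $y\in\mathcal{Y}_a^{\star}\setminus\mathcal{Y}_2^{l-1}$, then applying the type-1 ratio with $\mathcal{A}=\mathcal{Y}_a^{\star}$ and $j=l-1$ together with monotonicity ($f(\mathcal{Y}_a^{\star}\cup\mathcal{Y}_2^{l-1})\ge O$), and using $\sum_{y\in\mathcal{Y}_a^{\star}}c(y)\le B$ and $|\mathcal{Y}_a^{\star}|\le B/c_{\min}$, I obtain
\[
O-f(\mathcal{Y}_2^l)\le\Bigl(1-\frac{\gamma_1 c_l}{B}\Bigr)\bigl(O-f(\mathcal{Y}_2^{l-1})\bigr)+\frac{2c_l}{c_{\min}}\varepsilon,\qquad l=1,\dots,m+1 .
\]

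Since $\gamma_1\le1$ and $c_l\le B$, each factor $1-\gamma_1 c_l/B$ lies in $[0,1]$; unrolling the recursion, using $1-x\le e^{-x}$ with $\sum_{l=1}^{m+1}c_l=c(\mathcal{Y}_2^m)+c(z)>B$, $c(\mathcal{Y}_2^m)\le B$, $c(z)\le c_{\max}$, yields $f(\mathcal{Y}_2^{m+1})\ge(1-e^{-\gamma_1})O-\frac{2(B+c_{\max})}{c_{\min}}\varepsilon$. Writing $f(\mathcal{Y}_2^{m+1})=f(\mathcal{Y}_2^m)+\bigl(f(\{z\}\cup\mathcal{Y}_2^m)-f(\mathcal{Y}_2^m)\bigr)$ and using that $c(z)+c(\mathcal{Y}_2^m)>B$ (so \eqref{eqn:sbmr 2} with $j=m$, $y=z$ bounds the parenthesized term by $\frac{1}{\gamma_2}f(\mathcal{Y}_1)$) together with $\mathcal{Y}_2^m\subseteq\mathcal{Y}_2$, I get $f(\mathcal{Y}_2)+\frac{1}{\gamma_2}f(\mathcal{Y}_1)\ge(1-e^{-\gamma_1})O-\frac{2(B+c_{\max})}{c_{\min}}\varepsilon$. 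Multiplying by $\min\{\gamma_2,1\}\in(0,1]$ makes the coefficients of $f(\mathcal{Y}_1)$ and $f(\mathcal{Y}_2)$ at most $1$; bounding $f(\mathcal{Y}_1)+f(\mathcal{Y}_2)\le2\max\{f(\mathcal{Y}_1),f(\mathcal{Y}_2)\}$, and finally using that $\mathcal{Y}_a^g$ is whichever of $\mathcal{Y}_1,\mathcal{Y}_2$ has larger $\hat{f}_{Pa}$-value (so $f(\mathcal{Y}_a^g)\ge\max\{f(\mathcal{Y}_1),f(\mathcal{Y}_2)\}-\varepsilon$ by $|\hat{f}_{Pa}-f_{Pa}|\le\varepsilon/2$), delivers exactly \eqref{eqn:guarantee for greedy f_Pa}.

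The step I expect to be the main obstacle is the bookkeeping around the critical iteration $t^{\dagger}$: one must verify that, up to and including the iteration producing $\mathcal{Y}_2^{m+1}$, every element of $\mathcal{Y}_a^{\star}\setminus\mathcal{Y}_2^{l-1}$ is genuinely still available to the greedy rule (so the ratio comparison, and hence the recursion, is legitimate), that the ``virtual'' step $l=m+1$ associated with the rejected optimal element $z$ is handled consistently with the genuine greedy steps, and that the accumulated error terms collapse to precisely $\bigl(\frac{B+c_{\max}}{c_{\min}}+1\bigr)\varepsilon$ rather than something larger.
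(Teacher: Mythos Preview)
Your proposal is correct and follows essentially the same route as the paper's proof: adapt the Khuller--Moss--Naor / Krause--Guestrin knapsack-greedy analysis, replace submodularity by the ratios $\gamma_1,\gamma_2$ of Definition~\ref{def:submodularity ratios}, and track the $\varepsilon$-error from the approximate oracle through the recursion and the final $\max\{\mathcal{Y}_1,\mathcal{Y}_2\}$ step. The only noteworthy difference is your choice of critical iteration: the paper stops at the \emph{first} rejection of \emph{any} element (so that for $j<j_l$ the candidate set is exactly $\bar{\mathcal{M}}\setminus\mathcal{Y}_2^j$), whereas you stop at the first rejection of an element of $\mathcal{Y}_a^{\star}$, which lets the recursion run longer and only requires the ratio comparison against elements of $\mathcal{Y}_a^{\star}\setminus\mathcal{Y}_2^{l-1}$ rather than all of $\bar{\mathcal{M}}\setminus\mathcal{Y}_2^{l-1}$. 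Both choices yield the same final bound; your variant is arguably a bit cleaner because it transparently handles interleaved additions and rejections. Your final step (multiply by $\min\{\gamma_2,1\}$ and use $f(\mathcal{Y}_1)+f(\mathcal{Y}_2)\le 2\max$) is equivalent to the paper's case split on $\gamma_2\gtrless 1$.
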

\begin{proof}
Noting that \eqref{eqn:guarantee for greedy f_Pa} holds trivially if $\gamma_1=0$ or $\gamma_2=0$, we assume that $\gamma_1>0$ and $\gamma_2>0$. In this proof, we drop the subscript of $f_{Pa}(\cdot)$ (resp., $\hat{f}_{Pa}(\cdot)$) and denote $f(\cdot)$ (resp., $\hat{f}(\cdot)$) for notational simplicity. First, recall that for all $j\in\{1,\dots,|\mathcal{Y}_2|\}$, we let $\mathcal{Y}_2^j=\{y_1,\dots,y_j\}$ denote the set that contains the first $j$ elements added to set $\mathcal{Y}_2$ in Algorithm~$\ref{alg:greedy}$, and let $\mathcal{Y}_2^0=\emptyset$. Now, let $j_l$ be the first index in $\{1,\dots,|\mathcal{Y}_2|\}$ such that a candidate element $y^{\star}\in\mathop{\arg}{\max}_{y\in\mathcal{C}}\frac{\hat{f}(\{y\}\cup\mathcal{Y}_2^{j_l})-\hat{f}(\mathcal{Y}_2^{j_l})}{c(y)}$ for $\mathcal{Y}_2$ (given in line $6$ of Algorithm~$\ref{alg:greedy}$) cannot be added to $\mathcal{Y}_2$ due to $c(y^{\star})+c(\mathcal{Y}_2^{j_l})>B$. In other words, for all $j\in\{0,\dots,j_l-1\}$, any candidate element $y^{\star}\in\mathop{\arg}{\max}_{y\in\mathcal{C}}\frac{\hat{f}(\{y\}\cup\mathcal{Y}_2^j)-\hat{f}(\mathcal{Y}_2^j)}{c(y)}$ for $\mathcal{Y}_2$ satisfies $c(y^{\star})+c(\mathcal{Y}_2^j)\le B$ and can be added to $\mathcal{Y}_2$ in Algorithm~\ref{alg:greedy}. Noting that $|\hat{f}_P(\mathcal{Y})-f_P(\mathcal{Y})|\le\varepsilon/2$ for all $\mathcal{Y}\subseteq\bar{\mathcal{M}}$, one can then show that the following hold for all $j\in\{0,\dots,j_l-1\}$:
\begin{equation}
\label{eqn:greedy choice}
\frac{f(\mathcal{Y}_2^{j+1})-f(\mathcal{Y}_2^j)+\varepsilon}{c(y_{j+1})}\ge\frac{f(\{y\}\cup\mathcal{Y}_2^j)-f(\mathcal{Y}_2^j)-\varepsilon}{c(y)}\quad \forall y\in\bar{\mathcal{M}}\setminus\mathcal{Y}_2^j.
\end{equation}
Now, considering any $j\in\{0,\dots,j_l-1\}$, we have the following:
\begin{align}
&\quad f(\mathcal{Y}^{\star}_a\cup \mathcal{Y}_2^j)-f(\mathcal{Y}_2^j)\le\frac{1}{\gamma_1}\sum_{y\in \mathcal{Y}^{\star}_a\setminus \mathcal{Y}^j_2}c(y)\cdot\frac{f(\{y\}\cup \mathcal{Y}^j_2)-f(\mathcal{Y}^j_2)}{c(y)}\label{eqn:ineq from sbmr}\\
&\le\frac{1}{\gamma_1}\sum_{y\in \mathcal{Y}^{\star}_a\setminus \mathcal{Y}_2^j}c(y)(\frac{f(\mathcal{Y}^{j+1}_2)-f(\mathcal{Y}^j_2)+\varepsilon}{c(y_{j+1})}+\frac{\varepsilon}{c(y)})\label{eqn:greedy choice use}\\
&\le\frac{B}{\gamma_1}\cdot\frac{f(\mathcal{Y}^{j+1}_2)-f(\mathcal{Y}^j_2)}{c(y_{j+1})}+\frac{\varepsilon}{\gamma_1}\sum_{y\in\mathcal{Y}_a^{\star}\setminus\mathcal{Y}_j^2}(\frac{c(y)}{c(y_{j+1})}+1)\label{eqn:Y star feasible}\\
&\le\frac{B}{\gamma_1}\cdot\frac{f(\mathcal{Y}^{j+1}_2)-f(\mathcal{Y}^j_2)}{c(y_{j+1})}+\frac{\varepsilon}{\gamma_1}(\frac{B}{c(y_{j+1})}+|\mathcal{Y}_a^{\star}|),\label{eqn:Y star feasible 1}
\end{align}
where \eqref{eqn:ineq from sbmr} follows from the definition of $\gamma_1$ in \eqref{eqn:sbmr 1}, and \eqref{eqn:greedy choice use} follows from \eqref{eqn:greedy choice}. To obtain \eqref{eqn:Y star feasible}, we use the fact $c(\mathcal{Y}^{\star}_a)\le B$. Similarly, we obtain \eqref{eqn:Y star feasible 1}. Noting that $f(\cdot)$ is monotone nondecreasing, one can further obtain from \eqref{eqn:Y star feasible 1} that
\begin{equation}
\label{eqn:function value diff}
f(\mathcal{Y}_2^{j+1})-f(\mathcal{Y}_2^j)\ge\frac{\gamma_1c(y_{j+1})}{B}\big(f(\mathcal{Y}_a^{\star})-f(\mathcal{Y}_2^j)\big)-\varepsilon(1+|\mathcal{Y}_a^{\star}|\frac{c(y_{j+1})}{B}).
\end{equation}
To proceed,  let $y^{\prime}\in\mathop{\arg}{\max}_{y\in\mathcal{C}}\frac{\hat{f}(\{y\}\cup\mathcal{Y}_2^{j_l})-\hat{f}(\mathcal{Y}_2^{j_l})}{c(y)}$ be the (first) candidate element for $\mathcal{Y}_2$ that cannot be added to $\mathcal{Y}_2$ due to $c(y^{\prime})+c(\mathcal{Y}_2^{j_l})>B$, as we argued above. Similarly to \eqref{eqn:greedy choice}, one can see that $\frac{f(\{y^{\prime}\}\cup\mathcal{Y}_2^{j_l})-f(\mathcal{Y}_2^{j_l})+\varepsilon}{c(y^{\prime})}\ge\frac{f(\{y\}\cup\mathcal{Y}_2^{j_l})-f(\mathcal{Y}_2^{j_l})-\varepsilon}{c(y)}$ holds for all $y\in\bar{\mathcal{M}}\setminus\mathcal{Y}_2^{j_l}$. Letting $\bar{\mathcal{Y}}_2^{j_l+1}\triangleq\{y^{\prime}\}\cup\mathcal{Y}_2^{j_l}$ and following similar arguments to those leading up to \eqref{eqn:function value diff}, we have
\begin{equation}
\label{eqn:function value diff 1}
f(\bar{\mathcal{Y}}_2^{j_l+1})-f(\mathcal{Y}_2^{j_l})\ge\frac{\gamma_1c(y^{\prime})}{B}\big(f(\mathcal{Y}_a^{\star})-f(\mathcal{Y}_2^{j_l})\big)-\varepsilon(1+|\mathcal{Y}_a^{\star}|\frac{c(y^{\prime})}{B}).
\end{equation}
Denoting $\Delta_j\triangleq f(\mathcal{Y}^{\star}_a)-f(\mathcal{Y}^j_2)$ for all $j\in\{0,\dots,j_l\}$ and $\Delta_{j_l+1}\triangleq f(\mathcal{Y}_a^{\star})-f(\bar{\mathcal{Y}}_2^{j_l+1})$, we obtain from \eqref{eqn:function value diff} and \eqref{eqn:function value diff 1} the following:
\begin{equation}
\label{eqn:recursion for D_j}
\Delta_j\le \Delta_{j-1}(1-\frac{c(y_j)\gamma_1}{B})+\varepsilon+\frac{c(y_j)|\mathcal{Y}_a^{\star}|}{B}\varepsilon\quad \forall j\in[j_l+1].
\end{equation}
Unrolling \eqref{eqn:recursion for D_j} yields
\begin{align}
&\Delta_{j_l+1}\le \Delta_0\big(\prod_{j=1}^{j_l}(1-\frac{c(y_j)\gamma_1}{B})\big)(1-\frac{c(y^{\prime})\gamma_1}{B})+(j_l+1+\frac{c(\bar{\mathcal{Y}}_2^{j_l+1})|\mathcal{Y}_a^{\star}|}{B})\varepsilon,\label{eqn:recursion for D_j from D_0}\\
\implies&\Delta_{j_l+1}\le \Delta_0\big(\prod_{j=1}^{j_l}(1-\frac{c(y_j)\gamma_1}{B})\big)(1-\frac{c(y^{\prime})\gamma_1}{B})+\frac{2(B+c_{\mathop{\max}})}{c_{\mathop{\min}}}\varepsilon.\label{eqn:recursion for D_j from D_0 1}
\end{align}
To obtain \eqref{eqn:recursion for D_j from D_0}, we use the facts that $(1-\frac{c(y_j)\gamma_1}{B})\le 1$ for all $j\in[j_l+1]$ and $(1-\frac{c(y^{\prime})\gamma_1}{B})\le 1$, since $\gamma_1\in(0,1]$ as we argued in Remark~\ref{remark:submodularity ratios}. To obtain \eqref{eqn:recursion for D_j from D_0 1}, we first note from the way we defined $j_l$ that $j_l+1\le c(\bar{\mathcal{Y}}^{j_l+1}_2)/c_{\mathop{\min}}\le (B+c_{\mathop{\max}})/c_{\mathop{\min}}$. Also noting that $|\mathcal{Y}_a^{\star}|\le B/c_{\mathop{\min}}$, we then obtain \eqref{eqn:recursion for D_j from D_0 1}.

Now, one can show that $\big(\prod_{j=1}^{j_l}(1-\frac{c(y_j)\gamma_1}{B})\big)(1-\frac{c(y^{\prime})\gamma_1}{B})\le\prod_{j=1}^{j_l+1}(1-\frac{c(\bar{\mathcal{Y}}_2^{j_l+1})\gamma_1}{(j_l+1) B})\le e^{-\gamma_1\frac{c(\bar{\mathcal{Y}}_2^{j_l+1})}{B}}$ (e.g., \cite{khuller1999budgeted}). We then have from \eqref{eqn:recursion for D_j from D_0 1} the following:
\begin{align}\nonumber
&f(\mathcal{Y}_a^{\star})-f(\bar{\mathcal{Y}}_2^{j_l+1})\le f(\mathcal{Y}^{\star}_a)e^{-\gamma_1\frac{c(\bar{\mathcal{Y}}_2^{j_l+1})}{B}}+\frac{2(B+c_{\mathop{\max}})}{c_{\mathop{\min}}}\varepsilon\\
\implies& f(\bar{\mathcal{Y}}_2^{j_l+1})\ge(1-e^{-\gamma_1})f(\mathcal{Y}_a^{\star})-\frac{2(B+c_{\mathop{\max}})}{c_{\mathop{\min}}}\varepsilon,\label{eqn:gamma_1 ratio}
\end{align}
where \eqref{eqn:gamma_1 ratio} follows from $c(\bar{\mathcal{Y}}_2^{j_l+1})>B$.

To proceed with the proof of the theorem, we note from the definition of $\gamma_2$ in Definition~$\ref{def:submodularity ratios}$ that $f(\{y^{\prime}\}\cup\mathcal{Y}_2^{j_l})-f(\mathcal{Y}_2^{j_l})\le\frac{1}{\gamma_2}f(\mathcal{Y}_1)$ with $\gamma_2>0$, which together with \eqref{eqn:gamma_1 ratio} imply that 
\begin{equation}
\label{eqn:combine two terms}
f(\mathcal{Y}_2^{j_l})+\frac{1}{\gamma_2}f(\mathcal{Y}_1)\ge f(\bar{\mathcal{Y}}_2^{j_l+1})\ge (1-e^{\gamma_1})f(\mathcal{Y}_a^{\star})-\frac{2\bar{B}}{c_{\mathop{\min}}}\varepsilon,
\end{equation}
where $\bar{B}\triangleq B+c_{\mathop{\max}}$. Since $f(\cdot)$ is monotone nondecreasing, we obtain from \eqref{eqn:combine two terms}
\begin{equation}
\label{eqn:combine two terms 1}
f(\mathcal{Y}_2)+\frac{1}{\gamma_2}f(\mathcal{Y}_1)\ge (1-e^{\gamma_1})f(\mathcal{Y}_a^{\star})-\frac{2\bar{B}}{c_{\mathop{\min}}}\varepsilon.
\end{equation}
We will then split our analysis into two cases. First, supposing that $\gamma_2\ge1$, we see from \eqref{eqn:combine two terms 1} that at least one of $f(\mathcal{Y}_2)\ge\frac{1}{2}(1-e^{-\gamma_1})f(\mathcal{Y}_a^{\star})-\frac{\bar{B}}{c_{\mathop{\min}}}\varepsilon$ and $f(\mathcal{Y}_1)\ge\frac{1}{2}(1-e^{-\gamma_1})f(\mathcal{Y}_a^{\star})-\frac{\bar{B}}{c_{\mathop{\min}}}\varepsilon$ holds. Recalling that $|\hat{f}(\mathcal{Y})-f(\mathcal{Y})|\le\varepsilon/2$ for all $\mathcal{Y}\subseteq\bar{\mathcal{M}}$, it follows that at least one of $\hat{f}(\mathcal{Y}_2)\ge\frac{1}{2}(1-e^{-\gamma_1})f(\mathcal{Y}_a^{\star})-\frac{\bar{B}}{c_{\mathop{\min}}}\varepsilon-\frac{\varepsilon}{2}$ and $\hat{f}(\mathcal{Y}_1)\ge\frac{1}{2}(1-e^{-\gamma_1})f(\mathcal{Y}_a^{\star})-\frac{\bar{B}}{c_{\mathop{\min}}}\varepsilon-\frac{\varepsilon}{2}$ holds. Second, supposing $\gamma_2<1$ and using similar arguments, we have that at least one of $\hat{f}(\mathcal{Y}_2)\ge\frac{1}{2}(1-e^{-\gamma_1})f(\mathcal{Y}_a^{\star})-\frac{\bar{B}}{c_{\mathop{\min}}}\varepsilon-\frac{\varepsilon}{2}$ and $\hat{f}(\mathcal{Y}_1)\ge\frac{\gamma_2}{2}(1-e^{-\gamma_1})f(\mathcal{Y}_a^{\star})-\frac{\gamma_2\bar{B}}{c_{\mathop{\min}}}\varepsilon-\frac{\varepsilon}{2}$ holds. Now, we note from line~12 of Algorithm~$\ref{alg:greedy}$ that $\hat{f}(\mathcal{Y}_a^g)\ge\mathop{\max}\{\hat{f}(\mathcal{Y}_1),\hat{f}(\mathcal{Y}_2)\}$, which implies $f(\mathcal{Y}_a^g)\ge\mathop{\max}\{\hat{f}(\mathcal{Y}_1),\hat{f}(\mathcal{Y}_2)\}-\frac{\varepsilon}{2}$. Combining the above arguments together, we obtain \eqref{eqn:guarantee for greedy f_Pa}.
\end{proof}

\begin{remark}
Note that \eqref{eqn:guarantee for greedy f_Pa} becomes $f_{Pa}(\mathcal{Y}_a^g)\ge\frac{1}{2}(1-e^{-\gamma_1}) f_{Pa}(\mathcal{Y}_a^{\star})-(\frac{B+c_{\mathop{\max}}}{c_{\mathop{\min}}}+1)\varepsilon$ if $\gamma_2\ge1$. Also note that $\gamma_2\ge1$ can hold when the objective function $f_{Pa}(\cdot)$ is not submodular, as we will see later in our numerical examples. 
\end{remark}

\begin{remark}
The authors in \cite{Tzoumas2020LQG} also extended the analysis of Algorithm~\ref{alg:greedy} to nonsubmodular settings, when the objective function can be exactly evaluated (i.e., $\varepsilon=0$). They obtained a performance guarantee for Algorithm~\ref{alg:greedy} that depends on a submodularity ratio defined in a different manner. One can show that the submodularity ratios defined in Definition~$\ref{def:submodularity ratios}$ are lower bounded by the one defined in \cite{Tzoumas2020LQG}, which further implies that the performance bound (when $\varepsilon$ is $0$) for Algorithm~$\ref{alg:greedy}$ given in Theorem~$\ref{thm:greedy guarantee f_Pa}$ is tighter than that provided in  \cite{Tzoumas2020LQG}.
\end{remark}

Finally, we aim to provide a lower bound on $\gamma_1$ that can be computed in polynomial time. The lower bounds on $\gamma_1$ and $\gamma_2$ together with Theorem~\ref{thm:greedy guarantee f_Pa} will also provide performance guarantees for the greedy algorithm. 
\begin{lemma}(\cite{horn2012matrix})
\label{lemma:eigenvalue ineqs}
For any positive semidefinite matrices $P,Q\in\mathbb{R}^{n\times n}$, $\lambda_1(P)\le\lambda_1(P+Q)\le\lambda_1(P)+\lambda_1(Q)$, and $\lambda_n(P+Q)\ge\lambda_n(P)+\lambda_n(Q)$.
\end{lemma}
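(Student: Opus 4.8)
The plan is to derive all three inequalities directly from the Rayleigh--Ritz (Courant--Fischer) characterization of the extreme eigenvalues of a symmetric matrix, invoking positive semidefiniteness only at the one place a sign is needed. As a preliminary, I would note that since $P,Q\succeq\mathbf{0}$, all eigenvalues of $P$, $Q$, and $P+Q$ are nonnegative, so the paper's magnitude-based convention $|\lambda_1(\cdot)|\ge\cdots\ge|\lambda_n(\cdot)|$ coincides here with the usual ordering; in particular $\lambda_1(M)=\max_{\|v\|=1}v^{\top}Mv$ and $\lambda_n(M)=\min_{\|v\|=1}v^{\top}Mv$ for each of the three (symmetric) matrices $M\in\{P,Q,P+Q\}$.

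For the upper bound $\lambda_1(P+Q)\le\lambda_1(P)+\lambda_1(Q)$, I would take any unit vector $v$ and split $v^{\top}(P+Q)v=v^{\top}Pv+v^{\top}Qv\le\lambda_1(P)+\lambda_1(Q)$, then maximize the left-hand side over $v$. The lower bound $\lambda_n(P+Q)\ge\lambda_n(P)+\lambda_n(Q)$ is symmetric: for every unit $v$, $v^{\top}(P+Q)v=v^{\top}Pv+v^{\top}Qv\ge\lambda_n(P)+\lambda_n(Q)$, and minimizing the left-hand side over $v$ gives the claim.

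For the remaining inequality $\lambda_1(P)\le\lambda_1(P+Q)$, I would let $v$ be a unit eigenvector of $P$ for $\lambda_1(P)$; then $v^{\top}(P+Q)v=\lambda_1(P)+v^{\top}Qv\ge\lambda_1(P)$, where $v^{\top}Qv\ge0$ uses $Q\succeq\mathbf{0}$. Hence $\lambda_1(P+Q)=\max_{\|u\|=1}u^{\top}(P+Q)u\ge v^{\top}(P+Q)v\ge\lambda_1(P)$, completing the argument.

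There is no real obstacle here: the statement is a textbook consequence of the min--max theorem and is quoted verbatim from \cite{horn2012matrix}, so a short self-contained argument as above suffices. The only point warranting a line of care is the preliminary remark that the paper's ordering of eigenvalues by magnitude agrees with the natural ordering on the positive semidefinite cone, which is immediate because all the eigenvalues involved are nonnegative.
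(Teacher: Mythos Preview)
Your argument is correct and self-contained: the three inequalities follow immediately from the Rayleigh--Ritz characterization $\lambda_1(M)=\max_{\|v\|=1}v^{\top}Mv$ and $\lambda_n(M)=\min_{\|v\|=1}v^{\top}Mv$, with positive semidefiniteness invoked only to get $v^{\top}Qv\ge0$ for the first inequality, and your remark reconciling the magnitude ordering with the natural ordering on the PSD cone is appropriate. The paper itself does not supply a proof of this lemma; it is simply quoted from \cite{horn2012matrix}, so there is no approach to compare against and your short derivation fills the gap cleanly.
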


We have the following result; the proof is included in Section~$\ref{sec:proof of lower bound}$ in the Appendix.
\begin{lemma}
\label{lemma:lower bound on gamma_1}
Consider the set function $f_{Pa}:2^{\bar{\mathcal{M}}}\to\mathbb{R}_{\ge0}$ defined in \eqref{eqn:def of f_Pa}. The type-1 greedy submodularity ratio of $f_{Pa}(\cdot)$ given by Definition~$\ref{def:submodularity ratios}$ satisfies
\begin{equation}
\label{eqn:lower bound on gamma_1}
\gamma_1\ge\mathop{\min}_{j\in\{0,\dots,|\mathcal{Y}_2|\}}\frac{\lambda_2(F_p+H(\mathcal{Y}_2^j))\lambda_2(F_p+H(\{z_j\}\cup\mathcal{Y}_2^j))}{\lambda_1(F_p+H(\mathcal{Y}_2^j))\lambda_1(F_p+H(\{z_j\}\cup\mathcal{Y}_2^j))},
\end{equation}
where $\mathcal{Y}_2^j$ contains the first $j$ elements added to $\mathcal{Y}_2$ in Algorithm~$\ref{alg:greedy}$ $\forall j\in\{1,\dots,|\mathcal{Y}_2|\}$ with $\mathcal{Y}_2^0=\emptyset$, $F_p$ is given by \eqref{eqn:def of F_p}, $H(\mathcal{Y})=\sum_{y\in\mathcal{Y}}H_y$ $\forall \mathcal{Y}\subseteq\bar{\mathcal{M}}$ with $H_y\succeq\mathbf{0}$ defined in \eqref{eqn:def of H_y}, and $z_j\in\mathop{\arg}{\min}_{y\in\bar{\mathcal{M}}\setminus\mathcal{Y}_2^j}\frac{\lambda_2(F_p+H(\{y\}\cup\mathcal{Y}_2^j))}{\lambda_1(F_p+H(\{y\}\cup\mathcal{Y}_2^j))}$ $\forall j\in\{1,\dots,|\mathcal{Y}_2|\}$. 
\end{lemma}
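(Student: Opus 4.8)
The plan is to bound, for each fixed $j$ and each fixed $\mathcal{A}$ separately, the ratio of the two sides of \eqref{eqn:sbmr 1}; since $\gamma_1$ is by definition the largest constant for which \eqref{eqn:sbmr 1} holds over all $\mathcal{A}\subseteq\bar{\mathcal{M}}$ and all $j\in\{0,\dots,|\mathcal{Y}_2|\}$, it is at least the minimum over $j$ of these per-$j$ lower bounds. Fix $j$, abbreviate $\mathcal{B}\triangleq\mathcal{Y}_2^j$ and $M\triangleq F_p+H(\mathcal{B})$, and note $M\succ\mathbf{0}$ since $F_p\succ\mathbf{0}$ and $H(\mathcal{B})\succeq\mathbf{0}$. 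Because $H(\cdot)$ is additive over sets and every $y$ appearing below lies outside $\mathcal{B}$, \eqref{eqn:def of f_Pa 2nd} gives $f_{Pa}(\{y\}\cup\mathcal{B})-f_{Pa}(\mathcal{B})=\Tr(M^{-1})-\Tr((M+H_y)^{-1})$ for each such $y$, and $f_{Pa}(\mathcal{A}\cup\mathcal{B})-f_{Pa}(\mathcal{B})=\Tr(M^{-1})-\Tr((M+H_{\mathcal{A}})^{-1})$ with $H_{\mathcal{A}}\triangleq\sum_{y\in\mathcal{A}\setminus\mathcal{B}}H_y\succeq\mathbf{0}$. If $\Tr(H_{\mathcal{A}})=0$ then $H_{\mathcal{A}}=\mathbf{0}$ (each $H_y\succeq\mathbf{0}$) and both sides of \eqref{eqn:sbmr 1} vanish, imposing no constraint; so I may assume $\Tr(H_{\mathcal{A}})>0$, i.e.\ $\mathcal{A}\setminus\mathcal{B}\neq\emptyset$ and the right-hand side is strictly positive.

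The only analytic input is the elementary fact that if $\mathbf{0}\prec P\preceq Q$ are $2\times2$, then $\lambda_i(P)\le\lambda_i(Q)$ for $i\in\{1,2\}$ (Lemma~\ref{lemma:eigenvalue ineqs}), so $\Tr(P^{-1})-\Tr(Q^{-1})=\sum_{i=1}^2\frac{\lambda_i(Q)-\lambda_i(P)}{\lambda_i(P)\lambda_i(Q)}\ge0$ and $\sum_{i=1}^2(\lambda_i(Q)-\lambda_i(P))=\Tr(Q)-\Tr(P)$. Applying this with $(P,Q)=(M,M+H_y)$ and bounding each denominator from above by $\lambda_1(M)\lambda_1(M+H_y)$ yields the marginal lower bound $f_{Pa}(\{y\}\cup\mathcal{B})-f_{Pa}(\mathcal{B})\ge\frac{\Tr(H_y)}{\lambda_1(M)\lambda_1(M+H_y)}$. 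Applying it with $(P,Q)=(M,M+H_{\mathcal{A}})$ and bounding each denominator from below by $\lambda_2(M)\lambda_2(M+H_{\mathcal{A}})$ yields the aggregate upper bound $f_{Pa}(\mathcal{A}\cup\mathcal{B})-f_{Pa}(\mathcal{B})\le\frac{\Tr(H_{\mathcal{A}})}{\lambda_2(M)\lambda_2(M+H_{\mathcal{A}})}$.

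Now I combine them, and this is the step that must be done carefully. Summing the marginal bound over $y\in\mathcal{A}\setminus\mathcal{B}$ and using $\sum_y a_y w_y\ge(\min_y a_y)\sum_y w_y$ with nonnegative weights $w_y=\Tr(H_y)$ gives $\sum_{y\in\mathcal{A}\setminus\mathcal{B}}\big(f_{Pa}(\{y\}\cup\mathcal{B})-f_{Pa}(\mathcal{B})\big)\ge\frac{\Tr(H_{\mathcal{A}})}{\lambda_1(M)\lambda_1(M+H_{\bar y})}$, where $\bar y\in\mathcal{A}\setminus\mathcal{B}$ attains $\max_{y\in\mathcal{A}\setminus\mathcal{B}}\lambda_1(M+H_y)$. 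The key point is that this \emph{same} $\bar y$ sharpens the aggregate upper bound: since $\bar y\in\mathcal{A}\setminus\mathcal{B}$ we have $M+H_{\mathcal{A}}\succeq M+H_{\bar y}$, so $\lambda_2(M+H_{\mathcal{A}})\ge\lambda_2(M+H_{\bar y})$ by Lemma~\ref{lemma:eigenvalue ineqs}, hence $f_{Pa}(\mathcal{A}\cup\mathcal{B})-f_{Pa}(\mathcal{B})\le\frac{\Tr(H_{\mathcal{A}})}{\lambda_2(M)\lambda_2(M+H_{\bar y})}$. Dividing, the factor $\Tr(H_{\mathcal{A}})>0$ cancels and the ratio of the two sides of \eqref{eqn:sbmr 1} is at least $\frac{\lambda_2(M)}{\lambda_1(M)}\cdot\frac{\lambda_2(M+H_{\bar y})}{\lambda_1(M+H_{\bar y})}$. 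Finally $\bar y\in\bar{\mathcal{M}}\setminus\mathcal{B}$, so replacing it by $z_j\in\mathop{\arg\min}_{y\in\bar{\mathcal{M}}\setminus\mathcal{Y}_2^j}\frac{\lambda_2(M+H_y)}{\lambda_1(M+H_y)}$ only decreases the bound; since $M=F_p+H(\mathcal{Y}_2^j)$ and $M+H_{z_j}=F_p+H(\{z_j\}\cup\mathcal{Y}_2^j)$, this is exactly the $j$-th term inside the minimum in \eqref{eqn:lower bound on gamma_1}. Taking the minimum over $j$ finishes the proof (for indices $j$ with $\bar{\mathcal{M}}\setminus\mathcal{Y}_2^j=\emptyset$, \eqref{eqn:sbmr 1} is vacuous, so no $z_j$ is needed there).

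I expect the main obstacle to be exactly this pairing: the crude estimate $\lambda_2(M+H_{\mathcal{A}})\ge\lambda_2(M)$ gives only the weaker bound $\lambda_2(M)^2/(\lambda_1(M)\max_y\lambda_1(M+H_y))$, so one must commit to the \emph{maximizer} $\bar y$ of $\lambda_1(M+H_y)$ — which is precisely what the weighted-minimum step produces on the left — and then exploit $M+H_{\mathcal{A}}\succeq M+H_{\bar y}$ on the right, so that the spectral factors line up before $\Tr(H_{\mathcal{A}})$ is cancelled. The non-subadditivity of $H\mapsto\Tr(M^{-1})-\Tr((M+H)^{-1})$, which is the structural reason $\gamma_1$ can be strictly below $1$, need not be invoked explicitly; it is what prevents the right-hand side of \eqref{eqn:lower bound on gamma_1} from equaling $1$.
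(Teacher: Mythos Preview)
Your proof is correct and follows essentially the same route as the paper's: write $\Tr(M^{-1})-\Tr((M+H)^{-1})=\sum_{i=1}^2\frac{\lambda_i(M+H)-\lambda_i(M)}{\lambda_i(M)\lambda_i(M+H)}$, bound the denominators by $\lambda_1\lambda_1$ on the left and $\lambda_2\lambda_2$ on the right so that both numerators collapse to $\Tr(H_{\mathcal{A}})$ and cancel, and---crucially---use the \emph{same} element $\bar y=z'\in\arg\max_{y\in\mathcal{A}\setminus\mathcal{Y}_2^j}\lambda_1(M+H_y)$ in both bounds (via $M+H_{\mathcal{A}}\succeq M+H_{\bar y}$ on the right) before minimizing over $\bar{\mathcal{M}}\setminus\mathcal{Y}_2^j$ to get $z_j$. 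Your discussion of why this pairing is necessary is exactly the point the paper leaves implicit.
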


Recalling our arguments at the beginning of this section, we may only obtain approximations of the entries in the ($2$ by $2$) matrix $F_p+H(\mathcal{Y})$ for $\mathcal{Y}\subseteq\bar{\mathcal{M}}$ using, e.g., numerical integration, where $H_y$ (resp., $F_p$) is defined in \eqref{eqn:def of H_y} (resp.,  \eqref{eqn:def of F_p}). Specifically, for all $\mathcal{Y}\subseteq\bar{\mathcal{M}}$, let $\hat{H}(\mathcal{Y})=(F_p+H(\mathcal{Y}))+E(\mathcal{Y})$ be the approximation of $F_p+H(\mathcal{Y})$, where each entry of $E(\mathcal{Y})\in\mathbb{R}^{2\times 2}$ represents the approximation error of the corresponding entry in $F_p+H(\mathcal{Y})$. Since $F_p$ and $H(\mathcal{Y})$ are positive semidefinite matrices, $E(\mathcal{Y})$ is a symmetric matrix. Now, using a standard eigenvalue perturbation result, e.g., Corollary~6.3.8 in  \cite{horn2012matrix}, one can obtain that $\sum_{i=1}^2|\lambda_i(F_p+H(\mathcal{Y}))-\lambda_i(\hat{H}(\mathcal{Y})|^2\le\|E(\mathcal{Y})\|_F^2$ for all $\mathcal{Y}\subseteq\bar{\mathcal{M}}$, where $\|E(\mathcal{Y})\|_F\triangleq\sqrt{\Tr(E(\mathcal{Y})^{\top}E(\mathcal{Y}))}$ denotes the Frobenius norm of a matrix. It then follows that
\begin{equation*}
\frac{\lambda_2(F_p+H(\mathcal{Y}))}{\lambda_1(F_p+H(\mathcal{Y}))}\ge\frac{\lambda_2(\hat{H}(\mathcal{Y}))-\|E(\mathcal{Y})\|_F}{\lambda_1(\hat{H}(\mathcal{Y}))+\|E(\mathcal{Y})\|_F}\ge\frac{\lambda_2(\hat{H}(\mathcal{Y}))-\varepsilon^{\prime}}{\lambda_1(\hat{H}(\mathcal{Y}))+\varepsilon^{\prime}}\quad\forall \mathcal{Y}\subseteq\bar{\mathcal{M}},
\end{equation*}
where $\varepsilon^{\prime}\in\mathbb{R}_{\ge0}$ and satisfies $\|E(\mathcal{Y})\|_F\le\varepsilon^{\prime}$ for all $\mathcal{Y}\subseteq\bar{\mathcal{M}}$. Combining the above arguments with \eqref{eqn:lower bound on gamma_1} in Lemma~\ref{lemma:lower bound on gamma_1}, we obtain a lower bound on $\gamma_1$ that can be computed using $O(|\bar{\mathcal{M}}|^2)$ function calls of $\hat{H}(\cdot)$.
 
\subsubsection{Illustrations}
\label{sec:illustrations}
Note that one can further obtain from \eqref{eqn:lower bound on gamma_1}:
\begin{equation}
\label{eqn:lower bound on gamma_1 2nd}
\gamma_1\ge\mathop{\min}_{j\in\{0,\dots,|\mathcal{Y}_2|\}}\frac{\lambda_2(F_p)+\lambda_2(H(\mathcal{Y}_2^j))}{\lambda_1(F_p)+\lambda_1(H(\mathcal{Y}_2^j))}\cdot\frac{\lambda_2(F_p)+\lambda_2(H(z_j))+\lambda_2(H(\mathcal{Y}_2^j))}{\lambda_1(F_p)+\lambda_1(H(z_j))+\lambda_1(H(\mathcal{Y}_2^j))},
\end{equation}
where $z_j\in\mathop{\arg}{\min}_{y\in\bar{\mathcal{M}}\setminus\mathcal{Y}_2^j}\frac{\lambda_2(F_p+H(\{y\}\cup\mathcal{Y}_2^j))}{\lambda_1(F_p+H(\{y\}\cup\mathcal{Y}_2^j))}$. Supposing $F_p$ is fixed, we see from \eqref{eqn:lower bound on gamma_1 2nd} that the lower bound on $\gamma_1$ would potentially increase if $\lambda_2(H(z_j))/\lambda_1(H(z_j))$ and $\lambda_2(H(\mathcal{Y}_2^j))/\lambda_1(H(\mathcal{Y}_2^j))$ increase. Recall that $F_p$ given by \eqref{eqn:def of F_p} encodes the prior knowledge that we have about $\theta=[\beta\ \delta]^T$. Moreover, recall from \eqref{eqn:def of H_y} that $H(y)$ depends on the prior pdf $p(\theta)$ and the dynamics of the SIR model in \eqref{eqn:SIR single node}. Thus, the lower bound given by Lemma~$\ref{lemma:lower bound on gamma_1}$ and thus the corresponding performance bound of Algorithm~$\ref{alg:greedy}$ given in Theorem~$\ref{thm:greedy guarantee f_Pa}$ depend on the prior knowledge that we have on $\theta=[\beta\ \delta]^T$ and the dynamics of the SIR model.  Also note that the performance bounds given in Theorem~$\ref{thm:greedy guarantee f_Pa}$ are {\it worst-case} performance bounds for Algorithm~$\ref{alg:greedy}$. Thus, in practice the ratio between a solution returned by the algorithm and an optimal solution can be smaller than the ratio predicted by Theorem~$\ref{thm:greedy guarantee f_Pa}$, as we will see in our simulations in next section. Moreover, instances with tighter performance bounds potentially imply better performance of the algorithm when applied to those instances. Similar arguments hold for the performance bounds provided in Theorem~$\ref{thm:greedy guarantee f_Pd}$.

\subsubsection{Simulations}
\label{sec:simulations}
To validate the theoretical results in Theorems~$\ref{thm:greedy guarantee f_Pd}$ and $\ref{thm:greedy guarantee f_Pa}$, and Lemma~$\ref{lemma:lower bound on gamma_1}$, we consider various PEMS instances.\footnote{In our simulations, we neglect the approximation error corresponding to the numerical integrations discussed in Section~\ref{sec:greedy algorithm for PEMS}, since the error terms are found to be sufficiently small.} The directed network $\mathcal{G}=\{\mathcal{V},\mathcal{E}\}$ is given by Fig.~$\ref{fig:parameters}(a)$. According to the existing literature about the estimated infection and recovery rates for the COVID-19 pandemic (e.g., \cite{prem2020effect}), we assume that the infection rate $\beta$ and the recovery rate $\delta$ lie in the intervals $[3,7]$ and $[1,4]$, respectively. Let the prior pdf of $\beta$ (resp., $\delta$) be a (linearly transformed) Beta distribution with parameters $\alpha_1=6$ and $\alpha_2=3$ (resp., $\alpha_1=3$ and $\alpha_2=4$), where $\beta$ and $\delta$ are also assumed to be independent. The prior pdfs of $\beta$ and $\delta$ are then plotted in Fig.~$\ref{fig:parameters}(b)$ and Fig.~$\ref{fig:parameters}(c)$, respectively. We set the sampling parameter $h=0.1$. We then randomly generate the weight matrix $A\in\mathbb{R}^{5\times 5}$ such that Assumptions~$\ref{ass:initial condition}$-$\ref{ass:model parameters}$ are satisfied, where each entry of $A$ is drawn (independently) from certain uniform distributions. The initial condition is set to be $s_1[0]=0.95$, $x_1[0]=0.05$ and $r_1[0]=0$, and $s_i[0]=0.99$, $x_i[0]=0.01$ and $r_i[0]=0$ for all $i\in\{2,\dots,5\}$. In the  pmfs of measurements $\hat{x}_i[k]$ and $\hat{r}_i[k]$ given in Eq.~\eqref{eqn:pmf of xi} and Eq.~\eqref{eqn:pmf of ri}, respectively, we set $N_i^x=N_i^r=100$ and $N_i=1000$ $\forall i\in\mathcal{V}$, where $N_i$ is the total population at node $i$.

\setkeys{Gin}{width=0.3\textwidth}
 \begin{figure}[htbp]
 \center 
 \subfloat[a][Network]{
 \includegraphics[width=0.25\linewidth]{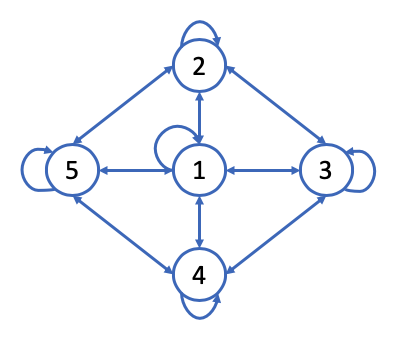}
 }
 \subfloat[b][Prior pdf of $\beta$]{
 \includegraphics[width=0.30\linewidth]{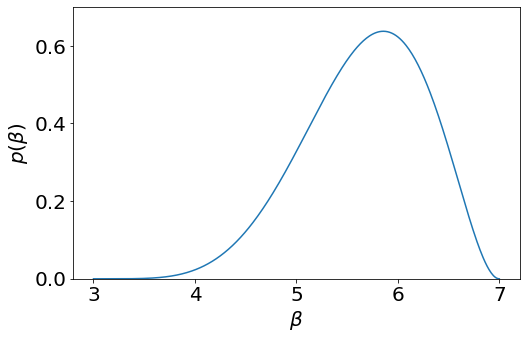}}
 \subfloat[c][Prior pdf of $\delta$]{
 \includegraphics[width=0.30\linewidth]{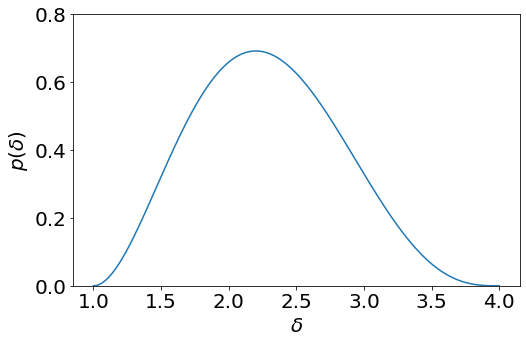}}\\
 \caption{Network structure and prior pdfs of $\beta$ and $\delta$.}
 \label{fig:parameters}
 \end{figure}
  
First, let us consider PEMS instances with a relatively smaller size. In such instances, we set the time steps $t_1=t_2=5$, i.e., we only consider collecting measurements at time step $t=5$. In the sets $\mathcal{C}_{5,i}=\{\zeta c_{5,i}:\zeta\in(\{0\}\cup[\zeta_i])\}$ and 
$\mathcal{B}_{5,i}=\{\eta b_{5,i}:\eta\in(\{0\}\cup[\eta_i])\}$, we let $c_{5,i}=b_{5,i}$ and $\zeta_i=\eta_i=2$ for all $i\in\mathcal{V}$, and draw $c_{5,i}$ and $b_{5,i}$ uniformly randomly from $\{1,2,3\}$. Here, we can choose to perform $0$, $100$, or $200$ virus (or antibody) tests at a node $i\in\mathcal{V}$ and at $k=5$. In Fig.~$\ref{fig:small instances}(a)$, we consider the objective function $f_{Pd}(\cdot)$, given by Eq.~\eqref{eqn:def of f_Pd}, in the PEMS instances constructed above, and plot the greedy solutions and the optimal solutions (found by brute force) to the PEMS instances under different values of budget $B$. Note that for all the simulation results in this section, we obtain the averaged results from $50$ randomly generated $A$ matrices as described above, for each value of $B$. As shown in Theorem~$\ref{thm:greedy guarantee f_Pd}$, the greedy algorithm yields a $\frac{1}{2}(1-e^{-1})\approx0.31$ approximation for $f_{Pd}(\cdot)$ (in the worst case), and the results in Fig.~$\ref{fig:small instances}(a)$ show that the greedy algorithm performs near optimally for the PEMS instances generated above. Similarly, in Fig.~$\ref{fig:small instances}(b)$, we plot the greedy solutions and the optimal solutions to the PEMS instances constructed above under different values of $B$, when the objective function is $f_{Pa}(\cdot)$ given in Eq.~\eqref{eqn:def of f_Pa}. Again, the results in Fig.~$\ref{fig:small instances}(b)$ show that the greedy algorithm performs well for the constructed PEMS instances. Moreover, according to Lemma~$\ref{lemma:lower bound on gamma_1}$, we plot the lower bound on the submodularity ratio $\gamma_1$ of $f_{Pa}(\cdot)$ in Fig.~$\ref{fig:small instances}(c)$. Here, we note that the submodularity ratio $\gamma_2$ of $f_{Pa}(\cdot)$ is always greater than one in the PEMS instances constructed above. Hence, Theorem~$\ref{thm:greedy guarantee f_Pa}$ yields a $\frac{1}{2}(1-e^{-\gamma_1})$ worst-case approximation guarantee for the greedy algorithm, where $\frac{1}{2}(1-e^{-0.3})\approx0.13$.

 \begin{figure}[htbp]
 \center 
 \subfloat[a][OPT vs. Greedy for $f_{Pd}(\cdot)$]{
 \includegraphics[width=0.3\linewidth]{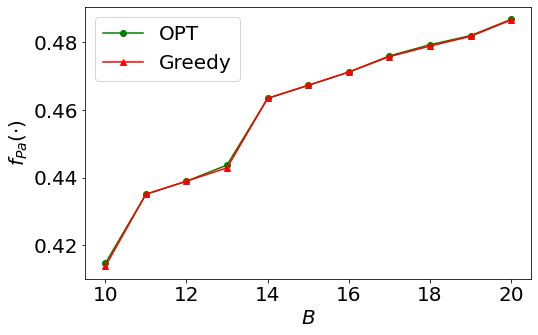}
 }
 \subfloat[b][OPT vs. Greedy for $f_{Pa}(\cdot)$]{
 \includegraphics[width=0.3\linewidth]{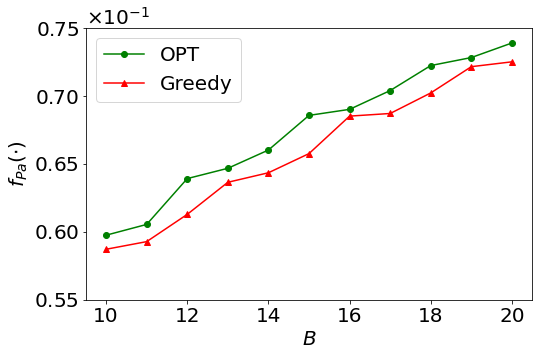}}
 \subfloat[c][Bound on $\gamma_1$]{
 \includegraphics[width=0.3\linewidth]{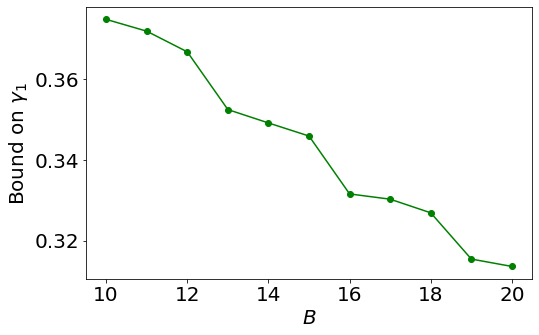}}\\
 \caption{Results for PEMS instances of medium size.}
 \label{fig:small instances}
 \end{figure}
  
We then investigate the performance of the greedy algorithm for PEMS instances of a larger size. Since the optimal solution to the PEMS instances cannot be efficiently obtained when the sizes of the instances become large, we obtain the lower bound on the submodularity ratio $\gamma_1$ of $f_{Pa}(\cdot)$ provided in Lemma~$\ref{lemma:lower bound on gamma_1}$, which can be computed in polynomial time. Different from the smaller instances constructed above, we set $t_1=1$ and $t_2=5$. We let $\zeta_i=\eta_i=10$ for all $i\in\mathcal{V}$ in $\mathcal{C}_{k,i}=\{\zeta c_{k,i}:\zeta\in(\{0\}\cup[\zeta_i])\}$ and $\mathcal{B}_{k,i}=\{\eta b_{k,i}:\eta\in(\{0\}\cup[\eta_i])\}$, where we also set $c_{k,i}=b_{k,i}$ and draw $c_{k,i}$ and $b_{k,i}$ uniformly randomly from $\{1,2,3\}$, for all $k\in[5]$ and for all $i\in\mathcal{V}$. Moreover, we modify the parameter of the Beta distribution corresponding to the pdf of $\beta$ to be $\alpha_1=8$ and $\alpha_2=3$. Here, we can choose to perform $0$, $100$, $200$, ..., or $1000$ virus (or antibody) tests at a node $i\in\mathcal{V}$ and at $k\in[5]$. In Fig.~$\ref{fig:large instances}(a)$, we plot the lower bound on $\gamma_1$ obtained from the PEMS instances constructed above.  We note that the submodularity ratio $\gamma_2$ of $f_{Pa}(\cdot)$ is also always greater than one. Hence, Theorem~$\ref{thm:greedy guarantee f_Pa}$ yields a $\frac{1}{2}(1-e^{-\gamma_1})$ worst-case approximation guarantee for the greedy algorithm. We plot in Fig.~$\ref{fig:large instances}(b)$ the approximation guarantee using the lower bound that we obtained on $\gamma_1$.

 \begin{figure}[htbp]
 \center 
 \subfloat[a][Bound on $\gamma_1$]{
 \includegraphics[width=0.37\linewidth]{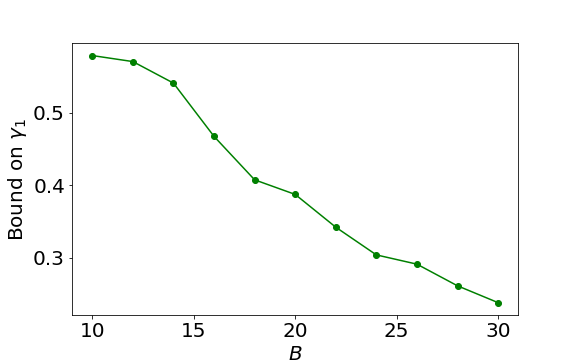}
 }
 \subfloat[b][Approx. guarantee of greedy]{
 \includegraphics[width=0.33\linewidth]{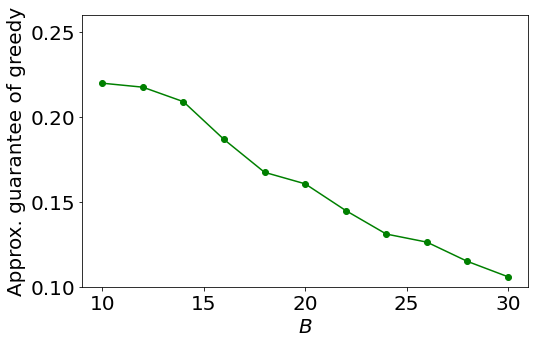}}\\
 \caption{Results for PEMS instances of large size.}
 \label{fig:large instances}
 \end{figure}

\section{Conclusion}
We first considered the PIMS problem under the exact measurement setting, and showed that the problem is NP-hard. We then proposed an approximation algorithm that returns a solution to the PIMS problem that is within a certain factor of the optimal one. Next, we studied the PEMS problem under the noisy measurement setting. Again, we showed that the problem is NP-hard. We applied a greedy algorithm to solve the PEMS problem, and provided performance guarantees on the greedy algorithm. We presented numerical examples to validate the obtained performance bounds of the greedy algorithm, and showed that the greedy algorithm performs well in practice. 

\section{Appendix} 
\subsection{Proof of Lemma~$\ref{lemma:propagate of initial condition}$}\label{sec:proof of results about SIR}
We first prove part~$(a)$. Considering any $i\in\mathcal{V}$ and any $k\in\mathbb{Z}_{\ge0}$, we note from Eq.~\eqref{eqn:SIR S} that
\begin{equation}
\label{eqn:positive of s}
s_i[k+1]=s_i[k](1-h\beta\sum_{j\in\bar{\mathcal{N}}_i}a_{ij}x_j[k]).
\end{equation}
Under Assumptions~$\ref{ass:initial condition}$-$\ref{ass:model parameters}$, we have $x_i[k]\in[0,1]$ for all $i\in\mathcal{V}$ as argued in Section~$\ref{sec:preliminaries}$, and $h\beta\sum_{j\in\bar{\mathcal{N}}_i}a_{ij}<1$ for all $i\in\mathcal{V}$, which implies $1-h\beta\sum_{j\in\bar{\mathcal{N}}_i}a_{ij}x_j[k]\ge 1-h\beta\sum_{j\in\bar{\mathcal{N}}_i}a_{ij}>0$. Supposing $s_i[k]>0$, we have from Eq.~\eqref{eqn:positive of s} $s_i[k+1]>0$. Combining the above arguments with the fact $s_i[0]\in(0,1]$ from Assumption~$\ref{ass:initial condition}$, we see that $s_i[k]>0$ for all $k\in\mathbb{Z}_{\ge0}$. Noting that $s_i[k],x_i[k],r_i[k]\in[0,1]$ with $s_i[k]+x_i[k]+r_i[k]=1$ for all $i\in\mathcal{V}$ and for all $k\in\mathbb{Z}_{\ge0}$ as argued in Section~$\ref{sec:preliminaries}$ and that $x_i[0]\in[0,1)$ and $r_i[0]=0$ for all $i\in\mathcal{V}$, the result in part~$(a)$ also implies $x_i[k],r_i[k]\in[0,1)$ for all $i\in\mathcal{V}$ and for all $k\in\mathbb{Z}_{\ge0}$.

One can then observe that in order to prove parts~$(b)$-$(d)$, it is sufficient to prove the following facts.
\begin{fact}
\label{fact:nonzero state x}
Consider any $i\in\mathcal{V}$ and any $k_1\in\mathbb{Z}_{\ge0}$. If $x_i[k_1]>0$, then $x_i[k_2]>0$ for all $k_2\in\mathbb{Z}_{\ge0}$ with $k_2\ge k_1$.
\end{fact}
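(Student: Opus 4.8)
The plan is to prove Fact~\ref{fact:nonzero state x} by a one-step induction built directly on the update equation \eqref{eqn:SIR I}. The key observation is that under Assumptions~\ref{ass:initial condition}--\ref{ass:model parameters} every term on the right-hand side of \eqref{eqn:SIR I} is nonnegative, while the coefficient multiplying $x_i[k]$ is \emph{strictly} positive, so positivity of $x_i[k]$ is carried forward one step at a time.

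First I would collect the ingredients already available. From Assumption~\ref{ass:model parameters} we have $h\delta<1$, hence $1-h\delta>0$; from part~$(a)$ of Lemma~\ref{lemma:propagate of initial condition} (already established above) and the accompanying observation that $x_j[k]\in[0,1)$, we have $s_i[k]>0$ and $x_j[k]\ge 0$ for all $j\in\mathcal{V}$ and all $k\in\mathbb{Z}_{\ge0}$; and $\beta>0$, $a_{ij}\ge 0$ for all $i,j$. Consequently the ``infection term'' $h s_i[k]\beta\sum_{j\in\bar{\mathcal{N}}_i}a_{ij}x_j[k]$ is nonnegative for every $k$.

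The inductive step is then immediate: suppose $x_i[k]>0$ for some $k\ge k_1$. Dropping the nonnegative infection term in \eqref{eqn:SIR I} gives $x_i[k+1]\ge(1-h\delta)x_i[k]>0$, where the strict inequality uses $1-h\delta>0$ together with the induction hypothesis. Taking the base case $x_i[k_1]>0$ and iterating yields $x_i[k_2]>0$ for all $k_2\ge k_1$, which is the claim.

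I do not anticipate a genuine obstacle; the only points requiring care are that one must already have $s_i[k]>0$ for all $k$ (part~$(a)$, proved before this fact) so that the infection term is legitimately nonnegative rather than merely asserted, and that the strict positivity is propagated by the recovery coefficient $1-h\delta$ rather than by the infection term, which may vanish (e.g.\ when all in-neighbors of $i$ are currently uninfected and $a_{ii}=0$). Once Fact~\ref{fact:nonzero state x} is in hand, the analogous monotonicity statement for $r_i[k]$ and the conclusions of parts~$(b)$--$(d)$ of Lemma~\ref{lemma:propagate of initial condition} follow by tracking the distances $d_i$ along shortest directed paths from $\mathcal{S}_I$.
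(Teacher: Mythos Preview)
Your proof is correct and follows essentially the same approach as the paper: both argue directly from Eq.~\eqref{eqn:SIR I} that the first term $(1-h\delta)x_i[k]$ is strictly positive (using $h\delta<1$ and the hypothesis $x_i[k]>0$) while the second term is nonnegative, and then iterate. Your write-up is in fact slightly more explicit in justifying nonnegativity of the infection term by invoking $s_i[k]>0$ from part~$(a)$ and $x_j[k]\ge 0$.
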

\begin{fact}
\label{fact:switch point of zero state}
Consider any $i\in\mathcal{V}$ and any $k\in\mathbb{Z}_{\ge0}$ such that $x_i[k]=0$. If there exists $j\in\mathcal{N}_i$ such that $x_j[k]>0$, then $x_i[k+1]>0$. If $x_j[k]=0$ for all $j\in\mathcal{N}_i$, then $x_i[k+1]=0$.
\end{fact}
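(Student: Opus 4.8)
The plan is to read both facts straight off the single update equation \eqref{eqn:SIR I}, with no induction beyond one step. Fix $i\in\mathcal{V}$ and $k\in\mathbb{Z}_{\ge0}$ with $x_i[k]=0$, and recall from \eqref{eqn:SIR I} that
\[
x_i[k+1]=(1-h\delta)x_i[k]+hs_i[k]\beta\sum_{j\in\bar{\mathcal{N}}_i}a_{ij}x_j[k].
\]
Since $x_i[k]=0$, the first term vanishes, and the $j=i$ summand $a_{ii}x_i[k]$ vanishes as well, so $x_i[k+1]=hs_i[k]\beta\sum_{j\in\mathcal{N}_i}a_{ij}x_j[k]$; in particular $x_i[k+1]$ is governed entirely by the infected proportions of the \emph{in-neighbours} of $i$.

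Next I would invoke the already-established ingredients: $h,\beta\in\mathbb{R}_{>0}$ and $a_{ij}\in\mathbb{R}_{>0}$ for every $j\in\mathcal{N}_i$ (here one uses that $j\in\mathcal{N}_i$ forces $j\neq i$, so Assumption~\ref{ass:model parameters} applies), $s_i[k]>0$ by Lemma~\ref{lemma:propagate of initial condition}$(a)$, and $x_j[k]\in[0,1)$ — in particular $x_j[k]\ge0$ — for all $j\in\mathcal{V}$, again from part~$(a)$. Hence every summand $a_{ij}x_j[k]$ is non-negative and the prefactor $hs_i[k]\beta$ is strictly positive. The conclusion is then a two-line case analysis: if there is $j\in\mathcal{N}_i$ with $x_j[k]>0$, then $a_{ij}x_j[k]>0$ while the remaining summands are $\ge0$, so $\sum_{j\in\mathcal{N}_i}a_{ij}x_j[k]>0$ and $x_i[k+1]>0$; if instead $x_j[k]=0$ for all $j\in\mathcal{N}_i$, the sum is $0$ and $x_i[k+1]=0$. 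This gives Fact~\ref{fact:switch point of zero state}. (Fact~\ref{fact:nonzero state x} is even shorter: if $x_i[k_1]>0$, dropping the non-negative neighbour term in \eqref{eqn:SIR I} yields $x_i[k_1+1]\ge(1-h\delta)x_i[k_1]>0$ because $h\delta<1$, and an induction on $k_2\ge k_1$ finishes it.)

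I do not expect a genuine obstacle: the argument is a direct computation. The only points needing a moment's care are the bookkeeping with neighbourhood sets — one must notice that the self-loop term drops out precisely because $x_i[k]=0$, so the analysis can be phrased over $\mathcal{N}_i$, where $a_{ij}>0$ is available from Assumption~\ref{ass:model parameters} — and making sure part~$(a)$ of Lemma~\ref{lemma:propagate of initial condition} (positivity of $s_i[k]$, non-negativity of $x_j[k]$) is in hand before the case split. Together, Facts~\ref{fact:nonzero state x} and~\ref{fact:switch point of zero state} say that infection, once present at a node, persists forever and reaches each out-neighbour in exactly one step, which is exactly what drives the shortest-path characterization in parts~$(b)$–$(d)$ of the lemma.
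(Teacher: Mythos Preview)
Your proof is correct and follows essentially the same route as the paper: reduce \eqref{eqn:SIR I} under $x_i[k]=0$ to $x_i[k+1]=hs_i[k]\beta\sum_{j\in\mathcal{N}_i}a_{ij}x_j[k]$, then use $s_i[k]>0$ from part~$(a)$ together with $h,\beta,a_{ij}>0$ (Assumption~\ref{ass:model parameters}) to read off both cases. Your only addition is making explicit why the self-loop term drops and why the remaining summands are non-negative, which the paper leaves implicit.
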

\begin{fact}
\label{fact:nonzero state r}
Consider any $i\in\mathcal{V}$ and any $k_1\in\mathbb{Z}_{\ge0}$. If $x_i[k_1]>0$, then $r_i[k_1+1]>0$. If $x_i[k_1]=0$, then $r_i[k_1+1]=0$.
\end{fact}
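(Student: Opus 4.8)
The plan is to read off both halves of Fact~\ref{fact:nonzero state r} directly from the recovery recursion~\eqref{eqn:SIR R}, namely $r_i[k+1]=r_i[k]+h\delta x_i[k]$, together with three already-available ingredients: the nonnegativity $r_i[k],x_i[k]\in[0,1)$ established in Section~\ref{sec:preliminaries} and via part~$(a)$ of Lemma~\ref{lemma:propagate of initial condition}, the strict positivity $h\delta>0$ from Assumption~\ref{ass:model parameters}, and the initial condition $r_i[0]=0$ from Assumption~\ref{ass:initial condition}.

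For the first claim, assume $x_i[k_1]>0$ and evaluate \eqref{eqn:SIR R} at $k=k_1$, giving $r_i[k_1+1]=r_i[k_1]+h\delta x_i[k_1]$. Since $r_i[k_1]\ge 0$ and $h\delta x_i[k_1]>0$, we obtain $r_i[k_1+1]\ge h\delta x_i[k_1]>0$. For the second claim, assume $x_i[k_1]=0$; then \eqref{eqn:SIR R} reduces to $r_i[k_1+1]=r_i[k_1]$, so it suffices to show $r_i[k_1]=0$. The key step is that $x_i[k_1]=0$ forces the infection proportion to vanish at all earlier times: if $x_i[k]>0$ for some $k\in\{0,\dots,k_1-1\}$, then Fact~\ref{fact:nonzero state x} (with that $k$ playing the role of $k_1$) would give $x_i[k_1]>0$, a contradiction; hence $x_i[k]=0$ for all $k\in\{0,\dots,k_1-1\}$. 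Unrolling \eqref{eqn:SIR R} from $0$ to $k_1-1$ then yields
\[
r_i[k_1]=r_i[0]+h\delta\sum_{k=0}^{k_1-1}x_i[k]=0,
\]
so $r_i[k_1+1]=r_i[k_1]=0$, as claimed (the case $k_1=0$ being immediate from $r_i[0]=0$).

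The computation itself is short, so I do not anticipate a genuine obstacle; the one point requiring care is the order in which the three facts are proved, since the second half of Fact~\ref{fact:nonzero state r} invokes Fact~\ref{fact:nonzero state x}. I would therefore establish Fact~\ref{fact:nonzero state x} first --- by the analogous induction on the infection recursion~\eqref{eqn:SIR I}, using $1-h\delta>0$ and the nonnegativity of the incoming term $hs_i[k]\beta\sum_{j\in\bar{\mathcal{N}}_i}a_{ij}x_j[k]$ (where $s_i[k]>0$ by part~$(a)$ of Lemma~\ref{lemma:propagate of initial condition}) --- and only then turn to Fact~\ref{fact:nonzero state r}.
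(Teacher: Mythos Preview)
Your proposal is correct and follows essentially the same approach as the paper: both halves are read off from the recursion~\eqref{eqn:SIR R}, with the second half invoking Fact~\ref{fact:nonzero state x} (the paper's Fact~1) to force $x_i[k]=0$ for all $k\le k_1$ and then unrolling back to $r_i[0]=0$. Your remark about the ordering of the facts is also exactly how the paper structures it.
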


Let us first prove Fact~$\ref{fact:nonzero state x}$. Consider any $i\in\mathcal{V}$ and any $k\in\mathbb{Z}_{\ge0}$. Supposing $x_i[k]>0$, we have from Eq.~\eqref{eqn:SIR single node}
\begin{equation}
\label{eqn:positive time step k+1}
x_i[k+1]=(1-h\delta)x_i[k]+hs_i[k]\beta\sum_{j\in\bar{\mathcal{N}}_i}a_{ij}x_j[k],
\end{equation}
where the first term on the right-hand side of the above equation is positive, since $1-h\delta>0$ from Assumption~$\ref{ass:model parameters}$, and the second term on the right-hand side of the above equation is nonnegative. It then follows that $x_i[k+1]>0$. Repeating the above argument proves Fact~$\ref{fact:nonzero state x}$.

We next prove Fact~$\ref{fact:switch point of zero state}$. Considering any $i\in\mathcal{V}$ and any $k\in\mathbb{Z}_{\ge0}$ such that $x_i[k]=0$, we note from Eq.~\eqref{eqn:SIR single node} that 
\begin{equation}
\label{eqn:state at time k+1}
x_i[k+1]=hs_i[k]\beta\sum_{j\in\mathcal{N}_i}a_{ij}x_j[k],
\end{equation}
where $s_i[k]>0$ as shown in part~$(a)$. Suppose there exists $j\in\mathcal{N}_i$ such that $x_j[k]>0$. Since $h,\beta\in\mathbb{R}_{>0}$ and $a_{ij}>0$ for all $j\in\mathcal{N}_i$ from Assumption~$\ref{ass:model parameters}$, we have from Eq.~\eqref{eqn:state at time k+1} $x_i[k+1]>0$. Next, supposing $x_j[k]=0$ for all $j\in\mathcal{N}_i$, we obtain from Eq.~\eqref{eqn:state at time k+1} $x_i[k+1]=0$. This proves Fact~$\ref{fact:switch point of zero state}$.

Finally, we prove Fact~$\ref{fact:nonzero state r}$. Let us consider any $i\in\mathcal{V}$ and any $k_1\in\mathbb{Z}_{\ge0}$. Suppose $x_i[k_1]>0$. Since $h,\delta\in\mathbb{R}_{>0}$ from Assumption~$\ref{ass:model parameters}$, we have from Eq.~\eqref{eqn:SIR R} $r_i[k_1+1]=r_i[k]+h\delta x_i[k_1]>0$. Next, supposing $x_i[k_1]=0$, we note from Fact~$\ref{fact:nonzero state x}$ that $x_i[k_1^{\prime}]=0$ for all $k_1^{\prime}\le k_1$. It then follows from Eq.~\eqref{eqn:SIR R} and Assumption~$\ref{ass:initial condition}$ that $r_i[k_1+1]=r_i[k_1]=\cdots=r_i[0]=0$, completing the proof of Fact~$\ref{fact:nonzero state r}$.$\hfill\square$

\subsection{Proof of Theorem~$\ref{thm:PIMS NP-hard}$}\label{sec:proof of PIMS NP-hard}
We show that PIMS is NP-hard via a polynomial reduction from the exact cover by 3-sets (X3C) problem which is known to be NP-complete \cite{garey1979computers}. 
\begin{problem}
\label{pro:X3C instance}
Consider $\mathcal{X}=\{1,2,\dots,3m\}$ and a collection $\mathcal{Z}=\{z_1,z_2,\dots,z_{\tau}\}$ of 3-element subsets of $\mathcal{X}$, where $\tau\ge m$. The X3C problem is to determine if there is an exact cover for $\mathcal{X}$, i.e., a subcollection $\mathcal{Z}^{\prime}\subseteq\mathcal{Z}$ such that every element of $\mathcal{X}$ occurs in exactly one member of $\mathcal{Z}^{\prime}$.
\end{problem}

\begin{figure}[hp]
\centering
\captionsetup{justification=centering}
\includegraphics[width=0.47\linewidth]{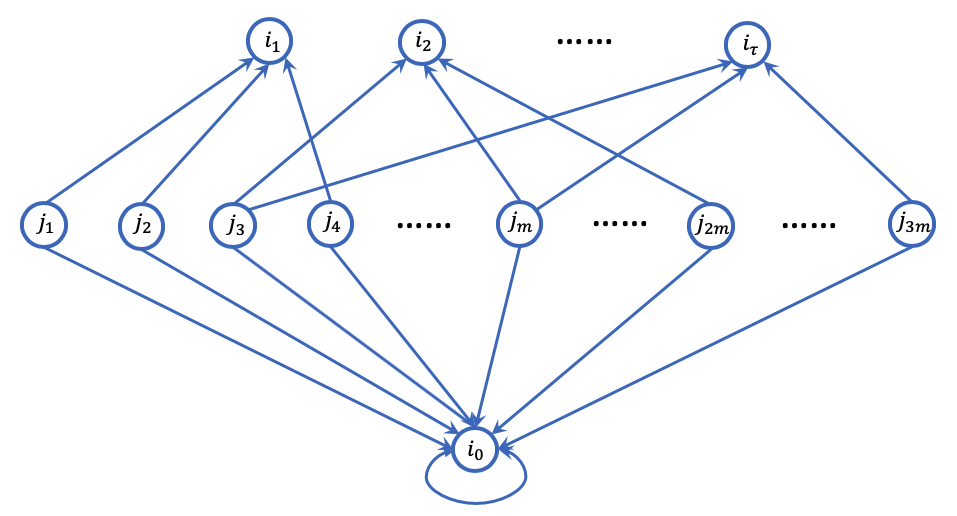}
\center
\caption{Graph $\mathcal{G}=\{\mathcal{V},\mathcal{E}\}$ constructed in the proof of Theorem~$\ref{thm:PIMS NP-hard}$.}
\label{fig:PIMS plot}
\end{figure}

Consider an instance of the X3C problem given by a set $\mathcal{X}=\{1,\dots,3m\}$ and a collection $\mathcal{Z}=\{z_1,\dots,z_{\tau}\}$ of 3-element subsets of $\mathcal{X}$, where $\tau\ge m$. We then construct an instance of the PIMS problem as follows. The node set of the graph $\mathcal{G}=\{\mathcal{V},\mathcal{E}\}$ is set to be $\mathcal{V}=\{i_0,i_1,\dots,i_{\tau}\}\cup\{j_1,j_2,\dots,j_{3m}\}$. The edge set of $\mathcal{G}=\{\mathcal{V},\mathcal{E}\}$ is set to satisfy that $(j_q,i_l)\in\mathcal{E}$ if $q\in\mathcal{X}$ is contained in $z_l\in\mathcal{Z}$, $(j_q,i_0)\in\mathcal{E}$ for all $q\in\mathcal{X}$, and $(i_0,i_0)\in\mathcal{E}$. Note that based on the construction of $\mathcal{G}=\{\mathcal{V},\mathcal{E}\}$, each node $i\in\{i_1,\dots,i_{\tau}\}$ represents a subset from $\mathcal{Z}$ in the X3C instance, and each node $j\in\{j_1,\dots,j_{3m}\}$ represents an element from $\mathcal{X}$ in the X3C instance, where the edges between $\{i_1,\dots,i_{\tau}\}$ and $\{j_1,\dots,j_{3m}\}$ indicate how the elements in $\mathcal{X}$ are included in the subsets in $\mathcal{Z}$. A plot of $\mathcal{G}=\{\mathcal{V},\mathcal{E}\}$ is given in Fig.~\ref{fig:PIMS plot}. Accordingly, the weight matrix $A\in\mathbb{R}^{(3m+\tau+1)\times(3m+\tau+1)}$ is set to satisfy that $a_{i_lj_q}=1$ if $q\in\mathcal{X}$ is contained in $z_l\in\mathcal{Z}$, $a_{i_0j_q}=1$ for all $q\in\mathcal{X}$, and $a_{i_0i_0}=1$. We set the sampling parameter to be $h=1/(3m+1)$. The set $\mathcal{S}_I\subseteq\mathcal{V}$ is set to be $\mathcal{S}_I=\mathcal{V}$, i.e., $x_i[0]>0$ for all $i\in\mathcal{V}$. We set time steps $t_1=2$ and $t_2=3$. Finally, we set $b_{2,i}=b_{3,i}=0$ for all $i\in\mathcal{V}$, $c_{2,i_l}=1$ and $c_{3,i_l}=0$ for all $l\in\{1,\dots,\tau\}$, $c_{2,j_q}=c_{3,j_q}=m+1$ for all $q\in\mathcal{X}$, and $c_{2,i_0}=c_{3,i_0}=0$.  Since $x_i[0]>0$ for all $i\in\mathcal{V}$, we see from Lemma~$\ref{lemma:propagate of initial condition}$ that $x_i[k]>0$ and $r_i[k]>0$ for all $i\in\mathcal{V}$ and for all $k\in\{2,3\}$. Therefore, Lemma~$\ref{lemma:propagate of initial condition}$ is no longer useful in determining the coefficients in the equations from Eq.~\eqref{eqn:rewrite SIR matrix}.

We claim that an optimal solution, denoted as $\mathcal{I}^{\star}$, to the constructed PIMS instance satisfies $c(\mathcal{I}^{\star})\le m$ if and only if the solution to the X3C instance is ``yes''.

First, suppose the solution to the X3C instance is ``yes''. Denote an exact cover as $\mathcal{Z}^{\prime}=\{z_{q_1},\dots,z_{q_m}\}\subseteq\mathcal{Z}$, where $\{q_1,\dots,q_m\}\subseteq\{1,\dots,\tau\}$. Let us consider a measurement selection strategy $\mathcal{I}_0\subseteq\mathcal{I}_{t_1:t_2}$ given by
\begin{equation*}
\mathcal{I}_0=\big(\bigcup_{l\in\{1,\dots,m\}}\{x_{i_{q_l}}[2],x_{i_{q_l}}[3],r_{i_{q_l}}[2]\}\big)\cup\{x_{i_0}[2],x_{i_0}[3],r_{i_0}[2],r_{i_0}[3]\}.
\end{equation*}
We then have from Eq.~\eqref{eqn:def of I bar} $\bar{\mathcal{I}}_0=\{(2,i_0,r),(2,i_0,x)\}\cup\{(2,i_{q_l},x):l\in\{1,\dots,m\}\}$. Noting that $s_i[k]>0$ for all $i\in\mathcal{V}$ and for all $k\in\mathbb{Z}_{\ge0}$ from Lemma~$\ref{lemma:propagate of initial condition}(a)$, we consider the following $(m+1)$ equations from Eq.~\eqref{eqn:rewrite SIR matrix} whose indices are contained in $\bar{\mathcal{I}}_0$:
\begin{align}
&\frac{1}{s_{i_0}[2]}(x_{i_0}[3]-x_{i_0}[2])=h\begin{bmatrix}x_{i_0}[2]+\sum_{w\in\mathcal{N}_{i_0}}x_w[2] & -\frac{x_{i_0}[2]}{s_{i_0}[1]} \end{bmatrix}\begin{bmatrix}\beta\\ \delta\end{bmatrix},\label{eqn:base eq}\\
&\frac{1}{s_{i_{q_l}}[2]}(x_{i_{q_l}}[3]-x_{i_{q_l}}[2])=h\begin{bmatrix}\sum_{w\in\mathcal{N}_{i_{q_l}}}x_w[2] & -\frac{x_{i_{q_l}}[2]}{s_{i_{q_l}}[2]} \end{bmatrix}\begin{bmatrix}\beta\\ \delta\end{bmatrix},\ \forall l\in\{1,\dots,m\},\label{eqn:eqs for exact cover}
\end{align}
where we note $\mathcal{N}_{i_0}=\{j_1,\dots,j_{3m}\}$ from the way we constructed $\mathcal{G}=\{\mathcal{V},\mathcal{E}\}$.  Since $\mathcal{Z}^{\prime}=\{z_{q_1},\dots,z_{q_m}\}$ is an exact cover for $\mathcal{X}$, we see from the construction of $\mathcal{G}=\{\mathcal{V},\mathcal{E}\}$ that $\bigcup_{l\in\{1,\dots,m\}}\mathcal{N}_{i_{q_l}}$ is a union of mutually disjoint (3-element) sets such that $\bigcup_{l\in\{1,\dots,m\}}\mathcal{N}_{i_{q_l}}=\{j_1,\dots,j_{3m}\}$. Thus, subtracting the equations in \eqref{eqn:eqs for exact cover} from Eq.~\eqref{eqn:base eq}, we obtain
\begin{multline}
\label{eqn:base eq 2nd}
\frac{1}{s_{i_0}[2]}(x_{i_0}[3]-x_{i_0}[2])-\sum_{l\in\{1,\dots,m\}}\frac{1}{s_{i_{q_l}}[2]}(x_{i_{q_l}}[3]-x_{i_{q_l}}[2])\\=
h\begin{bmatrix}x_{i_0}[2] & -\frac{x_{i_0}[2]}{s_{i_0}[2]}+\sum_{l\in\{1,\dots,m\}}\frac{x_{i_{q_l}}[2]}{s_{i_{q_l}}[2]} \end{bmatrix}\begin{bmatrix}\beta\\ \delta\end{bmatrix},
\end{multline}
where we note $x_{i_0}[2]>0$ as argued above. Following Definition~$\ref{def:Phi I}$, we stack coefficient matrices $\Phi_{2,i_0}^r\in\mathbb{R}^{1\times2}$, $\Phi_{2,i_0}^x\in\mathbb{R}^{1\times2}$ and $\Phi_{2,i_{q_l}}^x\in\mathbb{R}^{1\times2}$ for all $l\in\{1,\dots,m\}$ into a matrix $\Phi(\mathcal{I}_0)\in\mathbb{R}^{(m+2)\times2}$, where $\Phi_{k,i}^r$ and $\Phi_{k,i}^x$ are defined in \eqref{eqn:def of rows}. Now, considering the matrix:
\begin{equation}
\Phi_0=\begin{bmatrix}
x_{i_0}[2] & -\frac{x_{i_0}[2]}{s_{i_0}[2]}+\sum_{l\in\{1,\dots,m\}}\frac{x_{i_{q_l}}[2]}{s_{i_{q_l}}[2]}\\
0 & x_{i_0}[2]
\end{bmatrix},
\end{equation}
we see from the above arguments that $(\Phi_0)_1$ and $(\Phi_0)_2$ can be be obtained via algebraic operations among the rows in $\Phi(\mathcal{I}_0)$, and the elements in $(\Phi_0)_1$ and $(\Phi_0)_2$ can be determined using the measurements from $\mathcal{I}_0$. Therefore, we have $\Phi_0\in\tilde{\Phi}(\mathcal{I}_0)$, where $\tilde{\Phi}(\mathcal{I}_0)$ is defined in Definition~$\ref{def:Phi I}$. Noting that $x_{i_0}[2]>0$, we have $\rank(\Phi_0)=2$, which implies $r_{\mathop{\max}}(\mathcal{I}_0)=2$, where $r_{\mathop{\max}}(\mathcal{I}_0)$ is given by Eq.~\eqref{eqn:max rank of Phi}. Thus, $\mathcal{I}_0\subseteq\mathcal{I}_{t_1:t_2}$ satisfies the constraint in \eqref{eqn:PIMS obj}. Since $c(\mathcal{I}_0)=m$ from the way we set the costs of collecting measurements in the PIMS instance, we have $c(\mathcal{I}^{\star})\le m$.

Conversely, suppose the solution to the X3C instance is ``no'', i.e., for any subcollection $\mathcal{Z}^{\prime}\subseteq\mathcal{Z}$ that contains $m$ subsets, there exists at least one element in $\mathcal{X}$ that is not contained in any subset in $\mathcal{Z}^{\prime}$. We will show that for any measurement selection strategy $\mathcal{I}\subseteq\mathcal{I}_{t_1:t_2}$ that satisfies $r_{\mathop{\max}}(\mathcal{I})=2$, $c(\mathcal{I})>m$ holds. Equivalently, we will show that for any $\mathcal{I}\subseteq\mathcal{I}_{t_1:t_2}$ with $c(\mathcal{I})\le m$, $r_{\mathop{\max}}(\mathcal{I})=2$ does not hold. Consider any $\mathcal{I}\subseteq\mathcal{I}_{t_1:t_2}$ such that $c(\mathcal{I})\le m$. Noting that $c_{2,j_q}=c_{3,j_q}=m+1$ for all $q\in\mathcal{X}$ in the constructed PIMS instance, we have $x_{j_q}[2]\notin\mathcal{I}$ and $x_{j_q}[3]\notin\mathcal{I}$ for all $q\in\mathcal{X}$. Moreover, we see that $\mathcal{I}$ contains at most $m$ measurements from $\{x_{i_1}[2],\dots,x_{i_{\tau}}[2]\}$. To proceed, let us consider any $\mathcal{I}_1\subseteq\mathcal{I}_{t_1:t_2}$ such that
\begin{equation}
\label{eqn:I if X3C no}
\mathcal{I}_1=\{x_{i_0}[2],x_{i_{v_1}}[2],\dots,x_{i_{v_m}}[2]\}\cup\big(\bigcup_{l\in\{0,\dots,\tau\}}\{x_{i_l}[3]\}\big)\cup\big(\bigcup_{i\in\mathcal{V}}\{r_i[2],r_i[3]\}\big),
\end{equation}
where $\{v_1,\dots,v_m\}\subseteq\{1,\dots,\tau\}$. In other words, $\mathcal{I}_1\subseteq\mathcal{I}_{t_1:t_2}$ contains $m$ measurements from $\{x_{i_1}[2],\dots,x_{i_{\tau}}[2]\}$ and all the other measurements from $\mathcal{I}_{t_1:t_2}$ that have zero costs. It follows that $c(\mathcal{I}_1)=m$. Similarly to \eqref{eqn:base eq} and \eqref{eqn:eqs for exact cover}, we have the following $(m+1)$ equations from Eq.~\eqref{eqn:rewrite SIR matrix} whose indices are contained in $\bar{\mathcal{I}}_1$ (given by Eq.~\eqref{eqn:def of I bar}):
\begin{align}
&\frac{1}{s_{i_0}[2]}(x_{i_0}[3]-x_{i_0}[2])=h\begin{bmatrix}x_{i_0}[2]+\sum_{w\in\mathcal{N}_{i_0}}x_w[2] & -\frac{x_{i_0}[2]}{s_{i_0}[1]} \end{bmatrix}\begin{bmatrix}\beta\\ \delta\end{bmatrix},\label{eqn:base eq X3C no}\\
&\frac{1}{s_{i_{v_l}}[2]}(x_{i_{v_l}}[3]-x_{i_{v_l}}[2])=h\begin{bmatrix}\sum_{w\in\mathcal{N}_{i_{v_l}}}x_w[2] & -\frac{x_{i_{v_l}}[2]}{s_{i_{v_l}}[2]} \end{bmatrix}\begin{bmatrix}\beta\\ \delta\end{bmatrix}, \forall l\in\{1,\dots,m\}.\label{eqn:eqs for X3C no}
\end{align}
Noting that for any subcollection $\mathcal{Z}^{\prime}\subseteq\mathcal{Z}$ that contains $m$ subsets, there exists at least one element in $\mathcal{X}$ that is not contained in any subset in $\mathcal{Z}^{\prime}$ as we argued above, we see that there exists at least one element in $\mathcal{X}$ that is not contained in any subset in $\{z_{v_1},\dots,z_{v_m}\}$. It then follows from the way we constructed $\mathcal{G}=\{\mathcal{V},\mathcal{E}\}$ that there exists  $w^{\prime}\in\mathcal{N}_{i_0}$ such that $w^{\prime}\notin\mathcal{N}_{i_{v_l}}$ for all $l\in\{1,\dots,m\}$. Thus, by subtracting the equations in \eqref{eqn:eqs for X3C no} (multiplied by any constants resp.) from Eq.~\eqref{eqn:base eq X3C no}, $x_{w^{\prime}}[2]$ will remain on the right-hand side of the equation in \eqref{eqn:base eq X3C no}. Similarly, consider any equation from \eqref{eqn:eqs for X3C no} indexed by $(2,i_{v_l},x)\in\bar{\mathcal{I}}_1$, where $l\in\{1,\dots,m\}$. First, suppose we subtract Eq.~\eqref{eqn:base eq X3C no} multiplied by some positive constant and any equations in \eqref{eqn:eqs for X3C no} other than equation $(2,i_{v_l},x)$ (multiplied by any constants resp.) from equation $(2,i_{v_l},x)$. Since there exists  $w^{\prime}\in\mathcal{N}_{i_0}$ such that $w^{\prime}\notin\mathcal{N}_{i_{v_l}}$ for all $l\in\{1,\dots,m\}$ as argued above, we see that $x_{w^{\prime}}[2]$ will appear on the right-hand side of equation $(2,i_{v_l},x)$. Next, suppose we subtract any equations in \eqref{eqn:eqs for X3C no} other than equation $(2,i_{v_l},x)$ (multiplied by any constants resp.) from equation $(2,i_{v_l},x)$. One can check that either of the following two cases holds for the resulting equation $(2,i_{v_l},x)$: (a) the coefficients on the right-hand side of equation $(2,i_{v_l},x)$ contain $x_{j_q}[2]\notin\mathcal{I}_1$, where $q\in\mathcal{X}$; or (b) the coefficient matrix on the right-hand side of equation $(2,i_{v_l},x)$ is of the form $\begin{bmatrix}0 & \star\end{bmatrix}$.  Again, we stack $\Phi_{k,i}^r\in\mathbb{R}^{1\times2}$ for all $(k,i,r)\in\bar{\mathcal{I}}_1$ and $\Phi_{k,i}^x\in\mathbb{R}^{1\times 2}$ for all $(k,i,x)\in\bar{\mathcal{I}}_1$ into a matrix $\Phi(\mathcal{I}_1)$, where we note that $\Phi_{k,i}^r$ is of the form $\begin{bmatrix}0 & \star\end{bmatrix}$ for all $(k,i,r)\in\bar{\mathcal{I}}_1$. One can then see from the above arguments that for all $\Phi\in\mathbb{R}^{2\times2}$ (if they exist) such that $(\Phi)_1$ and $(\Phi)_2$ can be obtained from algebraic operations among the rows in $\Phi(\mathcal{I}_1)$, and the elements in $(\Phi)_1$ and $(\Phi)_2$ can be determined using the measurements from $\mathcal{I}_1$, $\rank(\Phi)\le1$ holds. It follows that $r_{\mathop{\max}}(\mathcal{I}_1)<2$, i.e., constraint $r_{\mathop{\max}}(\mathcal{I}_1)=2$ in \eqref{eqn:PIMS obj} does not hold. Using similar arguments to those above, one can further show that $r_{\mathop{\max}}(\mathcal{I})<2$ holds for all $c(\mathcal{I})\le m$, completing the proof of the converse direction of the above claim.

Hence, it follows directly from the above arguments that an algorithm for the PIMS problem can also be used to solve the X3C problem. Since X3C is NP-complete, we conclude that the PIMS problem is NP-hard.$\hfill\square$

\subsection{Proof of Theorem~$\ref{thm:PEMS NP-hard}$}\label{sec:proof of PEMS NP-hard}
We prove the NP-hardness of the PEMS problem via a polynomial reduction from the knapsack problem which is known to be NP-hard \cite{garey1979computers}. An instance of the knapsack problem is given by a set $D=\{d_1,\dots,d_{\tau}\}$, a size $s(d)\in\mathbb{Z}_{>0}$ and a value $v(d)\in\mathbb{Z}_{>0}$ for each $d\in D$, and $K\in\mathbb{Z}_{>0}$. The knapsack problem is to find $D^{\prime}\subseteq D$ such that $\sum_{d\in D^{\prime}}v(d)$ is maximized while satisfying $\sum_{d\in D^{\prime}}s(d)\le K$.

Given any knapsack instance, we construct an instance of the PEMS problem as follows. Let $\mathcal{G}=\{\mathcal{V},\mathcal{E}\}$ be a graph that contains a set of $n$ isolated nodes with $n=\tau$ and $\mathcal{V}=[n]$. Set the weight matrix to be $A=\mathbf{0}_{n\times n}$, and set the sampling parameter as $h=1$. The time steps $t_1$ and $t_2$ are set to be $t_1=t_2=1$, i.e., only the measurements of $x_i[1]$ and $r_i[1]$ for all $i\in\mathcal{V}$ will be considered. The initial condition is set to satisfy $s_i[0]=0.5$, $x_i[0]=0.5$ and $r_i[0]=0$ for all $i\in\mathcal{V}$. The budget constraint is set as $B=K$. Let $\mathcal{C}_{1,i}=\{0,B+1\}$ and $\mathcal{B}_{1,i}=\{0,s(d_i)\}$ for all $i\in\mathcal{V}$. The pmfs of measurements $\hat{x}_i[1]$ and $\hat{r}_i[1]$ are given by Eqs.~\eqref{eqn:pmf of xi} and \eqref{eqn:pmf of ri}, respectively, with $N_i^x=N_i^r=v(d_i)$ and $N_i=\mathop{\max}_{i\in\mathcal{V}}v(d_i)$ for all $i\in\mathcal{V}$, where Assumption~$\ref{ass:white and independent noise}$ is assumed to hold. Finally, let the prior pdf of $\beta\in(0,1)$ be a Beta distribution with parameters $\alpha_1=3$ and $\alpha_2=3$, and let the prior pdf of $\delta\in(0,1)$ also be a Beta distribution with parameters $\alpha_1=3$ and $\alpha_2=3$, where we take $\beta$ and $\delta$ to be independent. Noting that $\mathcal{C}_{1,i}=\{0,B+1\}$ in the PEMS instance constructed above, i.e., $\hat{x}_i[k]$ incurs a cost of $B+1>B$, we only need to consider measurements $\hat{r}_i[1]$ for all $i\in\mathcal{V}$. Moreover, since $\mathcal{B}_{1,i}=\{0,s(d_i)\}$, a corresponding measurement selection is then given by $\mu\in\{0,1\}^{\mathcal{V}}$. In other words, $\mu(i)=1$ if measurement $\hat{r}_i[1]$ is collected (with cost $s(d_i)$), and $\mu(i)=0$ if measurement $\hat{r}_i[1]$ is not collected. We will see that there is a one to one correspondence between a measurement $\hat{r}_i[1]$ in the PEMS instance and an element $d_i\in D$ in the knapsack instance.

Consider a measurement selection $\mu\in\{0,1\}^{\mathcal{V}}$. Since $r_i[0]=0$ and $x_i[0]=0.5$ for all $i\in\mathcal{V}$, Eq.~\eqref{eqn:SIR R} implies $r_i[1]=0.5h\delta$ for all $i\in\mathcal{V}$, where $h=1$. One then obtain from Eq.~\eqref{eqn:FIM 2nd exp} and Eq.~\eqref{eqn:ln pmf xi 2} the following:
\begin{equation}
F_{\theta}(\mu)=\frac{1}{0.5\delta(1-0.5\delta)}\begin{bmatrix}0 & 0\\ 0 & 0.25\end{bmatrix}\sum_{i\in\text{supp}(\mu)}N_i^r\mu(i).\label{eqn:FIM for hard proof 2}
\end{equation}
Next, noting that $\beta$ and $\delta$ are independent, one can show via Eq.~\eqref{eqn:def of F_p} that $F_p\in\mathbb{R}^{2\times 2}$ is diagonal, where one can further show that $(F_p)_{11}=(F_p)_{22}>0$ using the fact that the pdfs of $\beta$ and $\delta$ are Beta distributions with parameters $\alpha_1=3$ and $\alpha_2=3$. Similarly, one can obtain $\mathbb{E}_{\theta}[1/0.5\delta(1-0.5\delta)]>0$. It now follows from Eq.~\eqref{eqn:FIM for hard proof 2} that
\begin{equation}
\label{eqn:CRLB hard proof}
\mathbb{E}_{\theta}[F_{\theta}(\mu)]+F_p=\begin{bmatrix}z_1 & 0\\ 0 & z_1+z_2\sum_{i\in\text{supp}(\mu)}N_i^r\mu(i)\end{bmatrix},
\end{equation}
where $z_1,z_2\in\mathbb{R}_{>0}$ are some constants (independent of $\mu$). Note that the objective in the PEMS instance is given by $\mathop{\min}_{\mu\in\{0,1\}^{\mathcal{V}}}f(\mu)$, where $f(\cdot)\in\{f_a(\cdot),f_d(\cdot)\}$. First, considering the objective function $f_a(\mu)=\Tr(\bar{C}(\mu))$, where $\bar{C}(\mu)=(\mathbb{E}_{\theta}[F_{\theta}(\mu)]+F_p)^{-1}$, we see from Eq.~\eqref{eqn:CRLB hard proof} that $\Tr(\bar{C}(\mu))$ is minimized (over $\mu\in\{0,1\}^{\mathcal{V}}$) if and only if $\sum_{i\in\text{supp}(\mu)}N_i^r\mu(i)$ is maximized. Similar arguments hold for the objective function $f_d(\mu)=\ln\det(\bar{C}(\mu))$. It then follows directly from the above arguments that a measurement selection $\mu^{\star}\in\{0,1\}^{\mathcal{V}}$ is an optimal solution to the PEMS instance if and only if $D^{\star}\triangleq\{d_i:i\in\text{supp}(\mu^{\star})\}$ is an optimal solution to the knapsack instance. Since the knapsack problem is NP-hard, the PEMS problem is NP-hard.$\hfill\square$

\subsection{Proof of Lemma~$\ref{lemma:lower bound on gamma_1}$}\label{sec:proof of lower bound}
Noting the definition of $\gamma_1$ in Definition~$\ref{def:submodularity ratios}$, we provide a lower bound on $\frac{\sum_{y\in \mathcal{A}\setminus\mathcal{Y}_2^j}(f_{Pa}(\{y\}\cup\mathcal{Y}_2^j)-f_{Pa}(\mathcal{Y}_2^j))}{f_{Pa}(\mathcal{A}\cup\mathcal{Y}_2^j)-f_{Pa}(\mathcal{Y}_2^j)}$ for all $\mathcal{A}\subseteq\bar{\mathcal{M}}$ and for all $\mathcal{Y}_2^j$, where we assume that $\mathcal{A}\setminus\mathcal{Y}_2^j\neq\emptyset$, otherwise \eqref{eqn:sbmr 1} would be satisfied for all $\gamma_1\in\mathbb{R}$. Recalling the definition of $f_{Pa}(\cdot)$ in \eqref{eqn:def of f_Pa}, we lower bound $LHS\triangleq\sum_{y\in \mathcal{A}\setminus\mathcal{Y}_2^j}\big(f_{Pa}(\{y\}\cup\mathcal{Y}_2^j)-f_{Pa}(\mathcal{Y}_2^j)\big)$ in the following manner:
\begin{align}\nonumber
\qquad\qquad LHS&=\sum_{y\in\mathcal{A}\setminus\mathcal{Y}_2^j}\sum_{i=1}^2\frac{\lambda_i(F_p+H(\{y\}\cup\mathcal{Y}_2^j))-\lambda_i(F_p+H(\mathcal{Y}_2^j))}{\lambda_i(F_p+H(\mathcal{Y}_2^j))\lambda_i(F_p+H(\{y\}\cup\mathcal{Y}_2^j))}\\
&\ge\sum_{y\in\mathcal{A}\setminus\mathcal{Y}_2^j}\frac{\sum_{i=1}^2(\lambda_i(F_p+H(\{y\}\cup\mathcal{Y}_2^j))-\lambda_i(F_p+H(\mathcal{Y}_2^j)))}{\lambda_1(F_p+H(\mathcal{Y}_2^j))\lambda_1(F_p+H(\{z^{\prime}\}\cup\mathcal{Y}_2^j))}\label{eqn:gamma_1 derive 0}\\
&=\frac{\sum_{y\in\mathcal{A}\setminus\mathcal{Y}_2^j}\Tr(H_y)}{\lambda_1(F_p+H(\mathcal{Y}_2^j))\lambda_1(F_p+H(\{z^{\prime}\}\cup\mathcal{Y}_2^j))}.\label{eqn:gamma_1 derive 1}
\end{align}
To obtain \eqref{eqn:gamma_1 derive 0}, we let $z^{\prime}\in\mathop{\arg}{\max}_{y\in\mathcal{A}\setminus\mathcal{Y}_2^j}\lambda_1(F_p+H(\{y\}\cup\mathcal{Y}_2^j))$ and note that $\lambda_1(F_p+H(\{z^{\prime}\}\cup\mathcal{Y}_2^j))\ge\lambda_i(F_p+H(\{y\}\cup\mathcal{Y}_2^j))$ for all $i\in\{1,2\}$ and for all $y\in\mathcal{A}\setminus\mathcal{Y}_2^j$. Next, we upper bound $f_{Pa}(\mathcal{A}\cup\mathcal{Y}_2^j)-f_{Pa}(\mathcal{Y}_2^j)$ in the following manner:
\begin{align}\nonumber
f_{Pa}(\mathcal{A}\cup\mathcal{Y}_2^j)-f_{Pa}(\mathcal{Y}_2^j)&=\sum_{i=1}^2\frac{\lambda_i(F_p+H(\mathcal{A}\cup\mathcal{Y}_2^j))-\lambda_i(F_p+H(\mathcal{Y}_2^j))}{\lambda_i(F_p+H(\mathcal{Y}_2^j))\lambda_i(F_p+H(\mathcal{A}\cup\mathcal{Y}_2^j))}\\
&\le\frac{\sum_{i=1}^2\big(\lambda_i(F_p+H(\mathcal{A}\cup\mathcal{Y}_2^j))-\lambda_i(F_p+H(\mathcal{Y}_2^j)\big)}{\lambda_2(F_p+H(\mathcal{Y}_2^j))\lambda_2(F_p+H(\{z^{\prime}\}\cup\mathcal{Y}_2^j))}\label{eqn:gamma_1 derive 2}\\
&=\frac{\sum_{y\in\mathcal{A}\setminus\mathcal{Y}_2^j}\Tr(H_y)}{\lambda_2(F_p+H(\mathcal{Y}_2^j))\lambda_2(F_p+H(\{z^{\prime}\}\cup\mathcal{Y}_2^j))}.\label{eqn:gamma_1 derive 3}
\end{align}
To obtain \eqref{eqn:gamma_1 derive 2}, we note that $\lambda_i(F_p+H(\mathcal{A}\cup\mathcal{Y}_2^j))\ge\lambda_2(F_p+H(\mathcal{A}\cup\mathcal{Y}_2^j))\ge\lambda_2(F_p+\{z^{\prime}\}\cup\mathcal{Y}_2^j)$ for all $i\in\{1,2\}$, where the second inequality follows from Lemma~$\ref{lemma:eigenvalue ineqs}$ with the fact $H(\mathcal{A}\cup\mathcal{Y}_2^j)-H(\{z^{\prime}\}\cup\mathcal{Y}_2^j)\succeq\mathbf{0}$, and $z^{\prime}$ is defined above. Combining \eqref{eqn:gamma_1 derive 1} and \eqref{eqn:gamma_1 derive 3}, and noting $z_j\in\mathop{\arg}{\min}_{y\in\bar{\mathcal{M}}\setminus\mathcal{Y}_2^j}\frac{\lambda_2(F_p+H(\{y\}\cup\mathcal{Y}_2^j))}{\lambda_1(F_p+H(\{y\}\cup\mathcal{Y}_2^j))}$, we have
\begin{equation}
\frac{\sum_{y\in \mathcal{A}\setminus\mathcal{Y}_2^j}(f_{Pa}(\{y\}\cup\mathcal{Y}_2^j)-f_{Pa}(\mathcal{Y}_2^j))}{f_{Pa}(\mathcal{A}\cup\mathcal{Y}_2^j)-f_{Pa}(\mathcal{Y}_2^j)}\ge\frac{\lambda_2(F_p+H(\mathcal{Y}_2^j))\lambda_2(F_p+H(\{z_j\}\cup\mathcal{Y}_2^j))}{\lambda_1(F_p+H(\mathcal{Y}_2^j))\lambda_1(F_p+H(\{z_j\}\cup\mathcal{Y}_2^j))},\label{eqn:gamma_1 derive 4}
\end{equation}
which implies \eqref{eqn:lower bound on gamma_1}.$\hfill\square$

\bibliographystyle{siamplain}
\bibliography{references}
\end{document}